\begin{document}

\title[ Zeros of Angular Lattice Sums]{The Riemann Hypothesis for Angular
Lattice Sums}
\author[R.C. McPhedran and others]{Ross C. McPhedran$^1$, \\
Lindsay C. Botten$^2$, Dominic J. Williamson$^{1}$ and Nicolae-Alexandru P. Nicorovici$^{1}$}

\affiliation{$^1$CUDOS, School of Physics, University of Sydney, NSW 2006, Australia,\\
 $^2$School of Mathematical Sciences, University of Technology,
Sydney, N.S.W. 2007 Australia}

\label{firstpage}

\maketitle

\begin{abstract}{Lattice sums, Dirichlet $L$ functions, Riemann hypothesis}
We present further results on a class of sums which involve complex
powers of the distance to points in a two-dimensional square lattice
and trigonometric functions of their angle, supplementing those in a
previous paper (McPhedran {\em et al}, 2008). We give a general
expression which permits numerical evaluation of members of the
class of sums to arbitrary order. We use this to illustrate
numerically the properties of trajectories along which the real and
imaginary parts of the sums are zero, and we show results for the
first two of a particular set of angular sums (denoted ${\cal C}(1,4 m;s)$)  which indicate their
density of zeros on the critical line of the complex exponent is the
same as that for the product (denoted ${\cal C}(0,1;s)$) of the Riemann zeta function and the
Catalan beta function. We then introduce a function which is the quotient of the angular lattice sums
${\cal C}(1,4 m;s)$ with ${\cal C}(0,1;s)$, and use its properties to prove that  ${\cal C}(1,4 m;s)$ obeys the
Riemann hypothesis for any $m$ if and only if  ${\cal C}(0,1;s)$ obeys the Riemann hypothesis. We furthermore prove
that if the Riemann hypothesis holds, then ${\cal C}(1,4 m;s)$ and ${\cal C}(0,1;s)$ have the same distribution of zeros
on the critical line (in a sense made precise in the proof).
\end{abstract}

\section{Introduction}
This paper adds to results  in McPhedran {\em et al} (2008)
(hereafter referred to as I) on the properties of a class of sums
over two-dimensional lattices involving trigonometric functions of
the angle to points in the lattice, and a complex power $2 s$ of
their distance from the lattice origin. There, it was shown that certain of these angular
sums had zeros on the critical line $\Real (s)=1/2$, but could not
have zeros in a neighbourhood of it.

We derive a general expression which is exponentially convergent and
permits the rapid and accurate evaluation of the angular sums
irrespective of the value of the complex parameter $s$. We demonstrate the
high-order convergence of this formula by using it to illustrate a
limiting formula for a particular set of angular lattice sums. We go
on to consider the properties of trajectories along which the real
and imaginary parts of a class of angular sums are zero, and in
particular we establish accurate approximations for these
trajectories when $\Real (s)$ lies well outside the critical strip
$0<\Real(s)<1$. We give preliminary results on the distribution of
zeros on the critical line $\Real (s)=1/2$ of two angular sums, which
suggest that to leading order they have the same density of zeros as
the product of the Riemann zeta function and the Catalan beta
function.  In the last section of the main body of the paper, we introduce a new quotient function,
and from its properties we prove that the Riemann hypothesis for the product of the Riemann zeta function
and the Catalan beta function is equivalent to that for all members of a
particular infinite set of angular lattice sums. We also show that when this Riemann hypothesis holds, all members of the class of angular lattice sums
have the same distribution function for zeros on the critical line, and that distribution function also applies to the product of the Riemann zeta function and the Catalan beta
function. In Appendices we comment on the functional equation
satisfied by a class of angular sums, and the properties which link
angular sums of order up to ten.

The analytic results presented here are supported by numerical
results obtained using Mathematica 7.0.1. Formal proofs of certain
key properties of the angular sums and the location of their zeros
will be given here, while others will be presented in a companion
paper.

There are two principal motivations for the study presented here.
The first is that the very general expression for the class of
angular lattice sums derived in Section 2, and their connections
with other angular lattice sums shown in Appendix A, enables them to
be used in physical applications requiring regularization of sums
over a two dimensional square lattice. The class  of summands which
can be addressed is wide, as it consists of any function which has a
Taylor series in integer powers of trigonometric functions of the
lattice point angle in the plane, and any complex power of its
distance from the origin.  For example, any summand of the type
often encountered in solid state physics combining a Bloch-type
phase factor and a function of distance having a Taylor series could
be so represented. The second is that we show the angular lattice
sums to be connected with the product of the Riemann zeta function
and the Catalan beta function in a very natural way- for example, it
seems that the densities of their zeros on the critical line are the
same to leading orders. In this way, these angular sums may provide
a new way to forge a link between the Riemann hypothesis and its
generalization to other Dirichlet $L$ functions, as well as
providing a wide class of functions, in which identified members
obey {\em a priori} the hypothesis, and others do not. There are
interesting parallels between this work and that of S. Gonek (2007),
although we deal with double sums and Gonek with single sums.

It should be noted that the proofs presented are not of great technical difficulty, and yet some of the
results are striking. It thus seems that the study of angular lattice sums may offer a favourable context
in which properties related to the Riemann hypothesis may be studied. It is the authors' hope that the results given
will thus stimulate more mathematicians to recommence the investigation of doubly periodic (and even multiply periodic)
lattice sums-  an area which seems to have been unduly neglected for a considerable period.
\section{An absolutely-convergent expression for angular lattice
sums}
We recall the definition from (I) of two sets of angular
lattice sums for the square array:
\begin{equation}
{\cal C}(n,m;s)=\sum_{p_1,p_2}{'}\frac{\cos^n(m
\theta_{p_1,p_2})}{(p_1^2+p_2^2)^s},~~ {\cal
S}(n,m;s)=\sum_{p_1,p_2}{'}\frac{\sin^n(m
\theta_{p_1,p_2})}{(p_1^2+p_2^2)^s}, \label{mz-1}
\end{equation}
where $\theta_{p_1,p_2}=\arg (p_1+\ri p_2)$, and the prime denotes the
exclusion of the point at the origin. The sum independent of the
angle $\theta_{p_1,p_2}$ was evaluated by Lorenz (1871) and Hardy(1920)
in terms of the product of Dirichlet $L$ functions:
\begin{equation}
{\cal C}(0,m;s)={\cal S}(0,m;s)\equiv{\cal C}(0,1;s)=4 L_1(s)
L_{-4}(s)=4 \zeta(s) L_{-4}(s) .\label{mz2}
\end{equation}
Here $L_1(s)$ is more commonly referred to as the Riemann zeta
function, and $L_{-4}(s)$ as the Catalan beta function.  A useful
account of the properties of Dirichlet $L$ functions has been given
by Zucker \& Robertson (1976).

It is convenient to use a subset of the angular sums (\ref{mz-1}) as
a basis for numerical evaluations. We note that the sums ${\cal
C}(n,1;s)$ are zero if $n$ is odd. We next derive the following
relationship for the non-zero sums ${\cal C}(2 n,1;s)$:
\begin{eqnarray}
\sum_{(p_1,p_2)} {'}\frac{p_1^{2 n}}{(p_1^2+p_2^2)^{s+n}}  = {\cal
C}(2 n,1;s) =\frac{2 \sqrt{\pi} \Gamma(s+n-1/2)\zeta(2
s-1)}{\Gamma(s+n)} \nonumber \\
 +  \frac{8\pi^s}{\Gamma(s+n)} \sum_{p_1=1}^\infty
\sum_{p_2=1}^\infty  \left(
\frac{p_2}{p_1}\right)^{s-1/2}
p_1^n p_2^n \pi^n K_{s+n-1/2}(2\pi p_1 p_2),
\label{gr1}
\end{eqnarray}
where $K_\nu(z)$ denotes the modified Bessel function of the second
kind, or Macdonald function, with order $\nu$ and argument $z$. The
general form  (\ref{gr1}) may be derived following Kober (1936) in
the usual way: a Mellin transform is used to give
\begin{equation}
\sum_{(p_1,p_2)} {'}\frac{p_1^{2
n}}{(p_1^2+p_2^2)^{s+n}}=\sum_{(p_1,p_2)}{'}\frac{p_1^{2
n}}{\Gamma(s+n)}\int_0^\infty t^{s+n-1} \re^{-t(p_1^2+p_2^2)} \rd t.
\label{gr2}
\end{equation}
The Poisson summation formula is then used to transform the sum over
$p_2$, giving
\begin{equation}
\sum_{(p_1,p_2)} {'}\frac{p_1^{2
n}}{(p_1^2+p_2^2)^{s+n}}=\sum_{(p_1,p_2)} {'}\frac{p_1^{2
n}}{\Gamma(s+n)}\int_0^\infty t^{s+n-1} \re^{-t p_1^2}
\sqrt{\frac{\pi}{t}} \re^{-\pi^2 p_2^2/t} \rd t. \label{gr3}
\end{equation}
We then separate the axial contribution, which for $n\neq 0$ comes
from $p_2=0$ alone, and use Hobson's integral
\begin{equation}
\int_0^\infty t^{s-1} \re^{-p t -q/t} \rd t=2\left(
\frac{q}{p}\right)^{s/2} K_s(2\sqrt{q p}) \label{gr4}
\end{equation}
on the remaining double sum. This leads directly to (\ref{gr1}).

It should be noted that the double sum in (\ref{gr1}) is
exponentially convergent. Indeed, from relation 9.7.2 in Abramowitz
and Stegun (1972), the large argument approximation for the
Macdonald function of order $\nu$ is
\begin{equation}
K_\nu(z)\sim \sqrt{\frac{\pi}{2 z}}e^{-z} .\label{macfn1}
\end{equation}
This means that the double sum starts to converge rapidly as soon as
the argument $2\pi p_1 p_2$ everywhere exceeds the modulus of the
order $s+n-1/2$. In practice, accurate answers are achieved when
sums are carried out over $p_1$ and $p_2$ from 1 to $P$, where
$P\sim |s+n-1/2|/\pi$ (the precise value of $P$ required being fixed
by studies of the effect of increasing $P$ on the stability of the
result). The representation (\ref{gr1}) and finite combinations of
it thus furnish absolutely convergent representations of
trigonometric sums from the family (\ref{mz-1}) and close relatives,
for any values of $s$ with finite modulus. These representations are
easily represented numerically in any computational system
incorporating routines for the Riemann zeta function of complex
argument, and Macdonald functions of complex order and real
argument. (In practice, the Macdonald function evaluations are most
time-expensive in the region of $(p_1,p_2)$ values where $ p_1 p_2$
has comparable magnitude to $|s+n-1/2|/(2 \pi)$. Thus, it is
efficient to create a table of these values for $p_1$ varying with
$p_2=1$, running up to an argument where the Macdonald function is
less than an appropriate tolerance times its value for $p_1=1$.)

As an example of the numerical efficacy of (\ref{mz-1}), we consider
its use in illustrating a limiting property of the sums ${\cal C}(2
m,1;s)$.We have
\begin{equation}
{\cal C}(2 m,1;s)=\sum_{(p_1,p_2)} {'}\frac{\cos^{2
m}\theta_{p_1,p_2}}{(p_1^2+p_2^2)^s}, \label{sn26}
\end{equation}
and as $m\rightarrow \infty$ we require $|\cos \theta_{p_1,p_2}|=1$
for a contribution, i.e.
\begin{equation}
\lim_{m\rightarrow \infty}{\cal C}(2 m,1;s)=2\zeta (2 s).
\label{sn27}
\end{equation}
The relationship (\ref{sn27}) is illustrated numerically in Fig.
\ref{supp3}. For the right-hand side of (\ref{sn27}) to be accurate,
the required order $m$ increases with $t$, although convergence is
also slow for $t$ near zero.
\begin{figure}
\begin{center}
\includegraphics[width=6cm]{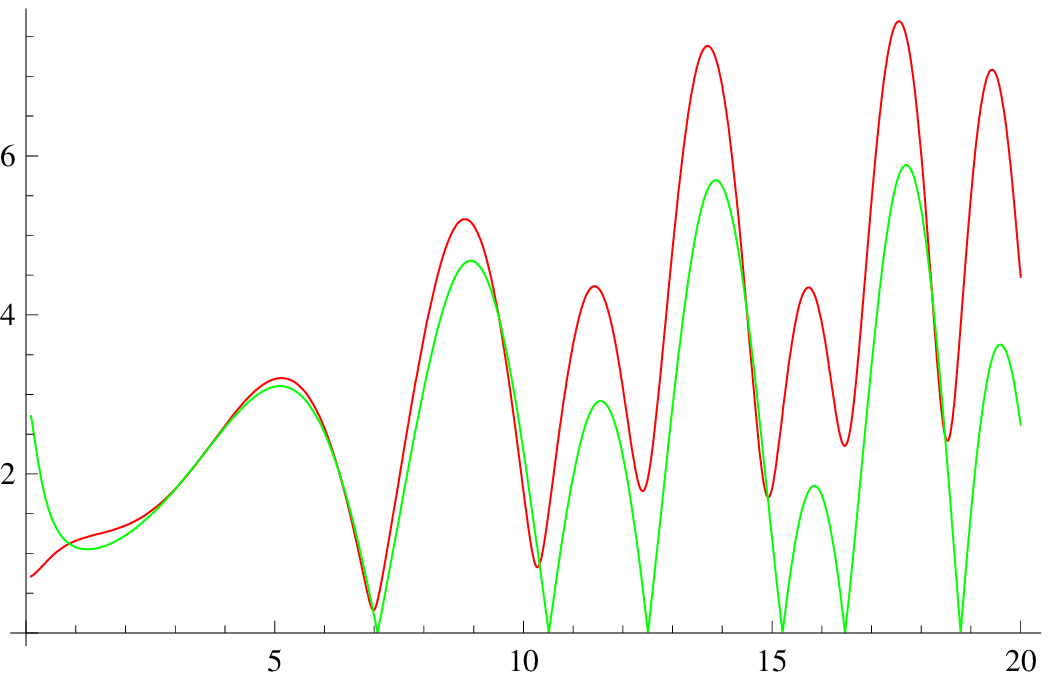}~~
\includegraphics[width=6cm]{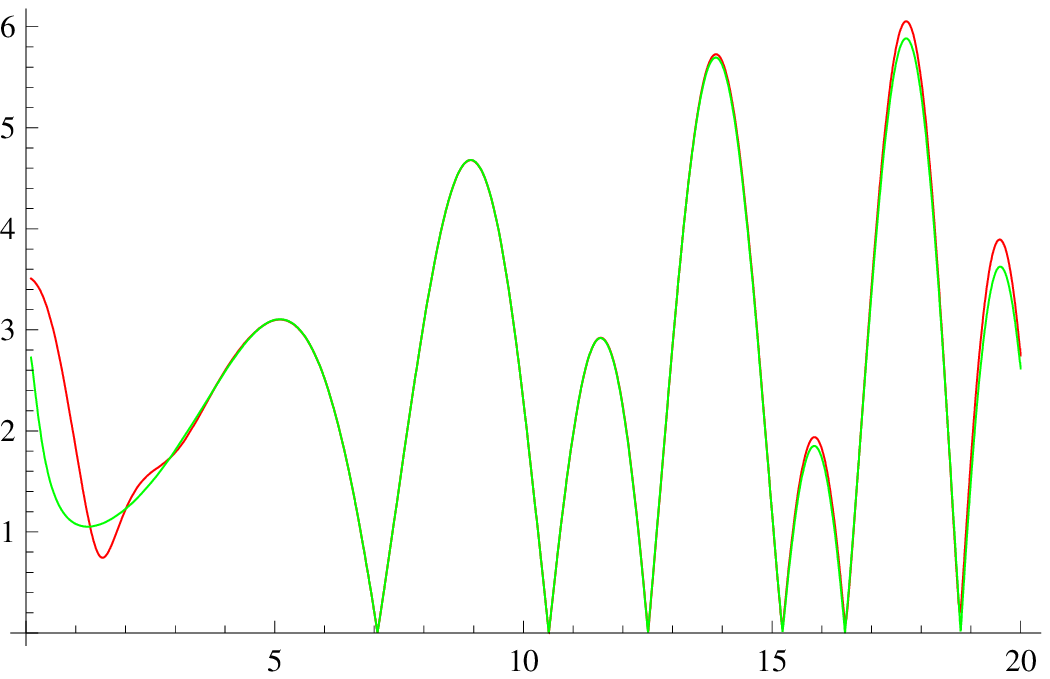}
\end{center}
\caption{The modulus of  ${\cal C}(2 m,1;s)$ (red) and $2 \zeta (2
s)$ (green)  as a function of $s=1/2+\ri t$, for $t \in [0, 20]$,
with $m=10$ (left) and $m=100$ (right).} \label{supp3}
\end{figure}

Another angular sum of great importance in this paper can easily be
expanded in terms of the ${\cal C}(2 n,1;s)$:
\begin{equation}
{\cal C}(1,4 m;s)=\sum_{p_1,p_2} {'}\frac{\cos (4 m
\theta_{p_1,p_2})}{(p_1^2+p_2^2)^s}={\cal C}(2,2 m;s)-{\cal S}(2,2
m;s), \label{mz3}
\end{equation}
or, in terms of the Chebyshev polynomial of the first kind
(Abramowitz \& Stegun (1972), Chapter 22),
\begin{equation}
{\cal C}(1,4 m;s)=\sum_{p_1,p_2} {'}\frac{T_{4 m}(\cos
\theta_{p_1,p_2})}{(p_1^2+p_2^2)^s}. \label{mz3a}
\end{equation}
As the coefficients of this Chebyshev polynomial are explicitly
known, the representation (\ref{mz3a}) enables any sum ${\cal C}(1,4
m;s)$ to be expressed as a linear combination of sums ${\cal C}(2
n,1;s)$ with $0\leq n \leq 2 m$.

The angular sums just introduced may also be expressed as sums along rays in the square lattice.
This means that the particular independent values of $\theta_{p_1,p_2}$ occurring in the lattice are identified, and sums along those directions are performed, giving a multiplying factor
of $\zeta (2 s)$.  The sums along rays may be reduced to those occurring in the first octant, which
we define as follows:
\begin{equation}
{\cal O}_1=\{p_1,p_2|p_1=1,\ldots, \infty, 1\leq p_2 <p_1, p_2\perp p_1\} ,
\label{ray1}
\end{equation}
where the notation $p_2\perp p_1$ means the integers are relatively prime. The axis $p_2=0$
and the line $p_1=p_2$ are dealt with separately, giving
\begin{equation}
{\cal C}(0,1;s)=4 \zeta (2 s) \left[  1 +\frac{1}{2^s} +2\sum_{{\cal O}_1} \frac{\cos^{2 s} \theta_{p_1,p_2}}
{p_1^{2 s}}\right].
\label{ray2}
\end{equation}
The sum in (\ref{ray2}) converges down to $\Re(s)=1$, where it diverges, as is to be expected from
(\ref{mz2}). Other expressions of this type are:
\begin{equation}
{\cal C}(1,4 m;s)=4 \zeta (2 s) \left[  1 +\frac{\cos (m\pi)}{2^s} +2\sum_{{\cal O}_1} \frac{
\cos (4 m\theta_{p_1,p_2}) \cos^{2 s} \theta_{p_1,p_2}}
{p_1^{2 s}}\right],
\label{ray3}
\end{equation}
\begin{equation}
{\cal C}(2,2 m;s)=4 \zeta (2 s) \left[  1 +\frac{\cos^2 (m\pi/2)}{2^s} +2\sum_{{\cal O}_1} \frac{
\cos^2 (2 m\theta_{p_1,p_2}) \cos^{2 s} \theta_{p_1,p_2}}
{p_1^{2 s}}\right],
\label{ray4}
\end{equation}
and 
\begin{equation}
{\cal S}(2,2 m;s)=4 \zeta (2 s) \left[  \frac{\sin^2 (m\pi/2)}{2^s} +2\sum_{{\cal O}_1} \frac{
\sin^2 (2 m\theta_{p_1,p_2}) \cos^{2 s} \theta_{p_1,p_2}}
{p_1^{2 s}}\right].
\label{ray4a}
\end{equation}
The last of these formula we give may be compared with (\ref{sn27}):
\begin{equation}
{\cal C}(2m,1;s)={\cal S}(2 m,1;s)=2 \zeta (2 s) \left[  1+\frac{1}{2^{s+m-1}} +2\sum_{{\cal O}_1} \frac{
(\cos^{2 m} (\theta_{p_1,p_2})+\sin^{2 m} (\theta_{p_1,p_2})) \cos^{2 s} \theta_{p_1,p_2}}
{p_1^{2 s}}\right].
\label{ray4b}
\end{equation}

There is a sum rule connecting the $C(1, 4 m;s)$:
\begin{equation}
\sum_{m=1}^\infty {\cal C}(1,4 m;s)=\pi \zeta (2 s)-\frac{1}{2} {\cal C}(0,1;s).
\label{sumrule}
\end{equation}
This follows from the basic form of the Poisson summation formula (relating a sum of cosine functions to a sum of delta functions).

The connections between various angular lattice sums  grouped in
systems with order up to 10 are explored in Appendix A.

\section{Some properties of trigonometric lattice sums}
The functional equation is known (see McPhedran {\em et al.} (2004),
eqs. 32 and 59) for ${\cal C}(1,4 m;s)$:
\begin{equation}
G_{4 m}(s)={\cal C}(1,4 m;s)\frac{\Gamma(s+2 m)}{\pi^s}=G_{4
m}(1-s). \label{mz4}
\end{equation}
This equation also holds for $m=0$, where it gives the functional
equation for the product $\zeta(s) L_{-4}(s)$. It is in fact the $m$
dependence of the functional equation (\ref{mz4}) which enables the
derivation of many of the results in (I) and the present paper. As
the derivation in McPhedran {\em et al} (2004) uses different
notation to that in subsequent papers and here, we give a brief
discussion of the argument leading to (\ref{mz4}) in Appendix B.

This $m$ dependence in (\ref{mz4}) is represented in two related
functions:
\begin{equation}
{\cal F}_{2 m}(s)=\frac{\Gamma(1-s+2 m) \Gamma(s)}{\Gamma(1-s)
\Gamma(s+ 2m)}=\exp(2 \ri \phi_{2 m}(s)), \label{mz5}
\end{equation}
where $\phi_{2 m}(s)$ is in general complex. Note that ${\cal F}_{2
m}(s)$ is the ratio of two polynomials of degree $2 m$, with one
obtained from the other by replacing $s$ by $1-s$:
\begin{equation}
{\cal F}_{2 m}(s)=\frac{(2 m-s) (2 m-1-s)\ldots (1-s)}{[2 m-(1-s)] [2m-1-(1-s)]\ldots [1-(1-s)]}.
\label{mz5a}
\end{equation}

We then introduce two sets of rescaled lattice sums:
\begin{equation}
\widetilde{C}(2,2 m;s)=\frac{\Gamma(s)}{2\pi^s \sqrt{{\cal F}_{2
m}(s)}}[C(0,1;s)+C(1,4 m;s)], \label{mz6}
\end{equation}
and
\begin{equation}
\widetilde{S}(2,2 m;s)=\frac{\Gamma(s)}{2\pi^s \sqrt{{\cal F}_{2
m}(s)}}[C(0,1;s)-C(1,4 m;s)]. \label{mz6a}
\end{equation}
Note that this definition means $\widetilde{C}$ and $\widetilde{S}$ have
branch cuts where ${\cal F}_{2 m}(s)$ is real and negative. For
example, for ${\cal F}_{2}(s)$, the branch cut includes a circle in
the $(\sigma,t)$ plane ($s = \sigma + \ri t$), with centre $(1/2,0)$ and radius
$\sqrt{3}/2$.

 An analytic function which combines lattice sums is defined by
 \begin{equation}
\Delta_3(2, 2m;s)=\widetilde{\cal C}(2,2 m;s)^2 -\widetilde{\cal S}(2,2
m;s)^2=\frac{\Gamma(s)^2}{\pi^{2 s} {\cal F}_{2 m}(s)}{\cal C}(0,1;s) {\cal C}(1,4 m;s). \label{mz13}
\end{equation}
It obeys the functional equation
\begin{equation}
\Delta_3(2, 2m;s)={\cal F}_{2 m}(1-s)\Delta_3(2, 2m;1-s) ,
\label{mz14}
\end{equation}
and on the critical line $s=1/2+\ri t$,
\begin{equation}
\Delta_3(2,2m;s)=[1-\ri \tan(\phi_{2 m}(s))][|\widetilde{\cal C}(2,2
m;s)|^2-|\widetilde{\cal S}(2,2 m;s)|^2], \label{mz15}
\end{equation}
and
\begin{equation}
\Imag [\widetilde{\cal C}(2,2 m;1/2+\ri t)]\equiv \Imag [\widetilde{\cal S}(2,2 m;1/2+\ri t)].
\label{mz16}
\end{equation}
We note from (\ref{mz15}) that
\begin{equation}
\Imag \Delta_3(2,2m;\frac{1}{2}+\ri t)=-\tan (\phi_{2 m,c} t) \Real
\Delta_3(2,2m;\frac{1}{2}+\ri t), \label{mz17}
\end{equation}
using the notation $\phi_{ 2m}(1/2+\ri t)=\phi_{2 m,c} (t)$, a real-valued function. We can take the derivative with respect to $t$ of
(\ref{mz17}), to obtain:
\begin{eqnarray}
\frac{\partial}{\partial t}\Imag \Delta_3(2,2 m;\frac{1}{2}+\ri t)+ \tan \phi_{2 m,c}(t)\frac{\partial}{\partial t}\Real \Delta_3(2,2 m;\frac{1}{2}+\ri t) \nonumber \\
=-\frac{\phi_{2 m,c}'(t)}{\cos^2(\phi_{2 m,c}(t))} \Real \Delta_3(2,2 m;\frac{1}{2}+\ri t) .
\label{mz18}
\end{eqnarray}
We use the Cauchy-Riemann equations in (\ref{mz18}), to obtain
\begin{eqnarray}
\frac{\partial}{\partial \sigma}\Real \Delta_3(2,2 m;\frac{1}{2}+\ri t)+ \tan \phi_{2 m,c}(t)\frac{\partial}{\partial t}\Real \Delta_3(2,2 m;\frac{1}{2}+\ri t) \nonumber \\
=-\frac{\phi_{2 m,c}'(t)}{\cos^2(\phi_{2 m,c}(t))} \Real \Delta_3(2,2 m;\frac{1}{2}+\ri t) ,
\label{mz19}
\end{eqnarray}
and
\begin{eqnarray}
\frac{\partial}{\partial t}\Imag \Delta_3(2,2 m;\frac{1}{2}+\ri t)- \tan \phi_{2 m,c}(t)\frac{\partial}{\partial \sigma}\Imag \Delta_3(2,2 m;\frac{1}{2}+\ri t) \nonumber \\
=-\frac{\phi_{2 m,c}'(t)}{\cos^2(\phi_{2 m,c}(t))} \Real \Delta_3(2,2 m;\frac{1}{2}+\ri t) .
\label{mz20}
\end{eqnarray}
The equation (\ref{mz19}) indicates that, at points where contours of $\Real \Delta_3(2,2 m; \sigma+\ri t)=0$ intersect the critical line,
their tangent vector is given by $(1,\tan \phi_{2 m,c}(t))$, provided $\partial \Delta_3(2,2 m;\frac{1}{2}+\ri t)/\partial t\neq 0$. (This requirement
is that the left-hand side of (\ref{mz19}) can be interpreted as the scalar product of the tangent vector and the gradient of
$\Real \Delta_3$, with the latter having a well-defined direction.) The equation (\ref{mz20}) indicates that the tangent vectors
for the contours of $\Imag \Delta_3=0$ at the critical line are given by $(-\tan \phi_{2 m,c}(t),1)$,i.e. they are at right angles to those for the
real part.

We note that, for $|s|>>1$, or, more strictly, for $|s|>>4 m^2$,
\begin{equation}
{\cal F}_{2 m}(s)\simeq 1-\frac{4 m^2}{s-1/2}+\frac{8
m^4}{(s-1/2)^2},~~ \phi_{2 m}(s)\simeq \frac{2 m^2 i}{s-1/2}+\frac{i
m^2(8 m^2-1)}{6(s-1/2)^3}. \label{mz21}
\end{equation}
Thus, for $|t|>>1$, (\ref{mz19}) takes the approximate form
\begin{equation}
\frac{\partial}{\partial \sigma}\Real \Delta_3(2,2 m;\frac{1}{2}+\ri t)+ \frac{2 m^2}{t}\frac{\partial}{\partial t}\Real \Delta_3(2,2 m;\frac{1}{2}+\ri t)
=\frac{2 m^2}{t^2} \Real \Delta_3(2,2 m;\frac{1}{2}+\ri t) .
\label{mz22}
\end{equation}
This shows that, as $t$ increases, the  contours of $\Real
\Delta_3(2,2 m; \sigma+\ri t)=0$ strike the critical line at ever
flatter angles (although the angle increases as $m$ increases). The
direction of the intersection with the critical line is unique for
every point where $\partial \Delta_3(2,2 m;\frac{1}{2}+\ri
t)/\partial t\neq 0$.
Similar remarks apply to $\Imag \Delta_3(2,2 m; \sigma+\ri t)=0$,
where the equivalue contours cut the critical line at a direction
given by the tangent vector $(-2 m^2/t,1)$, and thus their gradient
at the point of intersection increases with $t$.

\section{Equivalue contours of $\Real \Delta_3$  and $\Imag \Delta_3$}

\begin{figure}
\begin{center}
\includegraphics[width=6cm]{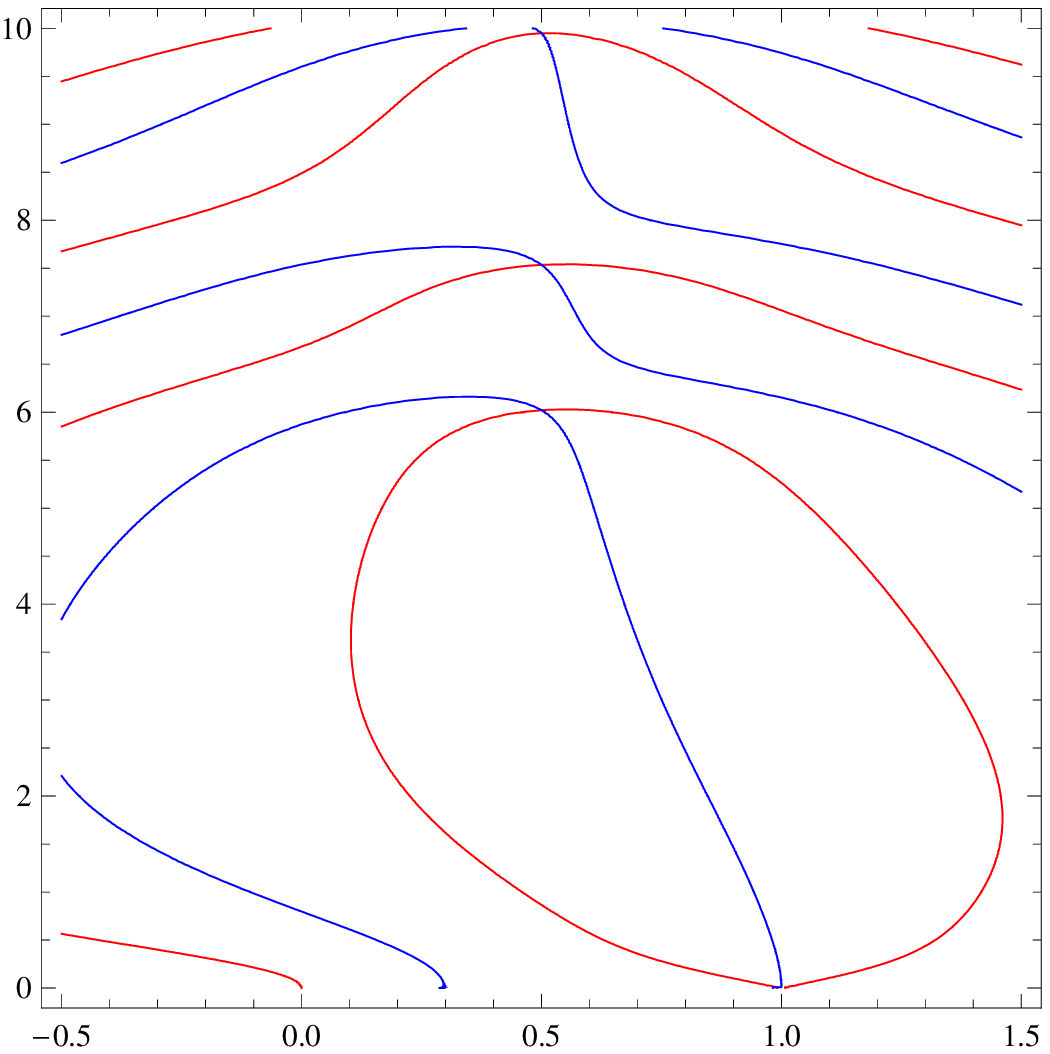}~~~
\includegraphics[width=6cm]{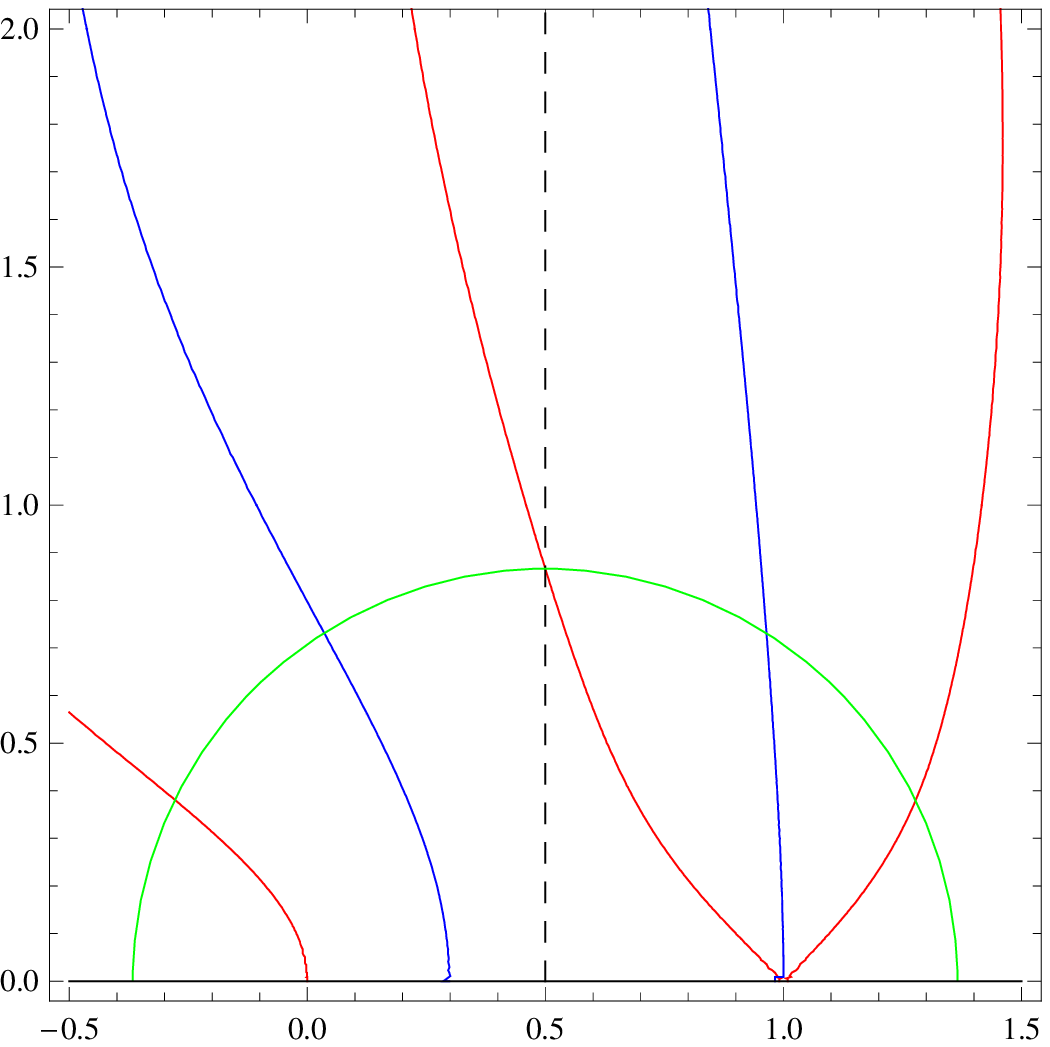}
\end{center}
\caption{Null contours of the real part (red) and imaginary part
(blue) of $\Delta_3(2,2;\sigma+\ri t)$, with (left)$\sigma \in [-0.5, 1.5]$,
and $t \in [0, 10]$. On the right we show
part of the left panel, with now $t \in [0.0, 2.0]$, to
give an undistorted geometric image. On the green circle, ${\cal
F}_2(s)$ is real and negative.} \label{fig1}
\end{figure}

\begin{figure}
\begin{center}
\includegraphics[width=12cm]{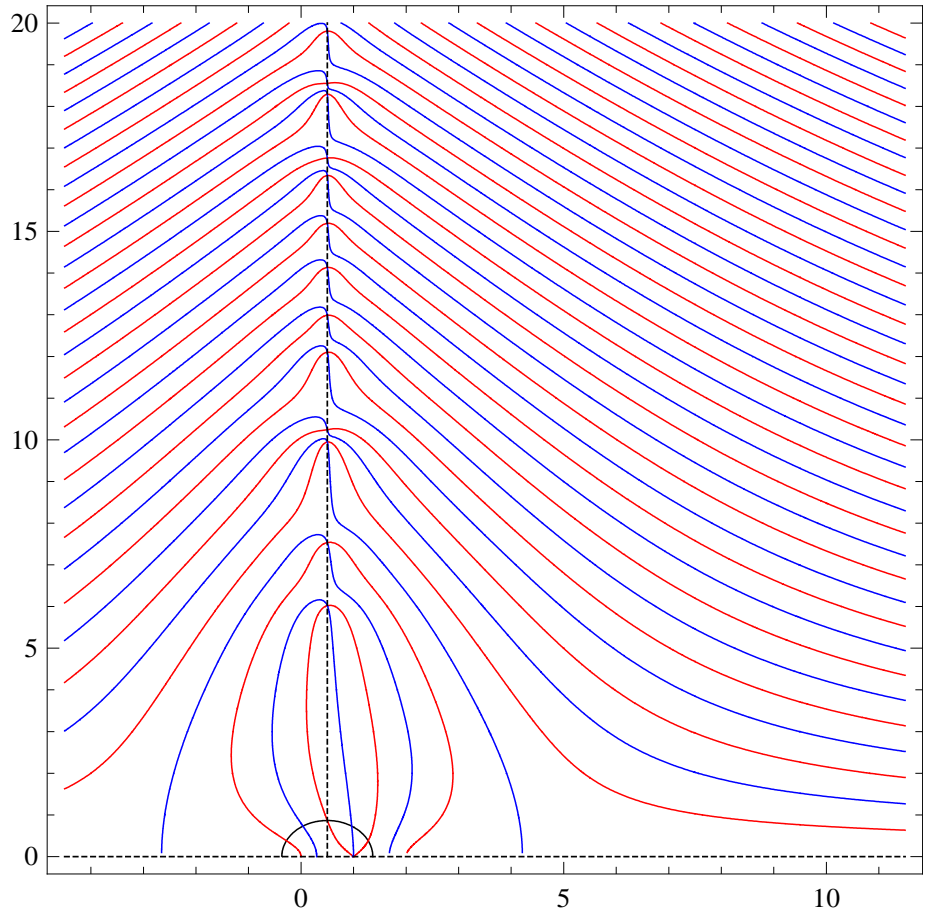}
\end{center}
\caption{Null contours of the real part (red) and imaginary part
(blue) of $\Delta_3(2,2;\sigma+\ri t)$, with $\sigma \in [-4.5, 11.5]$,
and $t \in [0.1, 20]$.} \label{fig2}
\end{figure}

In Figs. \ref{fig1}-\ref{fig3} we show contours on which the real
and imaginary parts of $\Delta_3(2,2;s)$ are zero. Fig. \ref{fig1}
gives some detail of the region near $\sigma=0.5$ for $t\in
[0.1,10]$, and a further more detailed region on an undistorted
geometric scale (including the semi-circle on which ${\cal F}_2(s)$
is real).  Fig. \ref{fig2}  gives a more global view of null
contours for $t$ ranging up to 20. Fig. \ref{fig3} gives the detail
of the null contours for $t$ near two values at which contours
nearly touch. We introduce the notation $14$ for a zero on
$\sigma=1/2$ of ${\cal C}(1,4;s)$, $+1$ for a zero of $\zeta (s)$ and
$-4$ for a zero of $L_{-4}(s)$. Then the zeros evident in
Fig.\ref{fig2} are categorized as:
$-4,14,14,-4,14,-4,+1,14,-4,14,-4,14,14$.

The null contours  have a number of interesting properties.
Firstly, we can see that contours for the real part do indeed strike the critical line at a small angle to the $\sigma$ axis, which decreases
as $t$ increases, while the contours for the imaginary part strike the critical line almost vertically. Secondly, the contours
for the real and imaginary
parts intersect the critical line simultaneously, except for one point. This is at $(1/2,\sqrt{3}/2)$, where, as remarked in (I),
$\phi_2(1/2+ i\sqrt{3/2})=\pi/2$, which permits $\Real \Delta_3(2,2;s)$ to be zero, while $\Imag \Delta_3(2,2;s)$ is non-zero (see (\ref{mz15})).

\begin{figure}
\begin{center}
\includegraphics[width=6cm]{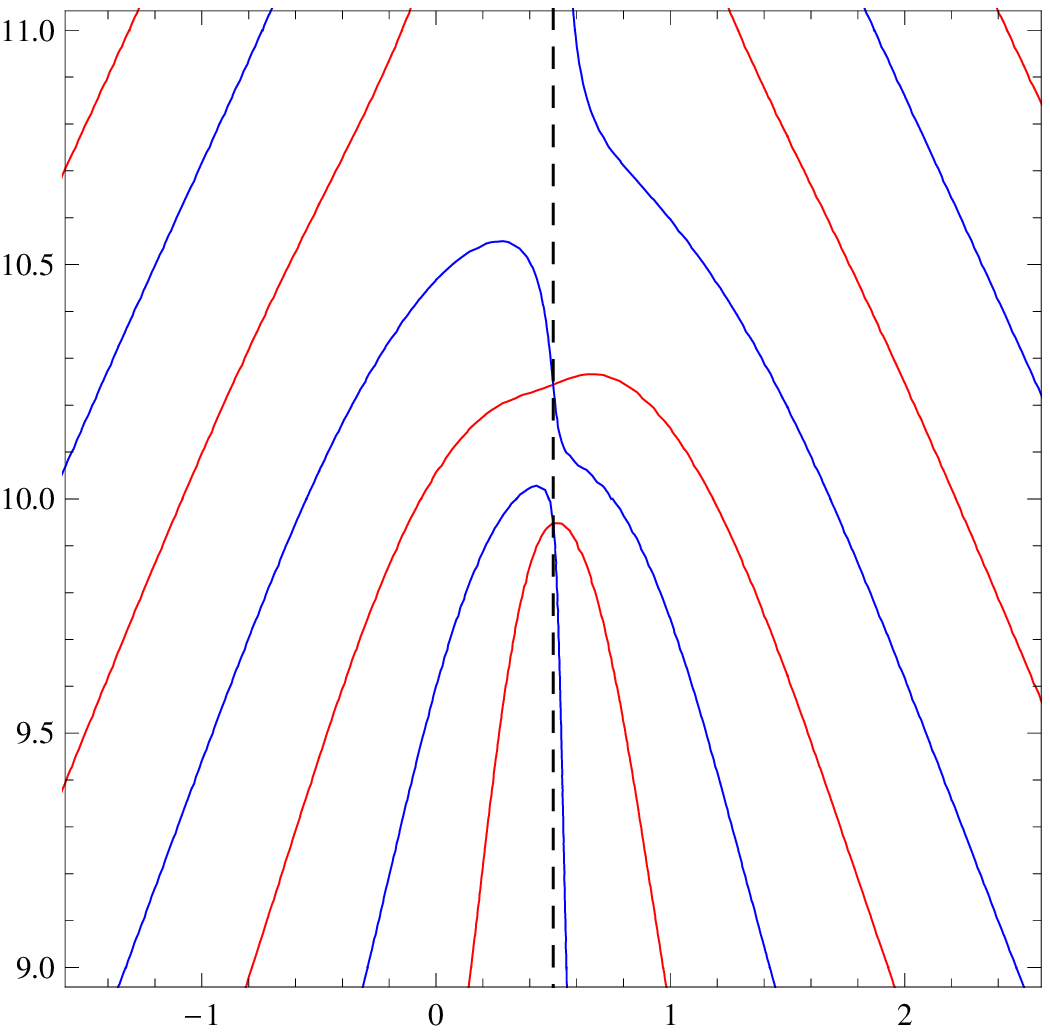}~~~
\includegraphics[width=6cm]{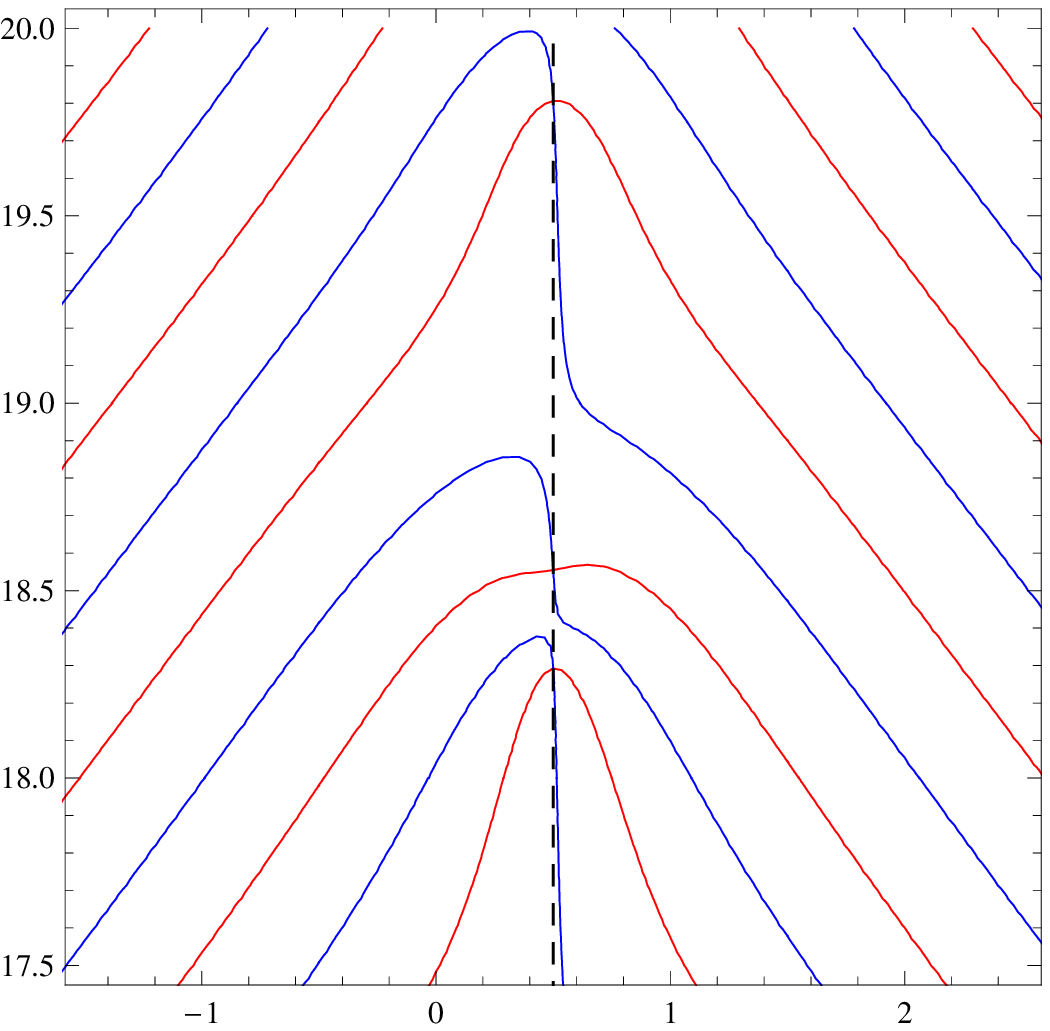}
\end{center}
\caption{Detail of the null contours of the real part (red) and
imaginary part (blue) of $\Delta_3(2,2;\sigma+\ri t)$, for $t$ near
10 (left) and near 18.5 (right).} \label{fig3}
\end{figure}

We can see four null contours for the real part and five null contours for the imaginary part intersecting the axis $t=0$. If we use
the expression
\begin{equation}
\Delta_3(2,2;s)=\frac{\Gamma(s)s (s+1)}{(2-s)(1-s)\pi^{2 s}}[{\cal
C}^2(2,2;s)-{\cal S}^2(2,2;s)], \label{ev1}
\end{equation}
we find that $\Delta_3(2,2;s)$ has a second order pole at $s=1$, and first order poles at $s=0$ and $s=2$. Numerical investigations show
that near these points
\begin{equation}
\Delta_3(2,2;1+\delta)\simeq\frac{-1.59643}{\delta^2},
\label{ev2}
\end{equation}
and
\begin{equation}
\Delta_3(2,2;\delta)\simeq\frac{-0.798212}{\delta},~~\Delta_3(2,2;2+\delta)\simeq\frac{1.16981}{\delta}.
\label{ev3}
\end{equation}
We note that $\Real \Delta_3(2,2;s)$ has zeros at the two first-order poles, where the trajectories approach the poles broadside (i.e., parallel to
the $t$ axis), and two  null trajectories approaching the second-order pole at $45^\circ$ and $135^\circ$ to the $\sigma$ axis.
$\Imag \Delta_3(2,2;s)$ has a zero at the second-order pole, where the null trajectory approaches broadside. The other null trajectory
crossings of the real axis occur at 0.29782, 1.67735, -2.65568 and 4.21422. The last two of these are associated with minima
of $\Real \Delta_3(2,2;\sigma)$.

\begin{figure}
\begin{center}
\includegraphics[width=6cm]{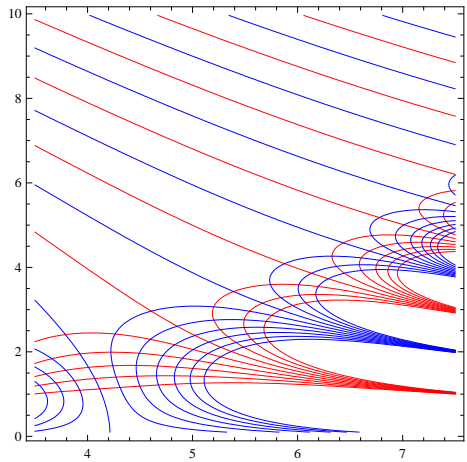}
\end{center}
\caption{Equivalue contours of the real part (red) and imaginary
part (blue) of  the form (\ref{ev5}) for $\Delta_3(2,2;\sigma+\ri
t)$, with $\sigma \in [3.5, 7.5]$, and $t \in [0.1, 10]$.} \label{fig4}
\end{figure}

We show in Fig. \ref{fig4} the behaviour of the null contours for
$\sigma$ large enough  ($\sigma>4$) to enable us to accurately
approximate trigonometric sums by their first few terms:
\begin{equation}
{\cal C}(2,2;s)^2-{\cal S}(2,2;s)^2\simeq 16 (1+\frac{1}{4^s}+\frac{36}{5^{2 s+2}}),
\label{ev4}
\end{equation}
so that
\begin{equation}
\Delta_3(2,2;s)\simeq \left[\frac{16 \Gamma(1-s) \Gamma(s+2) \Gamma(s)}{\Gamma(3-s) \pi^{ 2s}}\right]
(1+\frac{1}{4^s}+\frac{36}{5^{2 s+2}}).
\label{ev5}
\end{equation}
For $\sigma$ beyond 4.21422, we see from Fig. \ref{fig4} that
neither the null trajectories of the real part or the imaginary part
of $\Delta_3(2,2;s)$ can attain the real axis. In Fig. \ref{fig5},
we show for comparison the null trajectories associated with the
prefactor term in square brackets in (\ref{ev5}). It is evident from
a comparison of Figs. \ref{fig4} and \ref{fig5} that the null
trajectories are given accurately by the prefactor in the range of
$\sigma$ shown. Using Stirling's formula (Abramowitz and Stegun
(1972)), we can place the prefactor in exponential form for $|s|$
large, and find the constraint for null trajectories to be
\begin{equation}
\Imag [2 s\log(\frac{s}{\pi})- 2 s-\log(s)+\frac{25}{6 s}+\frac{2}{s^2}]=(n+\frac{1}{2})\pi,~{\rm or}~ n\pi,
\label{ev6}
\end{equation}
where on the right-hand side the first value is for real-part zeros, and the second for imaginary-part zeros ($n$ being an integer).
From (\ref{ev6}) we find that the equivalue contours alternate for $\sigma$ large, with null contours for the real part sandwiched
between those for the imaginary part, and vice versa (with each trajectory corresponding to a first-order zero). Each contour
tends to zero as $1/\log(\sigma/\pi)$ as $\sigma\rightarrow\infty$.
\begin{figure}
\begin{center}
\includegraphics[width=6cm]{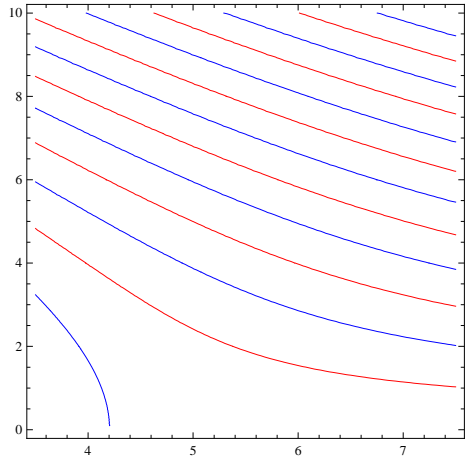}
\end{center}
\caption{Null contours of the real part (red) and imaginary part
(blue) of  the prefactor term in the expression (\ref{ev5}) for
$\Delta_3(2,2;\sigma+\ri t)$, with $\sigma \in [3.5, 7.5]$, and $t \in [0.1, 10]$.} \label{fig5}
\end{figure}

Note that we may use the equation (\ref{mz14}) to deduce relations governing the phase of $\Delta_3(2, 2m;s)$. Dividing (\ref{mz14}) by its
conjugated form, we obtain
\begin{equation}
\arg (\Delta_3(2, 2m;s))+\arg (\Delta_3(2, 2m;1-\bar{s}))=-\arg ({\cal F}_{2 m}(s)).
\label{arg1}
\end{equation}
This agrees with (\ref{mz15}) and (\ref{mz17}) when $\Real (s)=1/2$. Furthermore, on the right-hand side of Fig. \ref{fig2} we know that
$\arg (\Delta_3(2,2;s)$ increases monotonically with $t$, from its value of zero on the real axis. This enables us to assign values
of the constant phase of the null contours there: $\pi/2$, $\pi$, $-\pi/2$, $0$, $\pi/2$, etc. Using (\ref{arg1})
and (\ref{mz21}), we see that the monotonic increase of $\arg (\Delta_3(2,2;s)$ with $t$ on the right in Fig. \ref{fig2}) forces
a monotonic decrease of $\arg (\Delta_3(2,2;s)$ with $t$ on the left. We can then assign the phase values of the null contours there, again
starting from zero on the real axis: $-\pi/2$, $-\pi$, $\pi/2$, $0$, $-\pi/2$, etc.

We can use this information to understand the behaviour of the null
contours shown in Fig. \ref{fig2}. As the null contours of $\Real
(\Delta_3(2, 2m;s))$ pass through a simple zero, their phase must
change by $\pi$, from $\pi/2$ to $-\pi/2$, or vice versa. However,
we note that the relevant contours on the left in Fig. \ref{fig2}
which are almost symmetric to those on the right have opposite phase
values. This means the null contours of $\Real (\Delta_3(2, 2m;s))$
vary smoothly as they cross the critical line. For the null contours
of $\Imag (\Delta_3(2, 2m;s))$ the situation is different: the phase
change of $\pi$ forces them to avoid their almost symmetric
counterpart on the left and "jump up" a contour as they  pass
through a zero on the critical line. These remarks are in accord
with the derivative estimates at the critical line (see equation
(\ref{mz22}) and subsequent discussion).

\section{Some properties of zeros}
The properties of the angular sums given in the preceding sections
enable simple proofs to be given of some properties concerning the
location of zeros of angular lattice sums. These properties form an
interesting counterpoint to those of the zeros of Epstein zeta
functions (see Bogomolny and Lebouef (1994) and references [26-30]
of that paper). The Epstein zeta functions  are characterized as
having an infinite number of zeros lying on the critical line, but
with many zeros lying off that line and with almost all zeros lying
on the critical line or in its immediate neighbourhood. We have
already seen in the example of Section 2 that this third property of
Epstein zeta function zeros is not shared by those of the
trigonometric double sum $C(2 m,1;s)$ for large $m$.

\begin{theorem}
The trigonometric sums $\widetilde{C}(2,2 m;s)$ and
$\widetilde{S}(2,2 m;s)$ have no zeros for $s$ on the critical line
in the asymptotic region $t>>4 m^2$ which are not zeros of both, and
these will then be simultaneous zeros of $ C(0,1;s)$ and $C(1,4
m;s)$. \label{thm1}
\end{theorem}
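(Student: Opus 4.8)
The plan is to work entirely on the critical line $s=1/2+\ri t$ and to exploit the two facts already established there: the equality of imaginary parts (\ref{mz16}) and the phase relation (\ref{mz17}). First I would set $a=\Real\widetilde{C}(2,2m;1/2+\ri t)$ and $b=\Real\widetilde{S}(2,2m;1/2+\ri t)$, and use (\ref{mz16}) to introduce the common imaginary part $c=\Imag\widetilde{C}=\Imag\widetilde{S}$. Expanding the defining identity $\Delta_3=\widetilde{C}^2-\widetilde{S}^2$ then collapses neatly because the imaginary parts agree, yielding $\Real\Delta_3=a^2-b^2$ and $\Imag\Delta_3=2c(a-b)$. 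Everything is now expressed through the three real quantities $a,b,c$ together with the real angle $\phi_{2m,c}(t)$, and (\ref{mz16}) is precisely what produces this simplification.

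Next I would feed these expressions into (\ref{mz17}), which gives the single scalar identity $2c(a-b)=-\tan\phi_{2m,c}(t)\,(a^2-b^2)$, valid at every point of the critical line. This one relation drives the whole argument. Suppose $\widetilde{C}(2,2m;1/2+\ri t)=0$, so that $a=c=0$; the identity then collapses to $0=\tan\phi_{2m,c}(t)\,b^2$. The asymptotic expansion (\ref{mz21}), evaluated at $s=1/2+\ri t$ where $s-1/2=\ri t$, gives $\phi_{2m,c}(t)\simeq 2m^2/t$, which is small, positive and nonzero throughout the region $t\gg 4m^2$; hence $\tan\phi_{2m,c}(t)\neq0$ and so $b=0$. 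With $c=0$ this forces $\widetilde{S}=0$. Interchanging the roles of $\widetilde{C}$ and $\widetilde{S}$ (set $b=c=0$ and read off $a=0$ from the same identity) gives the converse, so in the asymptotic region a zero of either rescaled sum is automatically a zero of both.

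The step I expect to be the genuine obstacle is the clean justification that $\tan\phi_{2m,c}(t)$ is bounded away from zero. This is exactly what the hypothesis $t\gg 4m^2$ buys us through (\ref{mz21}): on the critical line $|s|\simeq t$, so $t\gg 4m^2$ is what makes that expansion applicable, and it keeps $\phi_{2m,c}(t)$ small, positive and inside $(0,\pi/2)$, away from the zeros of $\tan$. The careful point is to verify that the higher-order corrections in (\ref{mz21}) cannot conspire to cancel the leading $2m^2/t$; since those corrections are of order $m^2(8m^2-1)/t^3$, for $t\gg 4m^2$ the leading term dominates and $\phi_{2m,c}(t)\neq0$, but this comparison is the one quantitative estimate that must be made explicit.

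Finally I would convert a simultaneous zero of $\widetilde{C}$ and $\widetilde{S}$ into one of $C(0,1;s)$ and $C(1,4m;s)$ using the linear relations read off from (\ref{mz6})--(\ref{mz6a}), namely $\widetilde{C}+\widetilde{S}=\frac{\Gamma(s)}{\pi^s\sqrt{{\cal F}_{2m}(s)}}C(0,1;s)$ and $\widetilde{C}-\widetilde{S}=\frac{\Gamma(s)}{\pi^s\sqrt{{\cal F}_{2m}(s)}}C(1,4m;s)$. On the critical line in the asymptotic region the common prefactor is finite and nonzero: $\Gamma(1/2+\ri t)$ never vanishes, $\pi^{s}\neq0$, and since $\phi_{2m,c}(t)$ is real we have $|{\cal F}_{2m}|=1$, so $\sqrt{{\cal F}_{2m}}\neq0$; moreover, because $\phi_{2m,c}(t)$ is small there, ${\cal F}_{2m}\simeq1$ is not real and negative, so we sit safely off the branch cut noted after (\ref{mz6a}). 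Consequently $\widetilde{C}=\widetilde{S}=0$ if and only if both $C(0,1;s)=0$ and $C(1,4m;s)=0$, which is the asserted identification of the simultaneous zeros and completes the proof.
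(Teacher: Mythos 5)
Your proof is correct, and it reaches the paper's conclusion by a recognizably different intermediate route, although the crux is the same fact in both cases. The paper works with the two linear combinations $\widetilde{C}\pm\widetilde{S}$: from (\ref{np1}) and the functional equation (\ref{mz4}) it deduces $\arg[\widetilde{C}+\widetilde{S}]=-\phi_{2m,c}(t)$ modulo $\pi$ (its equation (\ref{np2})), while (\ref{mz16}) makes $\widetilde{C}-\widetilde{S}$ real (its equation (\ref{np3})); if one of the two sums vanished alone, both argument constraints would fall on the surviving sum, forcing $\phi_{2m,c}(t)=n\pi$, which (\ref{mz21}) excludes in the asymptotic region. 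You instead combine (\ref{mz16}) with the quadratic identity (\ref{mz17}) for $\Delta_3=\widetilde{C}^2-\widetilde{S}^2$, obtaining $2c(a-b)=-\tan(\phi_{2m,c}(t))\,(a^2-b^2)$; a zero of one rescaled sum kills $a$ (or $b$) together with the common imaginary part $c$, and the non-vanishing of $\tan(\phi_{2m,c}(t))$ --- exactly the same consequence of (\ref{mz21}) that the paper invokes --- then kills the other. Note that your identity is strictly weaker than the paper's pair of phase relations: it is, in effect, the relation behind (\ref{np2}) multiplied by $(a-b)$, so it also admits the branch $a=b$; but at a putative isolated zero that weaker statement is still exactly strong enough, and it spares you the bookkeeping of the $0$-versus-$\pi$ ambiguities in the argument relations. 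The final step --- converting a simultaneous zero of $\widetilde{C}$ and $\widetilde{S}$ into a simultaneous zero of $C(0,1;s)$ and $C(1,4m;s)$ via the linear relations from (\ref{mz6})--(\ref{mz6a}) --- coincides with the paper's, though you are rather more careful than the paper in verifying that the prefactor $\Gamma(s)/[\pi^s\sqrt{{\cal F}_{2m}(s)}]$ is finite, nonzero, and off the branch cut of the square root on the relevant part of the critical line, which is a worthwhile addition.
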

\begin{proof}
We combine equations (\ref{mz6}) and (\ref{mz6a}), to obtain
\begin{equation}
\widetilde{C}(2,2 m;s)+\widetilde{S}(2,2 m;s)=\frac{1}{\sqrt{{\cal
F}_{2 m}(s)}}\left[\frac{\Gamma (s)}{\pi^s}C(0,1;s)\right].
\label{np1}
\end{equation}
As the term in square brackets is real on the critical line, from
the functional equation (\ref{mz4}), we see that
\begin{equation}
\arg[\widetilde{C}(2,2 m;\frac{1}{2}+ i t)+\widetilde{S}(2,2
m;\frac{1}{2}+i t)]=-\phi_{2 m;c}(t)+
\left[\begin{array}{c}0\\\pi\end{array}\right].\label{np2}
\end{equation}
The second term on the right-hand side of (\ref{np2}) is chosen
according to the sign of the term in square brackets in (\ref{np1}).
We can also use (\ref{mz16}) to deduce that
\begin{equation}
\arg[\widetilde{C}(2,2 m;\frac{1}{2}+ i t)-\widetilde{S}(2,2
m;\frac{1}{2}+i t)]= \left[\begin{array}{c}0\\\pi\end{array}\right].
\label{np3}
\end{equation}
Note that the term on the right-hand side in (\ref{np3})
incorporates the phase of the real function on the left-hand side,
and is not necessarily the same as the second term on the right-hand
side of (\ref{np2}).

If we suppose that $\widetilde{S}(2,2 m;\frac{1}{2}+i t)=0$, we
require from (\ref{np2}) and (\ref{np3}) that $\phi_{2 m;c}(t)=n
\pi$ for some integer $n$. We know from (\ref{mz21}) that this
cannot occur. Exactly the same argument applies if
$\widetilde{C}(2,2 m;\frac{1}{2}+i t)=0$, so neither function can be
separately zero on the critical line in the asymptotic region. If
both are zero then, trivially, $C(0,1;s)=0$ and $C(1,4 m;s)=0$.
\end{proof}

We now consider the phases of  $\widetilde{S}(2,2 m;\frac{1}{2}+i
t)$ and $\widetilde{C}(2,2 m;\frac{1}{2}+i t)$, particularly in the
neighbourhood of zeros $C(0,1;s)$ or $C(1,4 m;s)$. For brevity, we
will adopt the following notations:
\begin{equation}
|\widetilde{C}(2,2 m;\frac{1}{2}+i t)|=|\widetilde{C}|,
~~\arg\widetilde{C}(2,2 m;\frac{1}{2}+i t)= \Theta_c, \label{np4}
\end{equation}
and
\begin{equation}
|\widetilde{S}(2,2 m;\frac{1}{2}+i t)|=|\widetilde{S}|,
~~\arg\widetilde{S}(2,2 m;\frac{1}{2}+i t)= \Theta_s. \label{np5}
\end{equation}
Then on the critical line, we have for  $\widetilde{C}$ and
$\widetilde{S}$ that
\begin{equation}
\widetilde{C}(2,2 m;\frac{1}{2}+i t)= |\widetilde{C}| (\cos
\Theta_c+i \sin \Theta_c),~~\widetilde{S}(2,2 m;\frac{1}{2}+i t)=
|\widetilde{S}| (\cos \Theta_s+i \sin \Theta_s), \label{np6}
\end{equation}
and
\begin{equation}
|\widetilde{C}| \sin \Theta_c= |\widetilde{S}| \sin \Theta_s .
\label{np7}
\end{equation}
The relationships between these phase angles are exemplified in Fig. \ref{nfig1}, where the
variations with $t$ of the functions $\cot(\Theta_c)$ and
$\cot(\Theta_s)$ are shown for the case $m=1$.

\begin{theorem}
The phase angles $\Theta_c$ and  $\Theta_s$ obey the following equations
on the critical line
\begin{equation}
\cot (\Theta_c)+ \cot (\Theta_s)=-2 \cot (\phi_{2
m,c}(t) ,\label{np8a}
\end{equation}
and
\begin{equation}
|\cot (\Theta_c)- \cot (\Theta_s)|=\frac{2|C(1,4 m;s)|}{|C(0,1;s)||\sin (\phi_{2 m,c}(t)|} .\label{np8aa}
\end{equation}
\label{newthm1}
\end{theorem}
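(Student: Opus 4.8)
The plan is to work throughout on the critical line $s=\frac{1}{2}+it$ with the sum and difference combinations of $\widetilde C$ and $\widetilde S$, exploiting that (\ref{mz16}) makes the two functions share a common imaginary part. Writing $I=\Imag\widetilde C=\Imag\widetilde S$, the identity (\ref{np7}) reads $I=|\widetilde C|\sin\Theta_c=|\widetilde S|\sin\Theta_s$. Adding the polar forms (\ref{np6}) gives $\widetilde C+\widetilde S$ with real part $|\widetilde C|\cos\Theta_c+|\widetilde S|\cos\Theta_s$ and imaginary part $2I$, while subtracting them gives $\widetilde C-\widetilde S$ with imaginary part $0$ (again by (\ref{np7})) and real part $|\widetilde C|\cos\Theta_c-|\widetilde S|\cos\Theta_s$. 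I would obtain the first equation (\ref{np8a}) by comparing the argument of $\widetilde C+\widetilde S$ computed this way with its known value $-\phi_{2m,c}(t)$ (mod $\pi$) from (\ref{np2}); the second (\ref{np8aa}) I would obtain by evaluating $\cot\Theta_c-\cot\Theta_s$ as a ratio of the real part of $\widetilde C-\widetilde S$ to $I$, and then identifying both of these quantities explicitly through (\ref{mz6})--(\ref{mz6a}).

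For (\ref{np8a}) I take the tangent of (\ref{np2}): since $\tan$ is insensitive to the $\{0,\pi\}$ ambiguity there, $\tan(\arg(\widetilde C+\widetilde S))=-\tan\phi_{2m,c}(t)$, and the left side equals $2I/(|\widetilde C|\cos\Theta_c+|\widetilde S|\cos\Theta_s)$. Inverting this relation and dividing numerator and denominator by $I$, using $I=|\widetilde C|\sin\Theta_c$ to convert $|\widetilde C|\cos\Theta_c/I$ into $\cot\Theta_c$ and $I=|\widetilde S|\sin\Theta_s$ to convert $|\widetilde S|\cos\Theta_s/I$ into $\cot\Theta_s$, gives $\frac{1}{2}(\cot\Theta_c+\cot\Theta_s)=-\cot\phi_{2m,c}(t)$, which is (\ref{np8a}).

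For (\ref{np8aa}) the same substitution yields $\cot\Theta_c-\cot\Theta_s=(|\widetilde C|\cos\Theta_c-|\widetilde S|\cos\Theta_s)/I=\Real(\widetilde C-\widetilde S)/I$, and since $\widetilde C-\widetilde S$ is real (by (\ref{np7}), consistent with (\ref{np3})) this is $(\widetilde C-\widetilde S)/I$. I then read off the two ingredients from the definitions. Differencing (\ref{mz6}) and (\ref{mz6a}) gives $\widetilde C-\widetilde S=\frac{\Gamma(s)}{\pi^s\sqrt{{\cal F}_{2m}(s)}}C(1,4m;s)$, and because $|{\cal F}_{2m}(s)|=1$ on the critical line its modulus is $\frac{|\Gamma(s)|}{\pi^{1/2}}|C(1,4m;s)|$. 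To get $I$ I use the sum combination (\ref{np1}): the bracket $\frac{\Gamma(s)}{\pi^s}C(0,1;s)$ is real on the critical line (its functional equation (\ref{mz4}) with $m=0$), and $1/\sqrt{{\cal F}_{2m}(s)}=\exp(-i\phi_{2m,c}(t))$ there, so $\widetilde C+\widetilde S$ is a real multiple of $\cos\phi_{2m,c}(t)-i\sin\phi_{2m,c}(t)$; its imaginary part, which equals $2I$, is therefore $-\frac{|\Gamma(s)|}{\pi^{1/2}}|C(0,1;s)|\sin\phi_{2m,c}(t)$ up to sign. Substituting $|I|=\frac{1}{2}\frac{|\Gamma(s)|}{\pi^{1/2}}|C(0,1;s)|\,|\sin\phi_{2m,c}(t)|$ together with the modulus of $\widetilde C-\widetilde S$ into $|\cot\Theta_c-\cot\Theta_s|=|\widetilde C-\widetilde S|/|I|$ cancels the common factor $|\Gamma(s)|/\pi^{1/2}$ and produces (\ref{np8aa}).

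I expect the only delicate points to be the bookkeeping of signs and of the mod $\pi$ ambiguities inherited from (\ref{np2}) and (\ref{np3}): these are precisely why (\ref{np8a}) can be stated as an identity between cotangents (in which the ambiguity is invisible), whereas (\ref{np8aa}) must carry absolute values. Some care is also needed to confirm that the common imaginary part $I$ is generically nonzero, so that division by $I$ is legitimate; this fails only at the simultaneous zeros identified in Theorem~\ref{thm1}, where both sides should be interpreted as limits.
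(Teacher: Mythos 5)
Your proof is correct, but it follows a genuinely different route from the paper's. The paper proves Theorem \ref{newthm1} geometrically: it represents $\widetilde{C}$, $\widetilde{S}$, their sum and difference as vectors (Fig. \ref{nfig2}), applies the sine rule to the triangles $OAB$ and $OAD$ to get (\ref{sine1})--(\ref{sine2}), and then subtracts and adds the resulting relations, invoking $|\widetilde{C}-\widetilde{S}|/|\widetilde{C}+\widetilde{S}|=|C(1,4m;s)|/|C(0,1;s)|$ to finish; that argument is carried out for one configuration ($\Theta_s>\Theta_c$, $\sin\phi_{2m,c}>0$) and then asserted to hold in general. Your route instead exploits the shared imaginary part $I=\Imag\widetilde{C}=\Imag\widetilde{S}$ from (\ref{mz16})/(\ref{np7}) to write $\cot\Theta_c\pm\cot\Theta_s=\Real(\widetilde{C}\pm\widetilde{S})/I$, evaluates the sum via the tangent of (\ref{np2}) and the difference via the explicit formulas (\ref{mz6})--(\ref{mz6a}) together with $|{\cal F}_{2m}(s)|=1$ and $|\pi^s|=\pi^{1/2}$ on the critical line. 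This buys two things: the mod-$\pi$ ambiguities in (\ref{np2})--(\ref{np3}) are absorbed automatically by the cotangents and absolute values, so no case analysis on the vector configuration is needed; and the constant in (\ref{np8aa}) emerges by direct cancellation rather than through the auxiliary ratio identity (in fact the paper's displayed equations (\ref{sine3})--(\ref{sine4}) contain a typo, with $|\widetilde{C}-\widetilde{S}|/|\widetilde{C}-\widetilde{S}|$ where $|\widetilde{C}-\widetilde{S}|/|\widetilde{C}+\widetilde{S}|$ is meant, so your computation is the cleaner record of the argument). One small imprecision in your closing remark: $I$ vanishes not only at simultaneous zeros of $C(0,1;s)$ and $C(1,4m;s)$, but at \emph{every} critical-line zero of $C(0,1;s)$ (since $2I=-\Gamma(s)\pi^{-s}C(0,1;s)\sin\phi_{2m,c}(t)$ there) and wherever $\sin\phi_{2m,c}(t)=0$; at such points both sides of (\ref{np8a}) and (\ref{np8aa}) are singular and the identities hold only as limits, which is the same generic sense in which the paper's sine-rule derivation applies.
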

\begin{proof}
We prove these relationships using the diagram of Fig. \ref{nfig2}, in which complex quantities are represented by vectors.
For the case shown, the angle $\Theta_s$ runs from the extended line $OA$ round to $AD$, while $\Theta_c$ runs from  line $OA$ round to $OM$, with the
former angle exceeding the latter.
The vectors representing $\widetilde{C}-\widetilde{S}$ and $\widetilde{C}+\widetilde{S}$ are respectively $OA$ and $OD$, while the angle between
$OA$ and $OD$ is $\pi-\phi_{2m,c}(t)$.

We apply the sine rule to the triangle $OAB$, to give
\begin{equation}
\frac{2|\widetilde{C}|}{\sin (\phi_{2 m,c}(t))}=\frac{|\widetilde{C}+\widetilde{S}|}{\sin (\Theta_c)}=
\frac{|\widetilde{C}-\widetilde{S}|}{\sin (\Theta_c+\phi_{2 m,c}(t))}.
\label{sine1}
\end{equation}
We next apply the sine rule to the triangle $OAD$, to give
\begin{equation}
\frac{2|\widetilde{S}|}{\sin (\phi_{2 m,c}(t))}=\frac{|\widetilde{C}+\widetilde{S}|}{\sin (\Theta_s)}=
\frac{|\widetilde{C}-\widetilde{S}|}{\sin (\Theta_c+\phi_{2 m,c}(t)-\pi)}.
\label{sine2}
\end{equation}
Using the first two elements of (\ref{sine1}) and (\ref{sine2}) we recover the result that
$|\widetilde{C}|/|\widetilde{S}|=\sin (\Theta_s)/\sin (\Theta_c)$. Using the second and third elements, we find
\begin{equation}
\cos(\phi_{2 m,c}(t))+\cot (\Theta_c) \sin(\phi_{2 m,c}(t))=\frac{|\widetilde{C}-\widetilde{S}|}{|\widetilde{C}-\widetilde{S}|},
\label{sine3}
\end{equation}
and
\begin{equation}
-\cos(\phi_{2 m,c}(t))-\cot (\Theta_s) \sin(\phi_{2 m,c}(t))=\frac{|\widetilde{C}-\widetilde{S}|}{|\widetilde{C}-\widetilde{S}|}.
\label{sine4}
\end{equation}
Subtracting (\ref{sine3}) and (\ref{sine4}), we obtain (\ref{np8a}), while adding them and using
\begin{equation}
\frac{|\widetilde{C}-\widetilde{S}|}{|\widetilde{C}+\widetilde{S}|}=\frac{|C(1,4 m;s)|}{|C(0,1;s)|}
\label{sine5}
\end{equation}
we obtain (\ref{np8aa}) for the particular case of $\Theta_s>\Theta_c$ and $\sin(\phi_{2 m,c}(t))>0$. The form given for (\ref{np8aa}) is true
whatever the inequality between $\Theta_s$ and $\Theta_c$, and whatever the sign of $\sin(\phi_{2 m,c}(t))$.
\end{proof}

The curves of Fig. \ref{nfig1} giving $\cot (\Theta_c)$ and $ \cot
(\Theta_s)$ on the critical line as a function of $t$ for $m=1$ illustrate the relationship (\ref{np8a}).
One may be obtained from the other by reflection in the line giving the
variation of $-\cot (\phi_{2 m,c}(t)$. They intersect when $\Theta_c=\Theta_s$, at a zero  of $C(1,4 m;\frac{1}{2}+i t)$.
The curves of  $\cot
(\Theta_c)$ and $ \cot (\Theta_s)$ also go off to infinity at zeros
of $C(0,1;\frac{1}{2}+i t)$, with their movement being in opposite
directions in accord with (\ref{np8a}). In Fig. \ref{nfig1} we see
examples of there being none, one or two zeros $C(1,4
m;\frac{1}{2}+i t)$ between successive zeros of $C(0,1;\frac{1}{2}+i
t)$.

 Using (\ref{np6}) and (\ref{np7}), we
can rewrite (\ref{np2}) as
\begin{equation}
\frac{ 2 \sin \Theta_c   \sin \Theta_s
}{\sin(\Theta_c+\Theta_s)}=-\tan \phi_{2 m,c}(t).\label{np8}
\end{equation}
With a prime denoting $t$ derivatives, we obtain by differentiating
(\ref{np8})
\begin{eqnarray}
(2\cos \Theta_c \sin \Theta_s) \Theta_c '+  (2\sin \Theta_c \cos
\Theta_s) \Theta_s '=& & \nonumber \\
-\frac{\phi_{2 m,c}'(t)}{\cos^2 \phi_{2 m,c}(t)} \sin
(\Theta_s+\Theta_c) -\tan \phi_{2 m,c}(t) \cos( \Theta_s+\Theta_c)
(\Theta_s'+\Theta_c'). \label{np9}
\end{eqnarray}

We now consider the values of $\Theta_s$, $\Theta_c$ and their $t$
derivatives in the neighbourhood of zeros of $C(1,4 m;\frac{1}{2}+i
t)$ or $C(0,1 ;\frac{1}{2}+i t)$, which are not zeros of both
functions. Close to a zero of $C(1,4 m;\frac{1}{2}+i t)$, we require
\begin{equation}
|\widetilde{C}| \cos \Theta_c- |\widetilde{S}| \cos \Theta_s
\rightarrow 0,~~ |\widetilde{C}|\rightarrow  |\widetilde{S}|,
~~\Theta_c\rightarrow \Theta_s,  \label{np10}
\end{equation}
where $|\widetilde{C}|\neq 0$, $|\widetilde{S}|\neq 0$ at the zero
by Theorem \ref{thm1}. Using the last of (\ref{np10}) in (\ref{np8})
and (\ref{np9}), we find at the zero of $C(1,4 m;\frac{1}{2}+i t)$
that
\begin{equation}
\Theta_s=\Theta_c=-\phi_{2m,c}(t)+\left[\begin{array}{c}0\\\pi\end{array}\right],~~
\Theta_c'+\Theta_s'=-2\phi_{2 m,c}'(t). \label{np11}
\end{equation}

Close to a zero of $C(0,1;\frac{1}{2}+i t)$, we require
\begin{equation}
|\widetilde{C}| \cos \Theta_c+  |\widetilde{S}| \cos \Theta_s
\rightarrow 0,  |\widetilde{C}| \sin \Theta_c+  |\widetilde{S}| \sin
\Theta_s \rightarrow 0,~~ |\widetilde{C}|\rightarrow
|\widetilde{S}|. \label{np12}
\end{equation}
From (\ref{np12}), at the zero
\begin{equation}
\Theta_c=\pi+\Theta_s=
\left[\begin{array}{c}0\\\pi\end{array}\right],~~
\Theta_s'=-\Theta_c'.\label{np13}
\end{equation}

\begin{figure}
\begin{center}
\includegraphics[width=6cm]{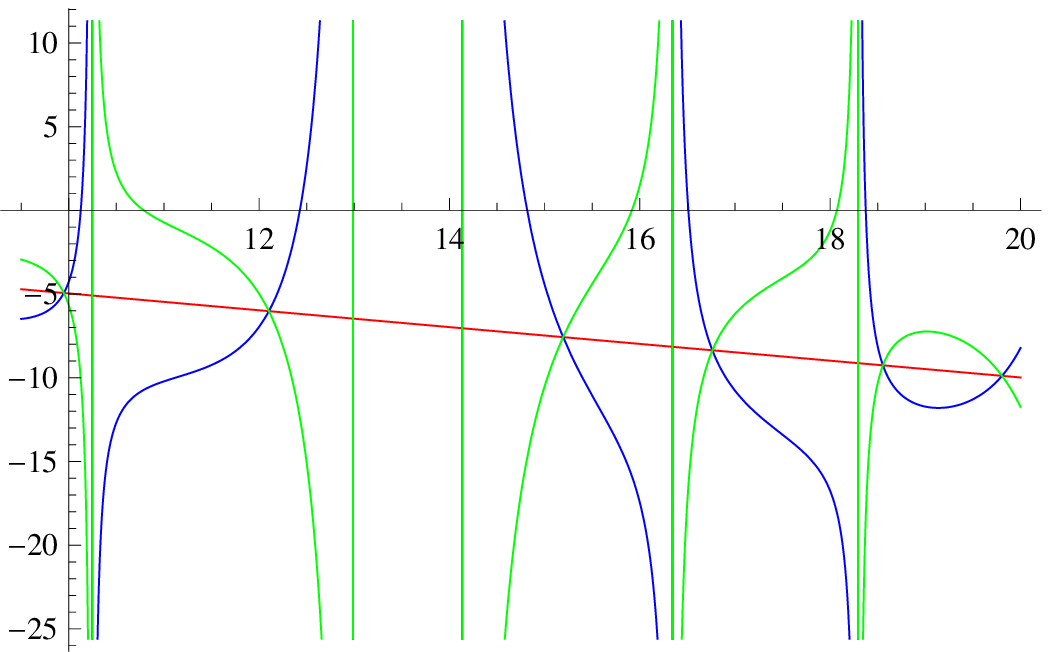}~~\includegraphics[width=6cm]{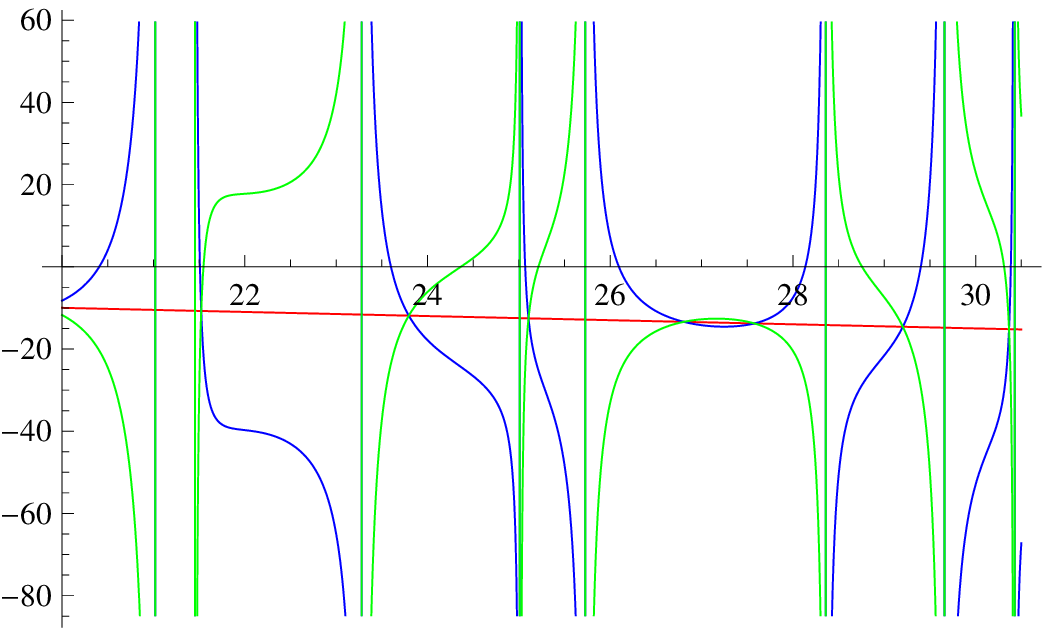}~~
\end{center}
\caption{Variation of the quantities $-\cot(\phi_{2,c}(t))$ (red),
$\cot(\Theta_c (t))$ (blue) and  $\cot(\Theta_s(t))$ (green) with $t
\in [9.5, 20]$ (left) and $[20, 30.5]$ (right), where the angles $\Theta_c$ and $\Theta_s$
pertain to $\widetilde{C}(2,2;s)$ and $\widetilde{S}(2,2;s)$ .}
\label{nfig1}
\end{figure}

\begin{figure}
\begin{center}
\includegraphics[width=8cm]{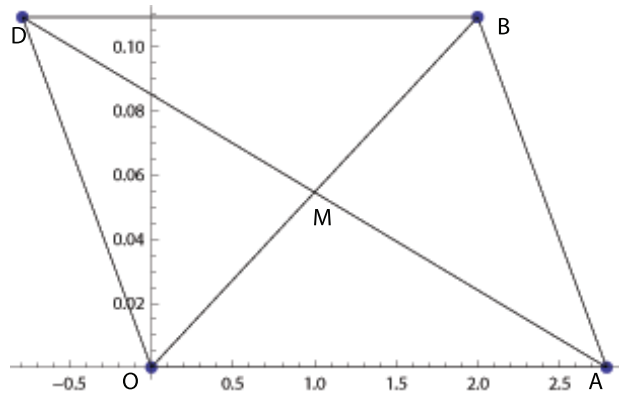}
\end{center}
\caption{The complex quantities $\widetilde{C}(2,2;s)$ ($OM$) and $
\widetilde{S}(2,2;s)$  ($AM$) represented as vectors, together with
their sum ($OD$) and difference ($OA$), with the last two
respectively proportional to $C(0,1;s)$ and $C(1,4;s)$.}
\label{nfig2}
\end{figure}

The next result relates to  the independence of the functions
$\Delta_3 (2,2 m;s)$ on the critical line, again in the asymptotic
region.
\begin{theorem}
If $s$ is on the critical line in the asymptotic region $t>>4 l^2$
and $\alpha_l$, $\beta$ are non-zero reals, then
\begin{equation}
\alpha_l C(1,4 l;s)+\beta C(0,1;s)\neq 0,  ~{\rm unless}~C(1,4
l;s)=0=C(0,1;s).\label{zprf4}
\end{equation}
If $s$ is on the critical line in the asymptotic region $t>>4 \max
(l^2,m^2)$ and $\alpha_l$, $\alpha_m$ are non-zero reals with $l\neq
m$, then
\begin{equation}
\alpha_l C(1,4 l;s)+\alpha_m C(1,4 m;s)\neq 0 ~{\rm unless}~C(1,4
l;s)=0=C(1,4 m;s). \label{zprf5}
\end{equation}
\label{thm2}
\end{theorem}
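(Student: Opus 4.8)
The plan is to regard each angular sum as a vector in the complex plane and to exploit the fact that, on the critical line, the functional equation fixes the phase of each sum modulo $\pi$; a nontrivial real linear combination of two such vectors can then vanish only if they are parallel, a configuration that the asymptotics of the phase functions $\phi_{2k,c}(t)$ rule out in the stated region. First I would record the phase data already assembled for Theorem~\ref{thm1}. Combining (\ref{mz6}) with (\ref{mz6a}) gives $\widetilde{C}-\widetilde{S}=\frac{\Gamma(s)}{\pi^s\sqrt{\mathcal F_{2m}(s)}}\,C(1,4m;s)$, while (\ref{np1}) gives $\widetilde{C}+\widetilde{S}=\frac{\Gamma(s)}{\pi^s\sqrt{\mathcal F_{2m}(s)}}\,C(0,1;s)$. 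Hence (\ref{np2}) and (\ref{np3}) translate into the statement that, on $s=1/2+\ri t$, the arguments of $C(0,1;s)$ and $C(1,4m;s)$ are each pinned down modulo $\pi$ and that $C(1,4m;1/2+\ri t)$ and $C(0,1;1/2+\ri t)$ point along directions differing by the angle $\phi_{2m,c}(t)$ (modulo $\pi$). The same facts follow directly from the functional equation (\ref{mz4}), which makes $C(1,4m;s)\Gamma(s+2m)/\pi^s$ real on the critical line for every $m\ge 0$, the case $m=0$ reproducing $C(0,1;s)$.

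The second ingredient is an elementary observation applied identically in both parts: for real nonzero coefficients, $\alpha\,a+\beta\,b=0$ with $a,b\in\mathbb{C}$ forces $b=0$ when $a=0$ and $a=0$ when $b=0$, whereas if both are nonzero it forces $a/b=-\beta/\alpha$ to be real, i.e. $a$ and $b$ must be parallel as vectors. For (\ref{zprf4}) I would take $a=C(1,4l;s)$ and $b=C(0,1;s)$: a zero of the combination that is not a simultaneous zero of both would force these two sums to be parallel, which by the previous paragraph requires $\phi_{2l,c}(t)=n\pi$ for some integer $n$. But (\ref{mz21}) gives $\phi_{2l,c}(t)\simeq 2l^2/t$, which lies in $(0,1/2)\subset(0,\pi)$ once $t>4l^2$; this is exactly the exclusion already invoked in the proof of Theorem~\ref{thm1}, and it settles (\ref{zprf4}).

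For (\ref{zprf5}) I would repeat the scheme with $a=C(1,4l;s)$ and $b=C(1,4m;s)$. Their phase difference on the critical line is $\phi_{2l,c}(t)-\phi_{2m,c}(t)$ (modulo $\pi$), so parallelism would require $\phi_{2l,c}(t)-\phi_{2m,c}(t)=n\pi$; but (\ref{mz21}) yields $\phi_{2l,c}(t)-\phi_{2m,c}(t)\simeq 2(l^2-m^2)/t$, which is nonzero because $l\neq m$ and has modulus below $1/2$ once $t>4\max(l^2,m^2)$, hence is never a multiple of $\pi$. The linear-algebra step then gives (\ref{zprf5}). The one point requiring genuine care — and the \emph{main obstacle} — is this last estimate: one must verify that the $O(1/t^3)$ corrections in (\ref{mz21}) cannot nudge $\phi_{2l,c}(t)-\phi_{2m,c}(t)$ onto a multiple of $\pi$ anywhere in the asymptotic region, so that the leading term $2(l^2-m^2)/t$ indeed keeps the difference strictly inside $(-\pi,\pi)\setminus\{0\}$ throughout $t>4\max(l^2,m^2)$. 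For the first part this control is inherited from Theorem~\ref{thm1}; for the second it is the only genuinely new estimate, and it is routine given the explicit expansion (\ref{mz21}).
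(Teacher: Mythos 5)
Your proposal is correct and follows essentially the same route as the paper: both reduce a vanishing real linear combination to the statement that the ratio of the two sums is real, i.e.\ their phases agree modulo $\pi$, and both then use the functional equation (\ref{mz4}) on the critical line (where $1-s=\overline{s}$) together with the asymptotics (\ref{mz21}) to show the phase difference is $l\pi - 2l^2/t+O(1/t^2)$ (respectively $(m-l)\pi+2(l^2-m^2)/t+O(1/t^2)$), which cannot be a multiple of $\pi$ in the asymptotic region. Your phrasing via $\phi_{2l,c}(t)$ and $\phi_{2l,c}(t)-\phi_{2m,c}(t)$ is just the paper's explicit $\arg\Gamma(s+2l)-\arg\Gamma(s)$ computation in different notation, and your flagged concern about higher-order corrections is implicitly handled in the paper by restricting to the asymptotic region where (\ref{mz21}) is accurate.
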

\begin{proof}
We suppose
\begin{equation}
\alpha_l C(1,4 l;s)+\beta C(0,1;s)= 0 ~{\rm with}~C(0,1;s)\neq 0.
\label{zprf6}
\end{equation}
Then
\begin{equation}
\alpha_l \frac{C(1,4 l;s)}{C(0,1;s)} =-\beta, ~~ \arg
\left[\frac{C(1,4 l;s)}{C(0,1;s)} \right]= m\pi , \label{zprf7}
\end{equation}
for $m$ an integer.

We now use the functional equations (\ref{mz4}) for $C(0,1;s)$ and
$C(1,4 m;s)$, which on the critical line with $1-s=\overline{s}$
enable their arguments to be deduced:
\begin{equation}
\arg C(0,1;s)=- [\arg \pi^s +\arg \Gamma (s) ] ~,~ \arg C(1,4 m;s)=-
[\arg \pi^s +\arg \Gamma (s+2 m) ]. \label{zprf2}
\end{equation}
Using (\ref{zprf2}) we arrive at the estimate for $s$ on the
critical line in the asymptotic region
\begin{equation}
\arg C(0,1;s)- \arg C(1,4 m;s)=m \pi -\frac{2
m^2}{t}+O(\frac{1}{t^2}).\label{zprf3}
\end{equation}

Using (\ref{zprf3}), (\ref{zprf7}) requires
\begin{equation}
\frac{2 l^2}{t}+O(\frac{1}{t^2})-l \pi=m\pi, \label{zprf8}
\end{equation}
which is not possible. This proves (\ref{zprf4}).

Consider next (\ref{zprf5}), and suppose
\begin{equation}
\alpha_l C(1,4 l;s)+\alpha_m C(1,4 m;s)= 0 ~{\rm with }~C(1,4
m;s)\neq 0. \label{zprf9}
\end{equation}
Then this requires
\begin{equation}
\arg \left[ \frac{C(1,4 l;s)}{C(1,4 m;s)}\right]=p \pi,
\label{zprf10}
\end{equation}
for an integer $p$. However, from (\ref{zprf3}),
\begin{equation}
\arg C(1,4 l;s)-\arg C(1,4 m;s)=(m-l)\pi+\frac{2
(l^2-m^2)}{t}+O(\frac{1}{t^2}). \label{zprf11}
\end{equation}
Hence (\ref{zprf9}) requires
\begin{equation}
\frac{2 (l^2-m^2)}{t}+O(\frac{1}{t^2})=(p+l-m)\pi, \label{zprf12}
\end{equation}
which is not possible if $l\neq m$. This proves (\ref{zprf5}).
\end{proof}
\begin{corollary}
The sums ${\cal C}(2,2,s)$, ${\cal S}(2,2,s)$ and ${\cal C}(4,1;s)$
have no zeros $s$ on the critical line in the asymptotic region
$t>>4$ which are not zeros of both $C(0,1;s)$ and ${\cal C}(1,4;s)$.
\label{coroll3}
\end{corollary}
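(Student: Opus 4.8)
The plan is to reduce all three sums in the statement to the two ``fundamental'' sums ${\cal C}(0,1;s)$ and ${\cal C}(1,4;s)$, after which Theorems \ref{thm1} and \ref{thm2} apply directly with $m=1$ (so that the asymptotic condition reads $t\gg 4m^2=4$).

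First I would dispose of ${\cal C}(2,2;s)$ and ${\cal S}(2,2;s)$. Since $\cos^2(2m\theta)+\sin^2(2m\theta)=1$, the defining sums (\ref{mz-1}) give ${\cal C}(2,2m;s)+{\cal S}(2,2m;s)={\cal C}(0,1;s)$, and this together with (\ref{mz3}), which reads ${\cal C}(2,2m;s)-{\cal S}(2,2m;s)={\cal C}(1,4m;s)$, yields
\[ {\cal C}(2,2m;s)=\frac{1}{2}[{\cal C}(0,1;s)+{\cal C}(1,4m;s)], \quad {\cal S}(2,2m;s)=\frac{1}{2}[{\cal C}(0,1;s)-{\cal C}(1,4m;s)]. \]
Comparing with the definitions (\ref{mz6}) and (\ref{mz6a}) shows $\widetilde{C}(2,2m;s)=\frac{\Gamma(s)}{\pi^s\sqrt{{\cal F}_{2m}(s)}}\,{\cal C}(2,2m;s)$ and likewise for $\widetilde{S}$. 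On the critical line ${\cal F}_{2m}(s)=\exp(2i\phi_{2m,c}(t))$ has unit modulus, while $\Gamma(s)/\pi^s$ is finite and never vanishes there; hence the prefactor is a nonzero finite scalar, so ${\cal C}(2,2;s)$ and $\widetilde{C}(2,2;s)$ share their critical-line zeros, as do ${\cal S}(2,2;s)$ and $\widetilde{S}(2,2;s)$. Theorem \ref{thm1} with $m=1$ then delivers the claim for these two sums.

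The only genuinely new ingredient concerns ${\cal C}(4,1;s)$, and I expect establishing its linear expansion to be the crux (though it is elementary). Applying the Chebyshev identity $\cos 4\theta=T_4(\cos\theta)=8\cos^4\theta-8\cos^2\theta+1$ term by term gives $8\,{\cal C}(4,1;s)={\cal C}(1,4;s)+8\,{\cal C}(2,1;s)-{\cal C}(0,1;s)$. The unwanted sum ${\cal C}(2,1;s)$ is then eliminated by the four-fold symmetry of the square lattice: the map $(p_1,p_2)\mapsto(-p_2,p_1)$ sends $\theta\mapsto\theta+\pi/2$ and preserves $p_1^2+p_2^2$, so ${\cal C}(2,1;s)={\cal S}(2,1;s)$; combined with ${\cal C}(2,1;s)+{\cal S}(2,1;s)={\cal C}(0,1;s)$ this forces ${\cal C}(2,1;s)=\frac{1}{2}{\cal C}(0,1;s)$. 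Substituting, I obtain the clean identity
\[ 8\,{\cal C}(4,1;s)={\cal C}(1,4;s)+3\,{\cal C}(0,1;s). \]

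Finally I would invoke Theorem \ref{thm2}, equation (\ref{zprf4}), with $l=1$, $\alpha_1=1$ and $\beta=3$ (both nonzero reals): in the asymptotic region $t\gg 4$ the combination ${\cal C}(1,4;s)+3\,{\cal C}(0,1;s)$ vanishes on the critical line only when ${\cal C}(1,4;s)=0={\cal C}(0,1;s)$. Hence any critical-line zero of ${\cal C}(4,1;s)$ in this region is necessarily a common zero of ${\cal C}(0,1;s)$ and ${\cal C}(1,4;s)$, and together with the first paragraph this proves the corollary. No single step presents real difficulty; essentially all the work lies in recognising that the three sums are rational combinations of ${\cal C}(0,1;s)$ and ${\cal C}(1,4;s)$, after which the two preceding theorems do everything.
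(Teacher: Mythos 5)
Your proof is correct and follows essentially the same route as the paper: the paper likewise reduces all three sums to linear combinations of ${\cal C}(0,1;s)$ and ${\cal C}(1,4;s)$ (citing the order-4 identities of Appendix A, which are derived there by the same Chebyshev-polynomial and lattice-symmetry arguments you give inline) and then applies Theorem \ref{thm2}, equation (\ref{zprf4}). Your only deviation is cosmetic: for ${\cal C}(2,2;s)$ and ${\cal S}(2,2;s)$ you route through Theorem \ref{thm1} via the rescaled functions $\widetilde{C},\widetilde{S}$, whereas the paper simply applies (\ref{zprf4}) with nonzero real coefficients $\pm\frac{1}{2},\frac{1}{2}$ to those same combinations.
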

\begin{proof}
We apply Theorem \ref{thm2} and equation (\ref{zprf4}). Now the sums
${\cal C}(2,2,s)$, ${\cal S}(2,2,s)$ and ${\cal C}(4,1;s)$ belong to
the system of order 4 discussed in Appendix A, which is generated by
linear combinations of ${\cal C}(0,1;s)$ and ${\cal C}(1,4;s)$. All
such combinations are of the form (\ref{zprf4}), and are thus zero
only if both ${\cal C}(0,1;s)=0$ and ${\cal C}(1,4;s)=0$.
\end{proof}

We can also simply establish a corresponding result for the
derivative $\Delta_3'(2,2m;s)$ for $s$ on the critical line.
\begin{theorem}
The derivative function $\Delta_3'(2,2m;s)$ has no zeros for
$s=1/2+i t$ in the asymptotic region $t>>4 m^2$ which are not zeros
of $\Delta_3(2,2 m;s)$ of multiple order. \label{thm4}
\end{theorem}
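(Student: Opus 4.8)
The plan is to reduce the statement to a single implication. A point on the critical line is a zero of $\Delta_3(2,2m;s)$ of multiple order precisely when both $\Delta_3(2,2m;s)$ and its derivative $\Delta_3'(2,2m;s)$ vanish there. Hence it suffices to show that in the asymptotic region $t\gg 4m^2$ every zero of $\Delta_3'(2,2m;\frac{1}{2}+\ri t)$ is automatically a zero of $\Delta_3(2,2m;\frac{1}{2}+\ri t)$; the simultaneous vanishing of the function and its first derivative then certifies a zero of order at least two.

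First I would express the complex derivative along the critical line in terms of the real quantities $\Real\Delta_3$ and $\Imag\Delta_3$. Writing $s=\frac{1}{2}+\ri t$ and computing the derivative in the $t$-direction gives $\Delta_3'(2,2m;s)=\partial_t\Imag\Delta_3(2,2m;s)-\ri\,\partial_t\Real\Delta_3(2,2m;s)$, so that $\Delta_3'=0$ forces both $\partial_t\Real\Delta_3$ and $\partial_t\Imag\Delta_3$ to vanish at that point (equivalently, by the Cauchy--Riemann equations, all four first partial derivatives vanish there).

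Next I would invoke the differentiated phase relation (\ref{mz18}),
\[
\partial_t\Imag\Delta_3(2,2m;\textstyle\frac{1}{2}+\ri t)+\tan\phi_{2m,c}(t)\,\partial_t\Real\Delta_3(2,2m;\frac{1}{2}+\ri t)=-\frac{\phi_{2m,c}'(t)}{\cos^2\phi_{2m,c}(t)}\,\Real\Delta_3(2,2m;\frac{1}{2}+\ri t).
\]
At a zero of $\Delta_3'$ the entire left-hand side vanishes, whence $[\phi_{2m,c}'(t)/\cos^2\phi_{2m,c}(t)]\,\Real\Delta_3=0$. Since the coefficient is nonzero in the asymptotic region (see below), this yields $\Real\Delta_3=0$; then the on-line relation (\ref{mz17}), $\Imag\Delta_3=-\tan\phi_{2m,c}(t)\,\Real\Delta_3$, makes $\Imag\Delta_3$ vanish as well. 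Thus $\Delta_3(2,2m;s)=0$, so the point is a common zero of $\Delta_3$ and $\Delta_3'$, i.e. a zero of $\Delta_3$ of multiple order, as required.

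The one place requiring care, and the reason the hypothesis $t\gg 4m^2$ is imposed, is the nonvanishing of $\phi_{2m,c}'(t)/\cos^2\phi_{2m,c}(t)$, which is exactly what permits the passage from the vanishing left-hand side to $\Real\Delta_3=0$. Here I would use the asymptotic expansion (\ref{mz21}): on the critical line $\phi_{2m,c}(t)\simeq 2m^2/t$ is small and strictly monotone, so $\cos^2\phi_{2m,c}(t)\to 1$ stays bounded away from zero while $\phi_{2m,c}'(t)\simeq -2m^2/t^2\neq 0$. This asymptotic estimate also explains why the region must be restricted: it excludes exceptional points such as $(\frac{1}{2},\sqrt{3}/2)$ for $m=1$, where $\phi_{2m,c}=\pi/2$, $\cos\phi_{2m,c}=0$, and $\Real\Delta_3$ may vanish in isolation. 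Everything else in the argument is purely algebraic, so this asymptotic nonvanishing is the only genuine obstacle.
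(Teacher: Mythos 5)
Your proof is correct and follows essentially the same route as the paper's: the paper differentiates the factored form (\ref{mz15}) and reads off $\frac{d}{dt}[|\widetilde{\cal C}|^2-|\widetilde{\cal S}|^2]=0$ from the imaginary part and $|\widetilde{\cal C}|^2-|\widetilde{\cal S}|^2=0$ from the real part, which, since $|\widetilde{\cal C}|^2-|\widetilde{\cal S}|^2=\Real\Delta_3$ on the critical line, is precisely your combination of (\ref{mz17}) and (\ref{mz18}). In both arguments the sole analytic input is the nonvanishing of $\phi_{2m,c}'(t)/\cos^2\phi_{2m,c}(t)$ in the asymptotic region $t\gg 4m^2$, justified identically via (\ref{mz21}).
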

\begin{proof}
We recall the result (\ref{mz15}) for $s=1/2+i t$:
\begin{equation}
\Delta_3(2,2m;1/2+ i t)=[1-\ri \tan(\phi_{2 m,c
}(t))][|\widetilde{\cal C}(2,2 m;1/2+i t)|^2-|\widetilde{\cal S}(2,2
m;1/2+i t)|^2], \label{mz15aa}
\end{equation}
where $\phi_{2 m,c }(t))$ is a real-valued function. Taking the
derivative of $\Delta_3(2, 2m; 1/2+i t)$ with respect to $1/2+i t$
and separating into real and imaginary parts, we find
\begin{eqnarray}
\Delta_3 '(2,2m;1/2+ i t)&=&-\left\{ \frac{\phi_{2
m,c}'(t)}{\cos^2\phi_{2m,c}(t)}[|\widetilde{\cal C}(2,2 m;1/2+i
t)|^2-|\widetilde{\cal S}(2,2 m;1/2+i t)|^2]\right. \nonumber
\\
& &\left.  +\tan \phi_{2m,c}(t)\frac{d}{dt} [|\widetilde{\cal C}(2,2
m;1/2+i t)|^2-|\widetilde{\cal S}(2,2 m;1/2+i t)|^2] \right\}
\nonumber \\
&& -i \frac{d}{d t} [|\widetilde{\cal C}(2,2 m;1/2+i
t)|^2-|\widetilde{\cal S}(2,2 m;1/2+i t)|^2].\label{mz15ab}
\end{eqnarray}
For $\Delta_3 '(2,2m;1/2+ i t)$ to be zero, we find from the
imaginary part  of (\ref{mz15ab}) that
\begin{equation}
\frac{d}{d t} [|\widetilde{\cal C}(2,2 m;1/2+i
t)|^2-|\widetilde{\cal S}(2,2 m;1/2+i t)|^2]=0. \label{mz15ac}
\end{equation}
We use (\ref{mz15ac}) in the real part of (\ref{mz15ab}), and,
assuming $\tan \phi_{2m,c}(t)$ is non-singular and $\phi_{2
m,c}'(t)\neq 0$, we complement (\ref{mz15ac}) with
\begin{equation}
[|\widetilde{\cal C}(2,2 m;1/2+i t)|^2-|\widetilde{\cal S}(2,2
m;1/2+i t)|^2]=0. \label{mz15ad}
\end{equation}
Since both the assumptions we have just mentioned are true in the
asymptotic region, we have proved the theorem.
\end{proof}

\begin{figure}
\begin{center}
\includegraphics[width=6cm]{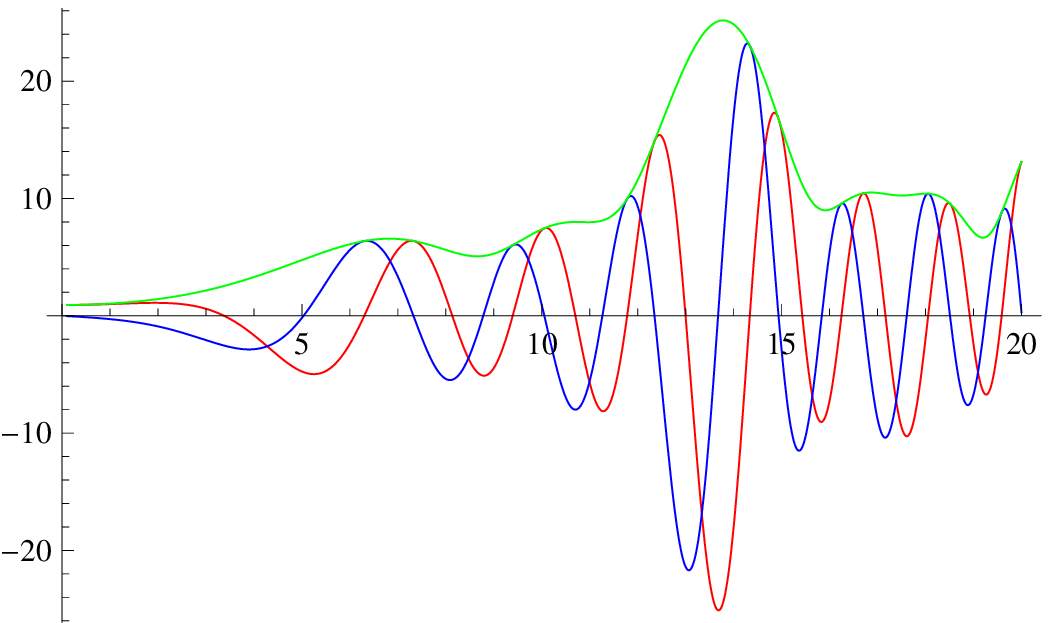}~~\includegraphics[width=6cm]{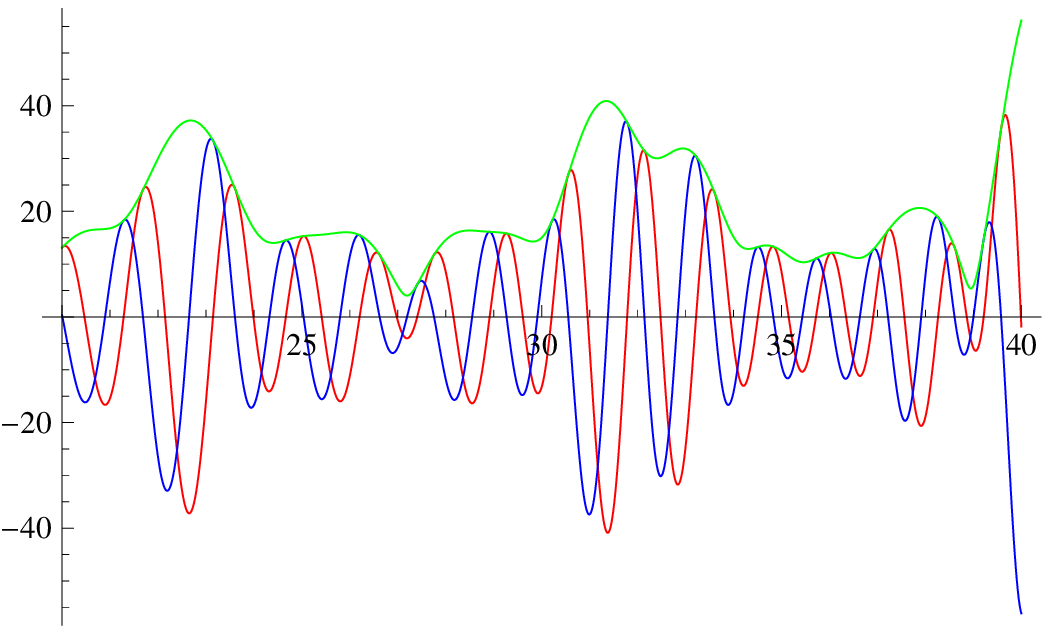}~~
\end{center}
\caption{ The real part (red), imaginary part (blue) and modulus
(green) of  $\Delta_3'(2,2;1/2+\ri t)$, with  $t \in [0.1, 20]$
(left) and $[20,40]$ (right).} \label{derivfig}
\end{figure}

We provide plots of $\Delta_3'(2,2;1/2+\ri t)$ in Fig.
\ref{derivfig}. These show that the derivative is non-zero in the
non-asymptotic region as well as in the asymptotic region. The
numerical value of $\Delta_3'(2,2;1/2)$ is 0.918604.

We next consider the properties of trajectories of constant phase
which follow from the assumption that the Riemann Hypothesis holds
for $\Delta_3 (2,2 m;s)$. As can be seen from Fig. \ref{argdelt},
lines of constant phase between zeros of $\Delta_3 (2,2 ;s)$ do not
in general cross the critical line, but asymptote towards it, and
their configuration is arranged about a zero of the derivative
$\partial|\Delta_3(2,2;\sigma +i t)|/\partial t$ or equivalently of
$\partial \arg \Delta_3(2,2;\sigma +i t)/\partial \sigma$.

\begin{figure}
\begin{center}
\includegraphics[width=6cm]{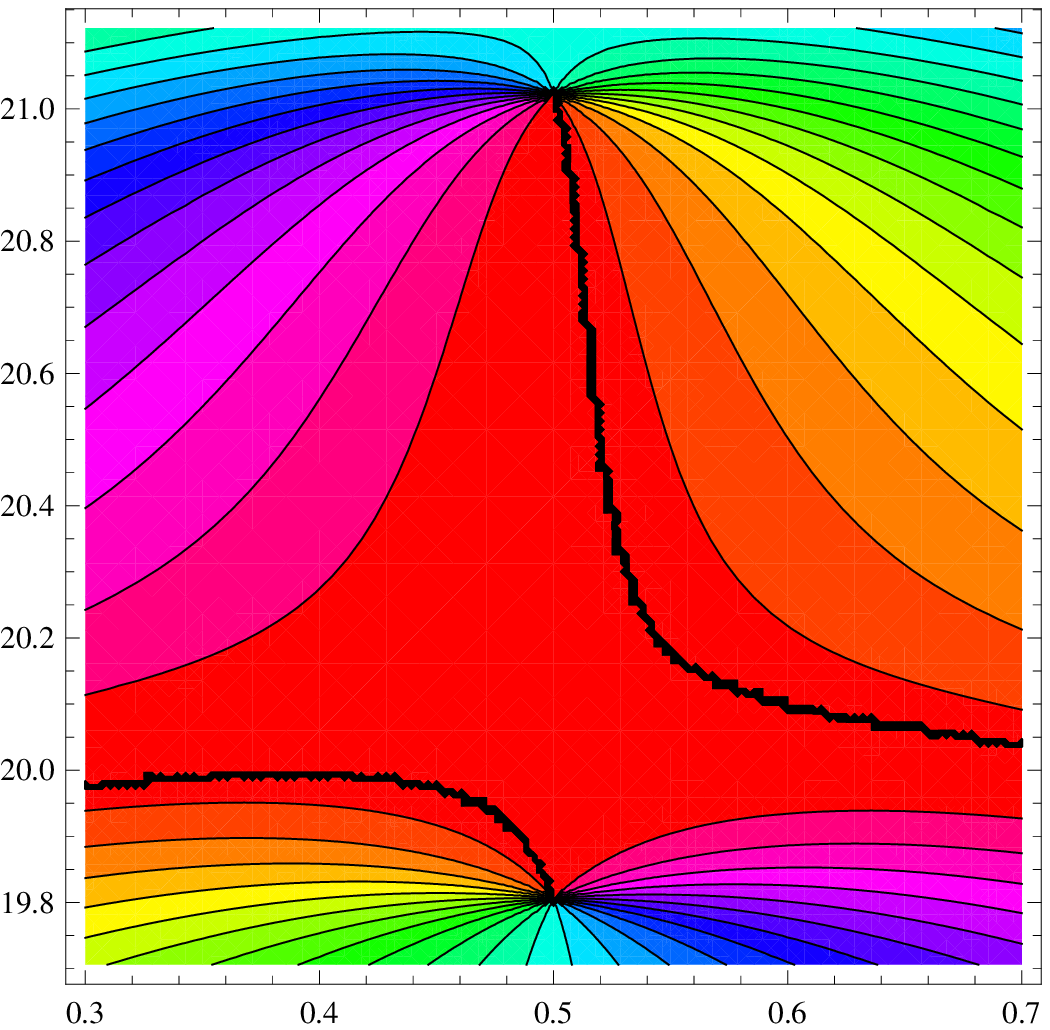}~~\includegraphics[width=6cm]{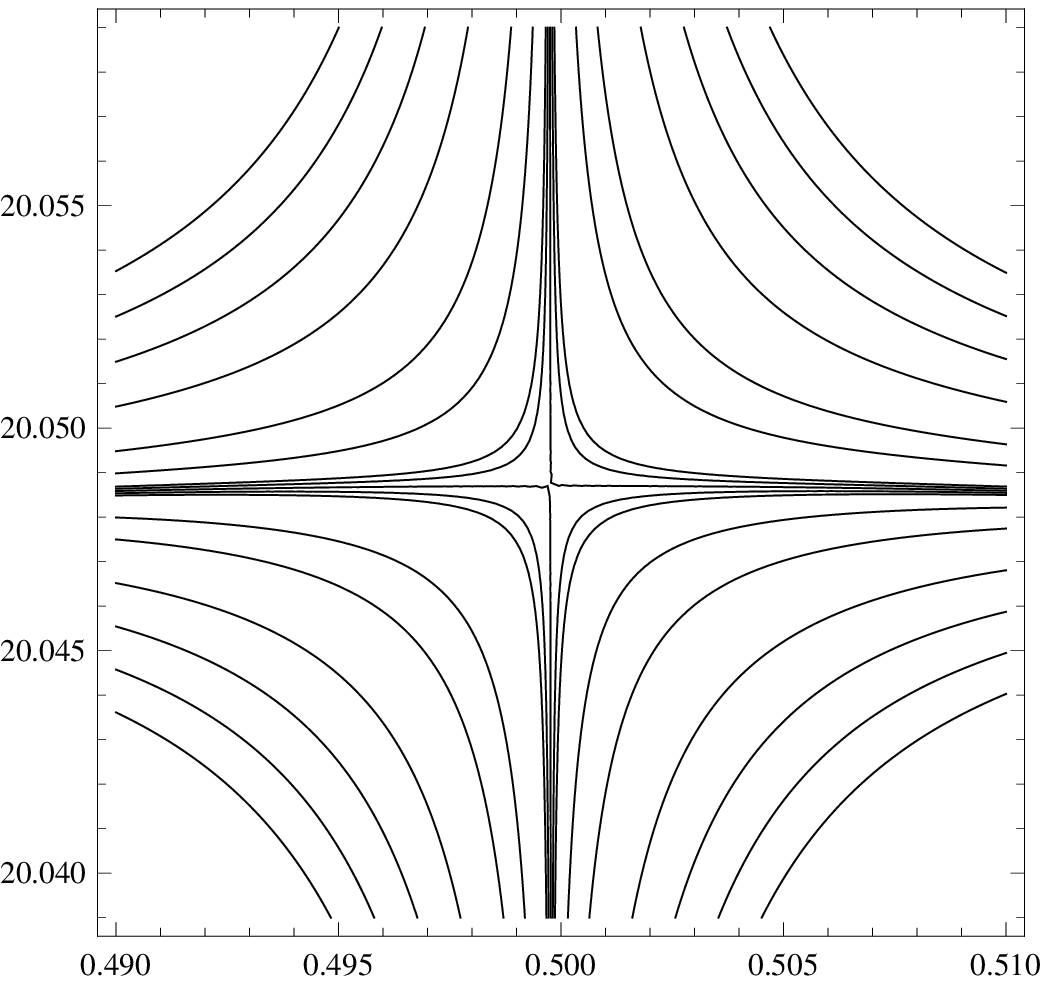}~~
\end{center}
\caption{ Contours of constant phase for $\Delta_3 (2,2;s)$ in the
region around its zero at $s=0.5+i 19.80599$, with a second zero
evident at $s=0.5+ i 21.02204$. The plot at right shows detail
around the zero of the derivative  $\partial
\log|\Delta_3(2,2;\sigma +i t)|/\partial t$. } \label{argdelt}
\end{figure}

\begin{theorem}
Given the Riemann hypothesis holds for $\Delta_3 (2,2 m;s)$, then
lines of constant phase coming from $\sigma=-\infty$ can only cut
the critical line at a zero of $\Delta_3 (2,2 m;s)$ or of $\partial
\arg \Delta_3(2,2;\sigma +i t)/\partial \sigma$. Those lines of
constant phase coming from $\sigma=\infty$  cutting  the critical
line at a point which is not a zero of $\Delta_3 (2,2 m;s)$ or of
$\partial \arg \Delta_3(2,2;\sigma +i t)/\partial \sigma$ must curve
back and pass through a zero of $\Delta_3 (2,2 m;s)$. \label{thm5}
\end{theorem}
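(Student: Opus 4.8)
The plan is to exploit the fact that, under the Riemann hypothesis for $\Delta_3(2,2m;s)$, all its zeros lie on the critical line, so that in each open half-plane $\sigma<1/2$ and $\sigma>1/2$ (and above the real-axis poles) $\log\Delta_3$ is single-valued and analytic and $v:=\arg\Delta_3$ is harmonic. The lines of constant phase are then the level curves of $v$, which are smooth $1$-manifolds except where $\nabla v=0$, i.e. where $\Delta_3'=0$. Since $(\log\Delta_3)'=\partial_\sigma\log|\Delta_3|+\ri\,\partial_\sigma\arg\Delta_3$, the Cauchy--Riemann equations give $\partial_\sigma v=-\partial_t\log|\Delta_3|$, so the critical points of $v$ on the line are exactly the zeros of $\partial_\sigma\arg\Delta_3$ named in the statement; between two consecutive zeros of $\Delta_3$ the modulus $|\Delta_3|$ rises from $0$ to a single maximum (a saddle of $\log|\Delta_3|$) and falls back to $0$, so $\partial_\sigma v$ is negative below that saddle, vanishes at it, and is positive above it. This sign information is what will fix the side from which a level curve approaches the line.

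Next I would assemble three boundary/asymptotic ingredients. First, (\ref{mz15}) gives $v=n\pi-\phi_{2m,c}(t)$ on $\sigma=1/2$, a value lying strictly between consecutive multiples of $\pi$ away from the zeros, with $\partial_t v>0$ in the asymptotic region and a jump of $\pi$ through each simple zero. Second, on the real axis every factor in (\ref{mz13}) is real, so $v\equiv 0\pmod\pi$ there, and hence no level curve with $\theta_0\not\equiv 0\pmod\pi$ can reach the real axis. Third, Stirling's formula applied to the prefactor in (\ref{ev5}) (as in (\ref{ev6})), together with the functional equation (\ref{mz14}), yields the crucial asymmetry $v\to+\infty$ as $\sigma\to+\infty$ and $v\to-\infty$ as $\sigma\to-\infty$ at fixed $t>0$, while $v\to+\infty$ as $t\to+\infty$; and the reflection identity (\ref{arg1}), $v(s)+v(1-\bar s)=-\arg{\cal F}_{2m}(s)$, pairs the phase on the two sides of the line with a sign change and a vanishingly small correction.

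With these in hand, the argument runs as follows. A level curve that runs out to $\sigma\to-\infty$ must keep $v=\theta_0$ finite while $v\to-\infty$ at every fixed positive height, so it is forced down onto the real axis as $\sigma\to-\infty$, whence $\theta_0\equiv 0\pmod\pi$; such curves can therefore meet $\sigma=1/2$ only where $v\equiv 0\pmod\pi$, which under the hypothesis are the zeros of $\Delta_3$, the degenerate crossings through $\partial_\sigma v=0$ being the remaining possibility. This gives the first assertion. For the second, take a curve entering from large positive $\sigma$ and cutting the line at a regular point $s_0=1/2+\ri t_0$, where $\theta_0=n\pi-\phi_{2m,c}(t_0)\not\equiv 0\pmod\pi$; reading off the level-curve tangent $(-\partial_t v,\partial_\sigma v)$ and feeding in the sign of $\partial_\sigma v$ from the saddle analysis shows that the continuation into $\sigma<1/2$ heads toward the adjacent zero. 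That continuation cannot escape to $\sigma=-\infty$ or to the real axis (both would force $\theta_0\equiv 0\pmod\pi$), nor to $t=+\infty$ (where $v\to+\infty$), and by harmonicity it cannot terminate in the interior, so the only admissible boundary endpoint left is a zero of $\Delta_3$, and the curve must curve back and pass through one.

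The step I expect to be the main obstacle is the global topological bookkeeping in this last paragraph: rigorously excluding the "corner" escape in which a left-going curve runs simultaneously to $\sigma\to-\infty$ and $t\to+\infty$, a regime where the one-variable asymptotics above do not directly control $v$, and confirming that the returning curve genuinely lands on a zero rather than accumulating or spiralling. The clean way to close this is to use the winding number $+1$ of $\Delta_3$ about each simple zero, so that exactly one level curve of each phase value emanates there, together with the separatrix structure at the saddles (where $\partial_\sigma v=0$), and then to balance emanating against terminating curves by applying the argument principle on a large rectangle straddling the critical line. Matching these counts, rather than any single estimate, is the delicate part; everything else reduces to the elementary sign and asymptotic facts catalogued above.
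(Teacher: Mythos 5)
Your strategy—treating $\arg\Delta_3$ as a harmonic function in each half-plane under the assumed Riemann hypothesis and controlling its level curves by boundary and asymptotic data—is in the same spirit as the paper's proof, but your argument contains a genuine error at the step that does all the work. You claim that a level curve running out to $\sigma\to-\infty$ is forced down onto the real axis and that this forces its phase to satisfy $\theta_0\equiv 0\pmod\pi$. The first half is right; the conclusion does not follow. The curve only asymptotes to the axis and never attains it, and the transverse phase gradient $\partial\arg\Delta_3/\partial t$ grows like $2\log(|\sigma|/\pi)$ as $\sigma\to-\infty$, so a curve at height $t(\sigma)\sim|\theta_0|/(2\log(|\sigma|/\pi))$ keeps any fixed phase $\theta_0<0$ while hugging the axis. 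Indeed the paper explicitly assigns the phases $-\pi/2$, $-\pi$, $\pi/2$, $0$, $-\pi/2,\ldots$ to the contours coming in from $\sigma=-\infty$ in Fig.~\ref{fig2} (see the discussion following (\ref{arg1})): curves of phase $\not\equiv 0\pmod\pi$ \emph{do} arrive from $\sigma=-\infty$. Since your first assertion rests on "left-coming curves have $\theta_0\equiv0\pmod\pi$", and your second assertion rests on excluding escape to $\sigma=-\infty$ on exactly the same grounds, both halves of the proof collapse.

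There is also a secondary problem: your sign analysis of $\partial\arg\Delta_3/\partial\sigma$ on the critical line assumes that $|\Delta_3|$ has a \emph{single} maximum between consecutive zeros. That is precisely Theorem~\ref{thm6}, which the paper proves \emph{after}, and by means of, the present theorem; a blind proof would need an independent argument for it to avoid circularity. The paper's actual mechanism is different and avoids both difficulties: it compares \emph{orderings} of phase rather than values. The phases of the one-parameter family of curves arriving from $\sigma=-\infty$ decrease as $t$ increases, while the phase on the critical line increases with $t$ (from (\ref{mz15}), (\ref{mz21}) and the accumulation of $\pi$ jumps at zeros). A continuum of left-coming curves therefore cannot cross the line at generic points—crossings can occur only where the phase is discontinuous (a zero of $\Delta_3$) or where the family degenerates ($\partial\arg\Delta_3/\partial\sigma=0$); and a right-coming curve that does cross at a generic point cannot continue leftward, since it would then have to belong to the left family with the wrong ordering in $t$, so it must turn back, and the only admissible exit is through a zero. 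Replacing your false phase-quantization step with this ordering argument would repair the proof.
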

\begin{proof}
The assumption of the Riemann hypothesis holding enables us to say
that lines of constant phase coming from $\sigma=-\infty$ do not
intersect before the critical line. Their phase monotonically
decreases as $t$ increases in $\sigma<1/2$, while the phase of
$\Delta (2,2 m;1/2+i t)$ monotonically increases with $t$. Thus,
groups of constant phase lines coming in from $\sigma=-\infty$
cannot cut the critical line, except at a zero of $\Delta (2,2
m;s)$. Isolated trajectories passing through points where $\partial
\arg \Delta(2,2;\sigma +i t)/\partial \sigma=0$ are allowed. Such
special trajectories separate lines of constant phase which curve up
as $\sigma \rightarrow 1/2$ and they move towards a zero of $\Delta
(2,2 m;s)$, from trajectories which curve down as $\sigma
\rightarrow 1/2$.

Now consider sets of lines of constant phase approaching
$\sigma=1/2$ from the right. Their phase increases with $t$, but if
they were able to cross the line $\sigma=1/2$ and progress towards
$\sigma=-\infty$ it would have to decrease with $t$. Thus they must
turn and run alongside $\sigma=1/2$, with lines above the special
trajectory curving up towards a zero of $\Delta (2,2 m;s)$, and
those below curving down towards a zero. Alternatively, they can cut
the critical line at a point which is not a zero of $\Delta_3 (2,2
m;s)$ or of $\partial \arg \Delta(2,2;\sigma +i t)/\partial \sigma$.
Such trajectories supply the lines of constant argument required for
generic points on the critical line, and must return back to cut the
critical line at a zero of $\Delta (2,2 m;s)$. The region where they
cross into $\sigma<1/2$ is bounded on the left by a line of constant
phase connecting a zero of $\Delta_3 (2,2 m;s)$ with an adjacent
zero of $\partial \arg \Delta(2,2;\sigma +i t)/\partial \sigma$.
\end{proof}

Note that the region where zeros of  $\partial \arg
\Delta(2,2;\sigma +i t)/\partial \sigma$ exist is confined to the
neighbourhood of the critical line, since this partial derivative
being zero corresponds to a horizontal segment on a line of constant
phase (see, for example, Fig. \ref{fig2}). However, we may use the
line of constant phase passing through the point on the critical
line where  $\partial \arg \Delta(2,2;\sigma +i t)/\partial
\sigma=0$ as the separator between lines of constant phase going to
the zero above this line from those going to the zero below (given
the Riemann hypothesis is assumed to hold).

\begin{theorem}
Given the Riemann hypothesis holds for $\Delta_3 (2,2 m;s)$, then
there exists one and only one zero of $\partial \arg
\Delta_3(2,2;\sigma +i t)/\partial \sigma$ on the critical line
between two successive zeros of $\Delta_3 (2,2 m;s)$. \label{thm6}
\end{theorem}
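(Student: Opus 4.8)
The plan is to translate the statement about $\partial\arg\Delta_3/\partial\sigma$ into one about the modulus $|\Delta_3|$ along the critical line, and then to combine a soft existence argument with the geometric ``separator'' picture of Theorem~\ref{thm5} for uniqueness. Since $\log\Delta_3(2,2m;s)$ is analytic wherever $\Delta_3\neq0$, the Cauchy--Riemann equations (already invoked in (\ref{mz19})--(\ref{mz20})) give, with $s=\sigma+\ri t$,
\[
\frac{\partial}{\partial\sigma}\arg\Delta_3(2,2m;s)=-\frac{\partial}{\partial t}\log|\Delta_3(2,2m;s)|.
\]
Hence, on the critical line, a zero of $\partial\arg\Delta_3/\partial\sigma$ is exactly a stationary point of the real function $g(t)=\log|\Delta_3(2,2m;\frac{1}{2}+\ri t)|$, i.e. a critical point of the modulus along $\sigma=1/2$.

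For existence I would argue as follows. Let $t_1<t_2$ be two successive zeros of $\Delta_3$ on the critical line (the Riemann hypothesis places all zeros there). From (\ref{mz15}) the factor $1-\ri\tan\phi_{2m,c}(t)$ never vanishes, so in the asymptotic region (where $\cos\phi_{2m,c}\neq0$ and $\Delta_3$ has no poles) the zeros of $\Delta_3$ coincide with the zeros of the real quantity $|\widetilde{\cal C}|^2-|\widetilde{\cal S}|^2$, and near each of them $g(t)\to-\infty$. Thus $g$ is smooth on the open segment $t\in(t_1,t_2)$ and tends to $-\infty$ at both ends, so it attains an interior maximum at some $t^\ast$ with $g'(t^\ast)=0$; by the displayed identity $t^\ast$ is a zero of $\partial\arg\Delta_3/\partial\sigma$. (Equivalently, near a simple zero one has $\Delta_3'/\Delta_3\sim -\ri/(t-t_1)$, so $g'(t)=-\Imag[\Delta_3'/\Delta_3]\to+\infty$ as $t\to t_1^{+}$ and $\to-\infty$ as $t\to t_2^{-}$, forcing a vanishing in between.)

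The main work is uniqueness, for which I would use the separator structure established for Theorem~\ref{thm5} and recorded in the remark preceding the present statement. Assuming the Riemann hypothesis, every point $P\in(t_1,t_2)$ with $\partial\arg\Delta_3/\partial\sigma=0$ lies on a special line of constant phase separating the constant-phase lines that curve up to the zero above $P$ (here $t_2$) from those that curve down to the zero below $P$ (here $t_1$). Suppose, for contradiction, there were two such points with $t_1<t_{P_1}<t_{P_2}<t_2$, carrying separators $\mathcal{L}_1$ and $\mathcal{L}_2$. A line of constant phase running alongside the critical line at a height strictly between $P_1$ and $P_2$ lies above $\mathcal{L}_1$ and below $\mathcal{L}_2$; by $\mathcal{L}_1$ it must curve up and terminate at $t_2$, while by $\mathcal{L}_2$ it must curve down and terminate at $t_1$. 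A single constant-phase line cannot terminate at two distinct zeros, a contradiction. Hence at most one such point lies between $t_1$ and $t_2$, and combined with existence this yields exactly one.

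I expect the hard part to be making this uniqueness step fully rigorous: one must justify, from the Riemann hypothesis and the monotone increase of $\arg\Delta_3(2,2m;\frac{1}{2}+\ri t)$ with $t$, that every zero of $\partial\arg\Delta_3/\partial\sigma$ strictly between consecutive zeros of $\Delta_3$ really is a separator of the stated type, and that the routing ``above $\to t_2$, below $\to t_1$'' is globally consistent so that the crossing of the two routings is a genuine contradiction rather than an artefact of the sketch. As a fallback, uniqueness could instead be obtained computationally by showing $g''(t^\ast)<0$ at every critical point, so that all critical points are strict maxima and therefore unique; here one writes $g(t)=\log\bigl||\widetilde{\cal C}|^2-|\widetilde{\cal S}|^2\bigr|-\log|\cos\phi_{2m,c}(t)|$ from (\ref{mz15}) and uses the asymptotics (\ref{mz21}) of $\phi_{2m,c}$. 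I would resort to this only if the topological argument proved awkward to formalise.
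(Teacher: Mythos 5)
Your existence half is exactly the paper's argument: on the critical line $\log|\Delta_3|$ tends to $-\infty$ at the two successive zeros and is smooth between them, so it has an interior turning point, which by the Cauchy--Riemann identity $\partial\arg\Delta_3/\partial\sigma=-\partial\log|\Delta_3|/\partial t$ is a zero of $\partial \arg \Delta_3(2,2;\sigma +\ri t)/\partial \sigma$. No issues there.

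The uniqueness half has a genuine gap, and it is the one you flag yourself. Your contradiction is that a constant-phase line lying between the two separators $\mathcal{L}_1$ and $\mathcal{L}_2$ would have to terminate at $t_2$ (being above $\mathcal{L}_1$) and also at $t_1$ (being below $\mathcal{L}_2$). But ``above a separator implies terminates at the zero above'' is not the local separator property established in Theorem \ref{thm5}; it is a global routing statement whose justification requires that an upward-curving line can travel unobstructed to the next zero. In the hypothetical two-separator configuration this is precisely what fails: lines trapped between $\mathcal{L}_1$ and $\mathcal{L}_2$ curve up (the local property at $\mathcal{L}_1$) but cannot cross $\mathcal{L}_2$, so they reach neither zero, and no ``conflict of destinations'' ever arises. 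Invoking the global routing statement there is circular, since it holds only when the separator is unique --- the very thing being proved. The paper's proof supplies the mechanism you are missing: the trapped lines can cross neither separator nor the critical line, hence must head back to $\sigma=-\infty$; but in the far left half-plane $\arg\Delta_3(2,2 m;s)$ varies monotonically with $t$ (the counterpart, via (\ref{arg1}) and (\ref{mz21}), of the monotonic increase of the phase on the critical line), so a single line of constant phase cannot reach $\sigma=-\infty$ at two different heights. That monotonicity-at-$\sigma=-\infty$ step is what closes the contradiction, and your proof never invokes it. Your fallback (showing $g''(t^\ast)<0$ at every critical point) is not a viable repair either: neither (\ref{mz15}) nor the asymptotics (\ref{mz21}) controls the second derivative of $\log\bigl||\widetilde{\cal C}|^2-|\widetilde{\cal S}|^2\bigr|$ along the critical line, so pointwise concavity cannot be established from the ingredients at hand, and the Riemann hypothesis supplies no such local information.
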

\begin{proof}
We consider the analytic function
\begin{equation}
\log \Delta_3(2,2 m;s)=\log |\Delta_3(2,2 m;s)| + i \arg\Delta_3(2,2
m;s). \label{pd1}
\end{equation}

The real part of this function goes to $-\infty$ at any zero of
$\Delta_3(2,2 m;s)$, and increases away from these logarithmic
singularities. It must have at least one turning point between
successive zeros. By the Cauchy-Riemann equations, such a turning
point is a zero of $\partial \arg \Delta_3(2,2;\sigma +i t)/\partial
\sigma$.

Next, suppose there two or more zeros of $\partial \arg
\Delta(2,2;\sigma +i t)/\partial \sigma$ between successive zeros
$s_0=1/2+i t_0$ and $s_1=1/2+ i t_1$. Denote the upper two of such
derivative zeros by $s_*=1/2+i t_*$ and $s_{**}=1/2+i t_{**}$. As we
have seen, each of these has constant phase lines coming from
$\sigma=-\infty$ and passing through it, around which constant phase
trajectories reverse their course. Those above $s_*$ curve up to
$s_1$ as they approach the critical line, while those below it curve
down. They cannot cross the constant phase line passing through
$s_{**}$, nor can they cross the critical line. They must then head
back to $\sigma=-\infty$, where they will breach the monotonic
nature of the variation of $\arg\Delta_3(2,2 m;s)$ with $t$. Thus,
this situation cannot arise.
\end{proof}

We conclude this section with an investigation of the structure of
lines of constant phase which cut the critical line and turn back to
$\sigma=\infty$ thereafter. We start with the expansion of $\arg
\Delta_3(2,2;\sigma +i t)$ around a point $s_*=\sigma_*+i t_*=1/2+i
t_*$ in the asymptotic region where $\partial \arg
\Delta_3(2,2;\sigma +i t)/\partial \sigma=0$, which is of the form:
\begin{eqnarray}
\arg \Delta_3(2,2;\sigma +i t)&=&\arg \Delta_3(2,2;1/2 +i
t_*)+\frac{2 m^2}{t_*^3}(\sigma-1/2)^2 - \frac{2
m^2}{t_*^3}(t-t_*)^2\nonumber \\& & +\frac{2 m^2}{t_*^2}(t-t_*)
-\left[\frac{\partial^2}{\partial t^2} \log |\Delta_3(2,2;\sigma +i
t)| \right]_{s=s_*} \nonumber \\
& & (\sigma-\sigma_*)(t-t_*)+\ldots . \label{hyp1}
\end{eqnarray}
Here we have employed the asymptotic estimates for $\arg
\Delta_3(2,2;1/2+i t)$ based on (\ref{mz21}).  The corresponding
trajectories of constant phase may be shown to be rectangular
hyperbolae, with  their centre at
\begin{eqnarray}
\sigma-1/2&=&\frac{1}{\frac{1}{4} \left[\frac{\partial^2}{\partial
t^2} \log |\Delta_3(2,2;\sigma +i t)| \right]_{s=s_*}^2+\frac{4
m^4}{t_*^6}} \left(\frac{m^2}{2 t_*^2}\right)
\left[\frac{\partial^2}{\partial t^2} \log |\Delta_3(2,2;\sigma +i
t)| \right]_{s=s_*}  ,\nonumber \\
t-t_*&= &\frac{1}{\frac{1}{4} \left[\frac{\partial^2}{\partial t^2}
\log |\Delta_3(2,2;\sigma +i t)| \right]_{s=s_*}^2+\frac{4
m^4}{t_*^6}} \left(\frac{2 m^4}{ t_*^5}\right).\label{hyp2}
\end{eqnarray}
The second derivative factor in (\ref{hyp2}) is always negative, and
tends to be much larger than the terms involving powers of $1/t_*$.
Thus, we see that the centre of the hyperbolic trajectories of
constant phase will always lie to the left of the critical line,
with its ordinate very close to $t_*$. This displacement of the
centre into $\sigma<1/2$ creates the region in which phase lines can
cut through $\sigma=1/2$ and return to $\sigma>1/2$ via passage
through a zero of $\Delta_3(2,2;s)$. Note that the centre
corresponds to a point at which two lines of constant phase
intersect; thus, it must have derivatives of phase along two
independent lines which are zero. It therefore is a point at which
both $\partial \arg \Delta_3(2,2;\sigma +i t)/\partial t$ and
$\partial \arg \Delta_3(2,2;\sigma +i t)/\partial \sigma$ are zero.
From the Cauchy-Riemann equations, it also is a point at which
$\partial \log |\Delta_3(2,2;\sigma +i t)|/\partial t$ and $\partial
\log |\Delta_3(2,2;\sigma +i t)|/\partial \sigma$ are zero- i.e., it
is a point of extremum for both amplitude and phase.

These hyperbolic centre points are locations at which $\Delta_3
(2,2;s)'=0$. The fact that they lie to the left of $\Re (s)=1/2$ is
is similar to the property proved by Speiser (1934) for $\zeta(s)$.

\begin{figure}
\begin{center}
\includegraphics[width=6cm]{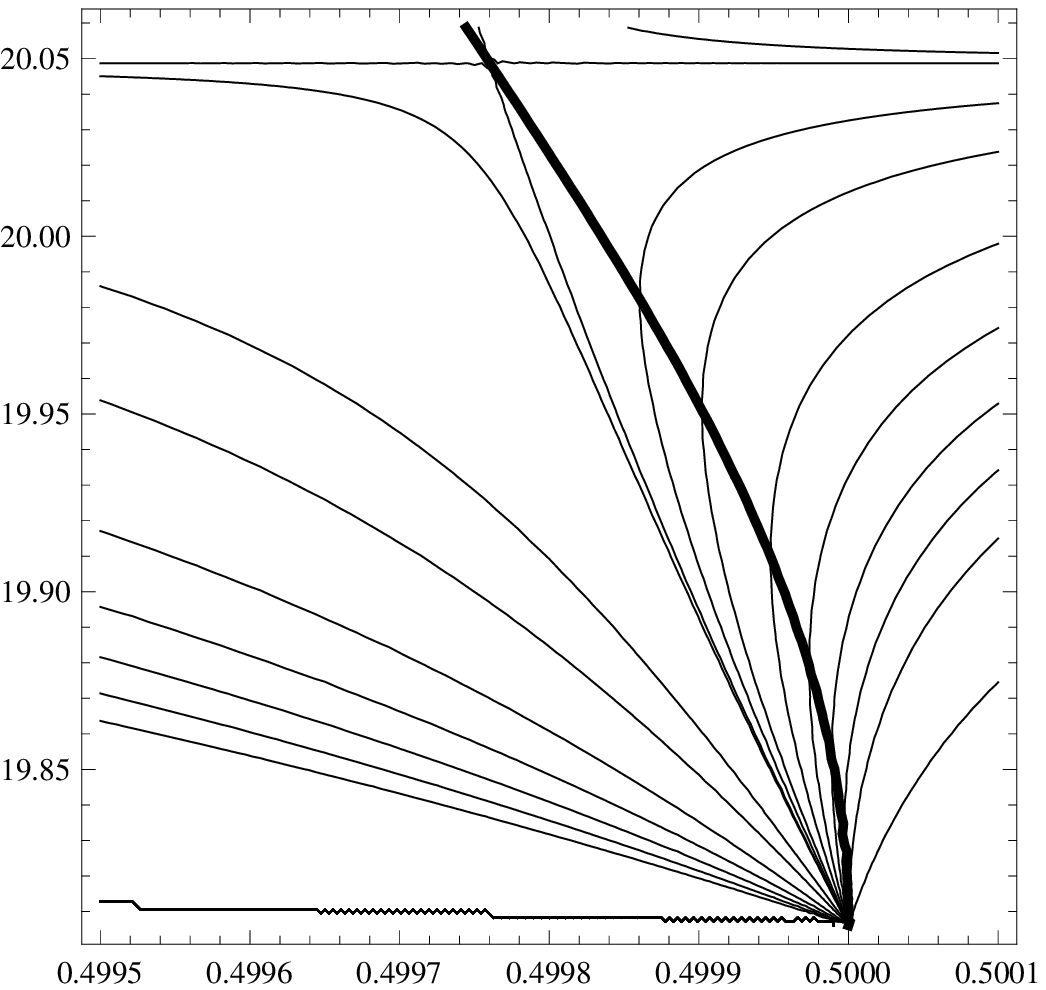}~~\includegraphics[width=6cm]{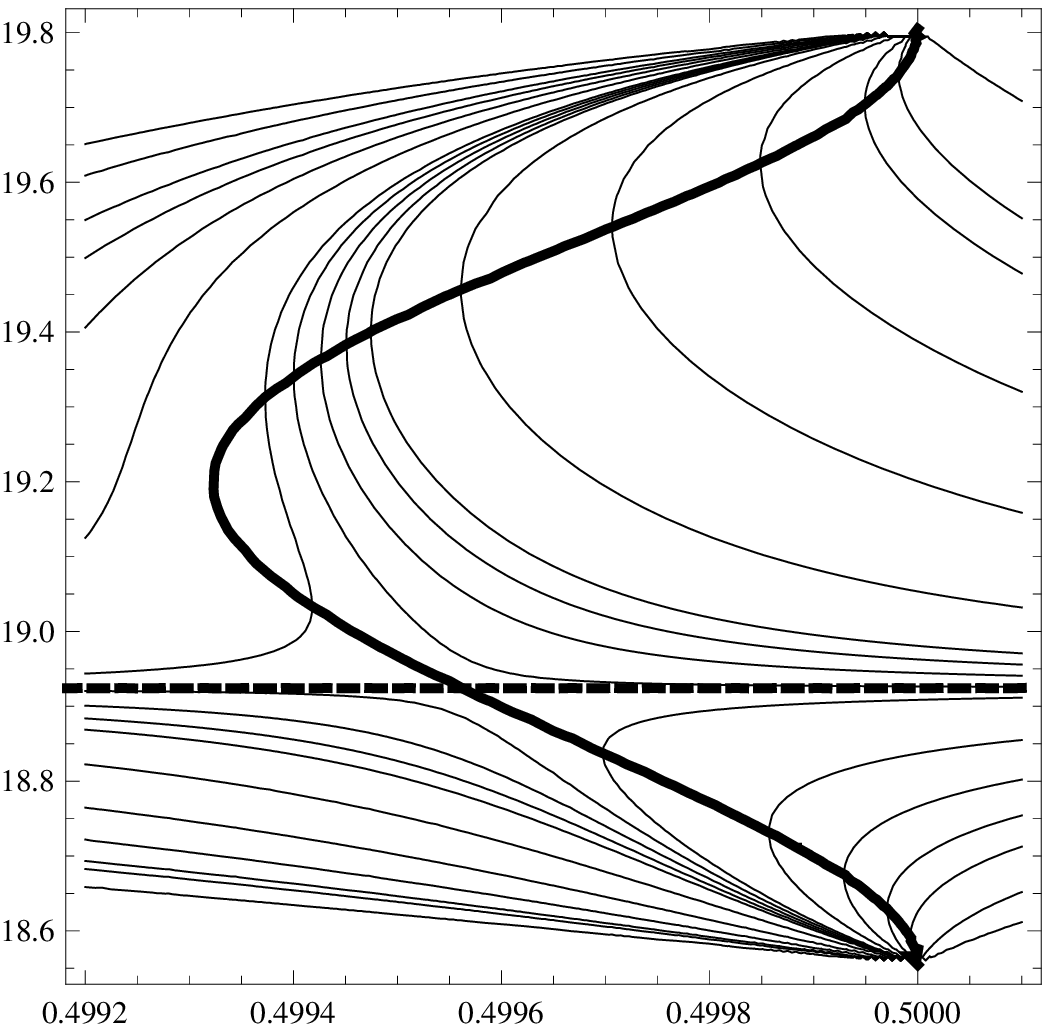}
\end{center}
\caption{ Contours of constant phase for $\Delta_3 (2,2;s)$ in the
region around its zero at $s=0.5+i 19.80599$. The thick line is the
contour on which $\partial \arg \Delta_3(2,2;\sigma +i t)/\partial
t$ is zero. (a)The fine lines correspond to the phases 3.04,3.04076,
3.0410, 3.0412, 3.0414, 3.0416, 3.0418, 3.0419, 3.04198, \ 3.042,
3.0425, 3.043, 3.044, 3.045, 3.046, 3.047 and 3.048. (b) The thick
dashed line is the contour on which $\partial \log
|\Delta_3(2,2;\sigma +i t)|/\partial t$ is zero. The contours are
-0.1090, -0.1080, -0.1072, -0.1068, -0.1064, -0.1060, -0.1056,
-0.1055, -0.1054, -0.1053, -0.1052, -0.1048, -0.104, -0.103, -0.102,
-0.1015 and -0.10.} \label{argdeltfine}
\end{figure}

The curves of constant phase of $\Delta_3 (2,2;s)$ given in Fig.
\ref{argdeltfine} illustrate some more of their   general
characteristics in the neighbourhood of zeros.  For Fig.
\ref{argdeltfine} (a), at the top of the figure, we see the centre
of the hyperbolic phase curves; through this centre passes a curve
on which $\partial \arg \Delta_3(2,2;\sigma +i t)/\partial t=0$.
This curve connects the hyperbolic centre to the zero of $\Delta_3
(2,2;s)$ below it, and also continues  to the zero above; at each
zero it is tangent to the $t$ axis. The curve marks points where the
constant phase lines have vertical slope. The lowest line of
constant phase on the right has vertical slope when it arrives at
the zero of $\Delta_3 (2,2;s)$; it corresponds to 3.04076, which is
the value given by (\ref{mz15}) for the phase on the critical line
just above the zero. A second important line has the phase 3.04198,
which is the phase corresponding to the centre of the hyperbola.
This line again connects the hyperbolic centre to the zero. All
lines whose phase lies between these values come in from the right,
cross the zero line of $\partial\arg \Delta_3(2,2;\sigma +i
t)/\partial t$, and curve back to pass through the zero of $\Delta_3
(2,2;s)$. Where they lie to the left of the zero line of
$\partial\arg \Delta_3(2,2;\sigma +i t)/\partial t$, their phase
increases as $t$ decreases; where they lie to the right, it
increases as $t$ increases. Curves coming in from the left all pass
through the zero without crossing the zero derivative line; their
phase always increases as $t$ decreases.

For Fig. \ref{argdeltfine} (b), we show the phase contours below the
zero, in the region down to the next zero. The phase at the centre
of the hyperbolae is -0.1055, while the phase just above the bottom
zero and just below the upper zero are respectively -0.107604 and
-0.101134.  Note the dashed line passing through the centre, along
which  $\partial \log |\Delta_3(2,2;\sigma +i t)|/\partial t=0$.
From the Cauchy-Riemann equations, this is perpendicular to the
solid line defined by  $\partial\arg \Delta_3(2,2;\sigma +i
t)/\partial t=0$ (a fact disguised by the different scales on the
horizontal and vertical axes).

\section{Distributions of zeros}

We return to the left-hand side of expression (\ref{ev6}), in which we replace $s$ by
$\sigma+ \ri t$, and expand assuming $|t|>>\sigma$, with $\sigma$
large enough to ensure accuracy of (\ref{ev6}). The result is
\begin{equation}
2 t\log (t)-2 t(\log
\pi+1)+\pi(\sigma-\frac{1}{2})+\frac{\sigma}{t}(1-2\sigma).
\label{dz1}
\end{equation}
As for each increment of $\pi$ of this expression we get one null
line of the real part of $\Delta_3$ and one of the imaginary part,
and these intersect at $\sigma=1/2$ to give one zero there (assuming
the Riemann hypothesis holds for $\Delta_3(2,2 m;s)$), we can divide
(\ref{dz1}) by $\pi$, and regard the result as a distribution
function for zeros of $\Delta_3$:
\begin{equation}
N_{\Delta_3}(\sigma, t)=\frac{2 t}{\pi} \log (t)-\frac{2
t}{\pi}(1+\log \pi) +\sigma-\frac{1}{2}+\frac{\sigma}{\pi t}
(1-2\sigma). \label{dz2}
\end{equation}

Now, from Titchmarsh and Heath-Brown (1987), the distribution function for the zeros of the  Riemann zeta function on the
critical line is
\begin{equation}
N_{\zeta}(\frac{1}{2},t)=\frac{ t}{2\pi} \log (t)-\frac{
t}{2\pi}(1+\log(2 \pi)) +O(\log t).
\label{dz3}
\end{equation}
We complement this with the numerical estimate from McPhedran {\em et al} (2007) for the distribution
function of the zeros of $L_{-4}(s)$:
\begin{equation}
N_{-4}(\frac{1}{2},t)=\frac{ t}{2\pi} \log (t)-\frac{
t}{2\pi}(1+\log( \pi/2)) +O(\log t).
\label{dz4}
\end{equation}
Adding (\ref{dz3}) and (\ref{dz4}) we obtain the distribution function for the zeros of ${\cal C}(0,1;s)$ (see (\ref{mz13})):
\begin{equation}
N_{{\cal C}0,1}(\frac{1}{2},t)=\frac{ t}{\pi} \log (t)-\frac{
t}{\pi}(1+\log( \pi)) +O(\log t).
\label{dz5}
\end{equation}
When we compare this with (\ref{dz2}),  and use the equation
\begin{equation}
N_{\Delta_3}(\frac{1}{2}, t)=N_{{\cal C}1,4}(\frac{1}{2},t)+N_{{\cal C}0,1}(\frac{1}{2},t),
\label{dz5a}
\end{equation}
it suggests the hypothesis that the distribution function of zeros of ${\cal C}(1, 4;s)$
is the same as that of (\ref{dz5}), to the number of terms quoted:
\begin{equation}
N_{{\cal C}1,4}(\frac{1}{2},t)=N_{{\cal C}0,1}(\frac{1}{2},t)=\frac{ t}{\pi} \log (t)-\frac{
t}{\pi}(1+\log( \pi)) +O(\log t).
\label{dz6}
\end{equation}
Strong numerical evidence supporting this is given in
Table~\ref{table1}, which also shows zero counts for ${\cal
C}(1,8;s)$ and ${\cal C}(1,12;s)$.  Note that the numbers of zeros
found for ${\cal C}(1,4;s)$, ${\cal C}(1,8;s)$ and ${\cal
C}(1,12;s)$ are virtually the same. This rules out any variation
with increasing order similar to that  of Dirichlet $L$ functions,
where increasing order results in significant increases in density
of zeros (compare (\ref{dz4}) and (\ref{dz3}), or the second and
third columns of Table 1).

Comparing the data of Table 1 with the discussion in Bogomolny and
Lebouef (1994), we can see that the split up of $N(t)$ into averaged
parts given by expressions like (\ref{dz3}-\ref{dz6}) and
oscillating parts applies to ${\cal C}(0,1;s)$ and to the ${\cal
C}(1,4 m;s)$. However, it would be value to extend the numerical
investigations of Table 1 to much higher values of $t$, to render
the characterization of the oscillating term more accurate. Such an
extension may require the development of an alternative algorithm to
that based on (\ref{gr1}), which will probably become unwieldy for
values of $t$ of order $10^4-10^5$.

\begin{figure}
\begin{center}
\includegraphics[width=6cm]{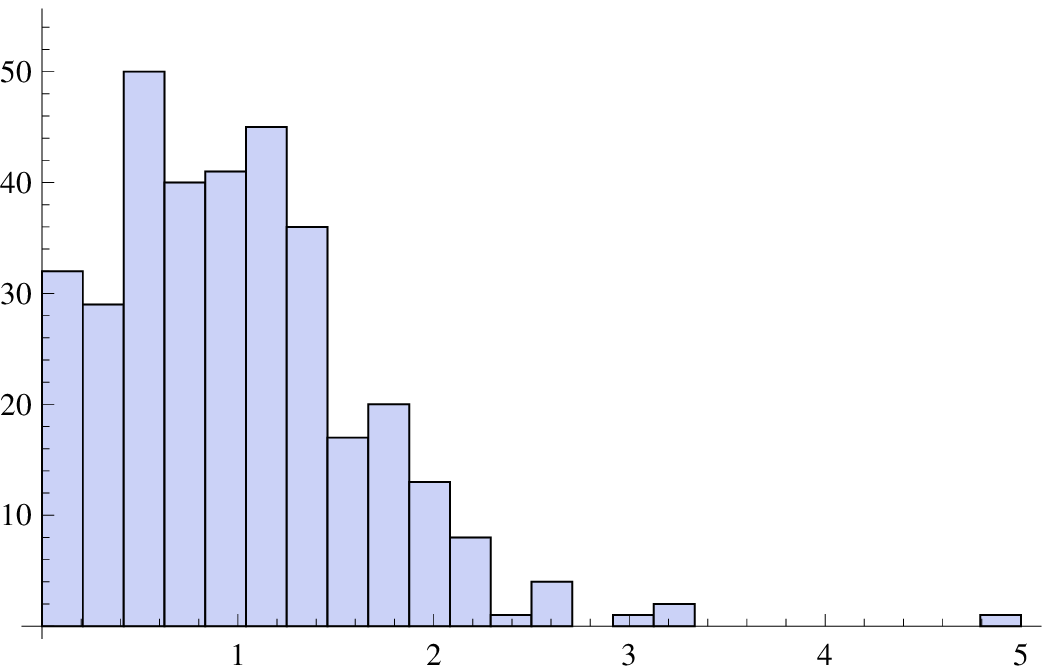}~~\includegraphics[width=6cm]{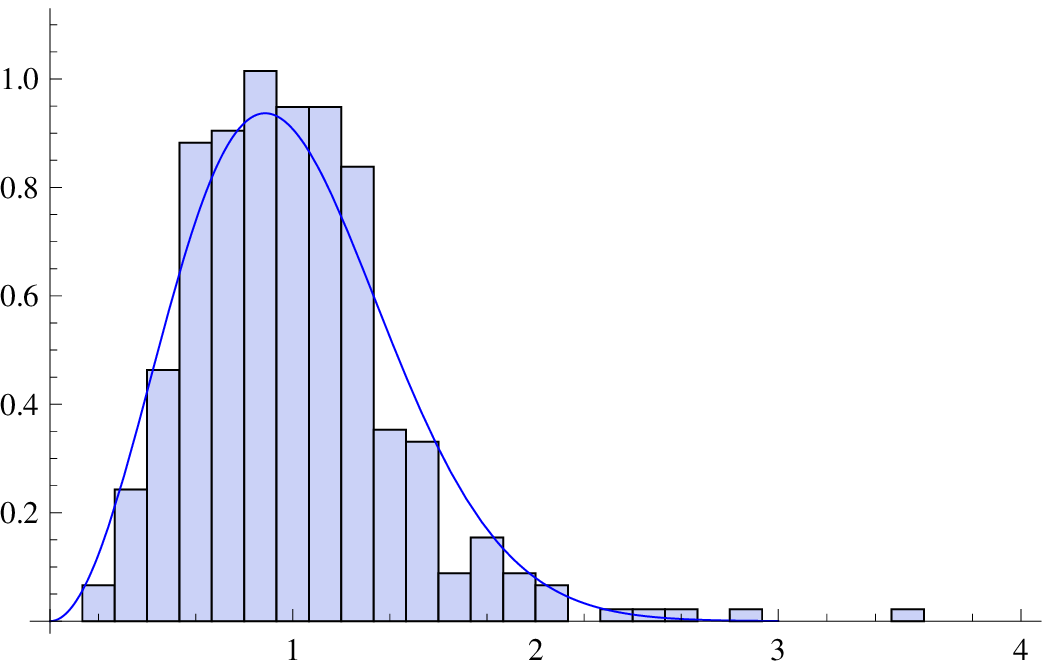}~~
\end{center}
\caption{ The distributions of the differences between successive
zeros of ${\cal C}(0,1;s)$(left) and  ${\cal C}(1,4;s)$ (right) on
$s=1/2+i t$, for $t<300$.} \label{histofig}
\end{figure}

In Fig. \ref{histofig} we compare the distributions of the
differences between zeros on the critical line for ${\cal
C}(0,1;s)$(left) and  ${\cal C}(1,4;s)$ (right). The distributions
are quite different, even with this modest data set. Bogomolny and
Lebouef (1994) have studied the case of  ${\cal C}(0,1;s)$ using
10,000 zeros after $t=10^5$, and contrast the distribution for
$\zeta(s) L_{-4}(s)$ with that for each function separately. The
separate factors in fact have distributions like that that for
${\cal C}(1,4;s)$. The function compared with the histogram in the
right of Fig. \ref{histofig} corresponds to the Wigner surmise,
which (Bogomolny and Lebouef, 1994, Dietz and Zyczkowski, 1991) for
the unitary ensemble takes the form
\begin{equation}
P(S)=\frac{9 S^2}{\pi^2} \exp \left(\frac{-4 S^2}{\pi}\right).
\label{ue}
\end{equation}
Here the separation between zeros has been rescaled to have a mean
of one. Bogomolny and Lebouef (1994) comment that the left
distribution is that of an uncorrelated superposition of two unitary
ensemble sets. Note as one indicator of this that  there is not the
same pronounced tendency for the probability to go to zero with
separation on the left as on the right, where the distribution
clearly comes from a single ensemble. Recall that Table 1 shows that
the frequency distribution for zeros is the same for ${\cal
C}(0,1;s)$ and  ${\cal C}(1,4;s)$. This makes the strong difference
in the distributions of the gaps all the more interesting.

\begin{table}
\caption{Numbers of zeros of $\zeta(1/2+\ri t)$, $L_{-4}(1/2+ \ri
t)$, ${\cal C}(1,4;1/2+\ri t)$,${\cal C}(1,8;1/2+\ri t)$ and ${\cal
C}(1,12;1/2+\ri t)$ in successive intervals of $t$.}
\begin{center}\begin{tabular}{|c|c|c|c|c|c|c|} \\ \hline
$t$ & $n_{\zeta}$ & $n_{-4}$ & $n_{{\cal C}14}$ & $n_{\zeta} +$             &  $n_{{\cal C}18}$ & $n_{{\cal C}112}$ \\
    &             &          &                  & $n_{-4} + n_{{\cal C}14}$ &         &           \\ \hline
0-10 & 0 & 1 & 2 & 3 & 2 & 3\\
10-20 & 1 & 4& 5 & 10  &5 & 5\\
20-30 & 2 & 5 & 6& 13 & 7 & 7 \\
30-40 & 3 & 4& 8 & 15 &8 & 8\\
40-50 & 4 & 6 & 8 & 18 &8 & 8 \\
50-60 & 3 & 5 & 9 & 17& 9 & 9 \\
60-70 & 4 & 6 & 9 & 19 &10 & 10 \\
70-80  & 4 & 6 & 11 & 21 &10 & 10 \\
80-90 & 4 & 7 & 11 & 22&11 & 10\\
90-100 & 4 & 6 & 10 & 20& 10 & 12 \\
\hline 0-100 & 29 & 50 & 79 & 158& 80 & 82\\ \hline
(\ref{dz3}),(\ref{dz4}),(\ref{dz6}) &28 & 50 & 78& 156 &  &
\\\hline
100-110 & 4 & 7 & 11 & 22 & 11 & 10 \\
110-120 & 5 & 7 & 12 & 24 &  12 & 12 \\
120-130 & 5 & 7 & 12 & 24 & 12  & 12 \\
130-140 & 5 & 7 & 12 & 24 &  12 & 12 \\
140-150 & 4 & 7 & 12 & 23 &  12  & 11 \\
150-160 & 6 & 7 &  12  &  25  & 12 & 13    \\
160-170 & 6 & 7 &  13  &  26  & 13 & 12     \\
170-180 & 6 & 8 & 13    &  27  & 13  & 13  \\
180-190 & 5 & 8 & 13   & 26   & 12  & 14  \\
190-200 & 5 & 7 & 13   &  25  &  14  & 13 \\
\hline 0-200 & 80 & 122 & 202 & 404 & 203  & 204  \\ \hline
(\ref{dz3}),(\ref{dz4}),(\ref{dz6}) &78 & 122 & 200& 400  & &
\\\hline
200-210 & 6 & 8 & 13 & 27 & 12 & 13 \\
210-220 & 5 & 8 & 14 & 27 &  15 & 14 \\
220-230 & 6 & 8 & 13 & 27 & 13  & 13 \\
230-240 & 6 & 8 & 14 & 28 & 14  & 14  \\
240-250 & 6 & 8 & 14 & 28 & 14  & 13 \\
250-260 & 6 & 8 & 13   & 27   & 14 & 15    \\
260-270 & 6 & 8 & 15   & 29   & 14 & 13     \\
270-280 & 6  & 8 & 14    &  28  & 14  & 15  \\
280-290 & 6  & 8 &15    & 29   & 15  & 14  \\
290-300 & 6 & 9 &  14  &  29  &  13  & 15 \\
\hline 0-300 & 137 & 203 & 341 &681  &341   &  343 \\ \hline
(\ref{dz3}),(\ref{dz4}),(\ref{dz6}) &137 & 203 & 340&680  & &
\\\hline
\end{tabular}
\end{center}
\label{table1}
\end{table}
\section{Relation between Distributions of Zeros}
We now investigate the relationship between the distributions of zeros of ${\cal C}(1,4 m;s)$ and ${\cal C}(0,1;s)$. To do this,
we consider, rather than the product of these functions, as in $\Delta_3 (2,2 m;s)$, their quotient
\begin{equation}
\Delta_4(2,2 m;s)=\frac{{\cal C}(1,4 m;s)}{{\cal C}(0,1;s)}.
\label{eq71}
\end{equation}
We will now prove some of the important properties of this function.

\begin{theorem}
The analytic function $\Delta_4(2,2 m;s)$ obeys the functional equation
\begin{equation}
\Delta_4(2,2 m;s)=\Delta_4(2,2 m;1-s) {\cal F}_{2 m}(s).
\label{eq72}
\end{equation}
It has first order poles at $s=-(2 m-1),-(2 m-2),\ldots,-1$ and a first order zero at $s=0$. Its only possible essential singularity is at infinity.
On the critical line, it lies in either the first or third quadrants, with its argument being given by
\begin{equation}
\arg [\Delta_4(2,2 m;1/2+i t)]=\phi_{2 m,c}(t)-\left[ \begin{tabular}{c}
0\\
$\pi$ \\
\end{tabular}\right].
\label{eq73}
\end{equation}
As $\sigma\rightarrow \infty$ for any  $t$, the argument of $\Delta_4$ tends to zero exponentially, while as $\sigma\rightarrow -\infty$
for any  $t$,the argument of $\Delta_4$ tends to zero algebraically.
\end{theorem}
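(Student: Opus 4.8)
The plan is to read every clause of the statement off two inputs already in hand: the functional equation (\ref{mz4}), used both at order $m$ and at $m=0$, and the explicit analytic structure of the numerator and denominator of $\Delta_4$. First I would obtain (\ref{eq72}) by division. Writing (\ref{mz4}) once for ${\cal C}(1,4m;s)$ and once for ${\cal C}(0,1;s)$ (its $m=0$ case), the prefactors $\pi^{-s}$ are identical and cancel in the quotient, while the two $\Gamma$-ratios combine into
\[
\frac{\Gamma(1-s+2m)/\Gamma(s+2m)}{\Gamma(1-s)/\Gamma(s)}
=\frac{\Gamma(1-s+2m)\,\Gamma(s)}{\Gamma(1-s)\,\Gamma(s+2m)}={\cal F}_{2m}(s)
\]
by the definition (\ref{mz5}); hence $\Delta_4(2,2m;s)=\Delta_4(2,2m;1-s){\cal F}_{2m}(s)$, which is (\ref{eq72}). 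The same relation yields (\ref{eq73}): on the critical line $1-s=\bar s$, and since numerator and denominator are real on the real axis we have $\Delta_4(2,2m;\bar s)=\overline{\Delta_4(2,2m;s)}$, so (\ref{eq72}) reads ${\cal F}_{2m}(s)=\Delta_4/\overline{\Delta_4}=e^{2\ri\arg\Delta_4}$. Comparing with ${\cal F}_{2m}(s)=e^{2\ri\phi_{2m}(s)}$ and using that $\phi_{2m,c}(t)$ is real gives $\arg\Delta_4\equiv\phi_{2m,c}(t)$ modulo $\pi$, i.e. (\ref{eq73}); the two admissible values place $\Delta_4$ in the first or third quadrant once one notes from (\ref{mz21}) that $\phi_{2m,c}(t)\simeq 2m^2/t>0$.

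Next I would pin down the poles, the zero, and the singularities from the structure of the two sums. For the numerator, the Chebyshev expansion (\ref{mz3a}) writes ${\cal C}(1,4m;s)$ as a finite combination of the ${\cal C}(2n,1;s)$, each of which by (\ref{gr1}) is analytic apart from the simple pole of $\zeta(2s-1)$ at $s=1$; the residues weighted by the Chebyshev coefficients cancel (equivalently, the angular mean of $\cos 4m\theta$ vanishes for $m\ge 1$), so ${\cal C}(1,4m;s)$ is entire, and the factor $\Gamma(s+2m)^{-1}$ forced by (\ref{mz4}) supplies its trivial zeros at $s=-2m,-2m-1,\dots$. For the denominator, (\ref{mz2}) gives ${\cal C}(0,1;s)=4\zeta(s)L_{-4}(s)$, with a simple pole at $s=1$ and simple zeros at every negative integer (the even ones from $\zeta$, the odd ones from $L_{-4}$). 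Dividing: at $s=-1,\dots,-(2m-1)$ the denominator vanishes simply while the numerator does not, giving the first-order poles claimed; at $s=-2m,-2m-1,\dots$ the simple zeros of numerator and denominator cancel, so no further poles arise; the remaining candidate poles, at the nontrivial zeros of ${\cal C}(0,1;s)$, are precisely the objects whose cancellation is the Riemann-hypothesis question and so lie outside this structural list. Since the numerator is entire and the denominator meromorphic with a single pole, $\Delta_4$ is meromorphic on the whole finite plane, so its only possible essential singularity is at $s=\infty$. The first-order zero is produced by the simple pole of ${\cal C}(0,1;s)$; to place it at the stated edge point and to verify its order I would track $s=0$ and $s=1$ through (\ref{eq72}), using that by (\ref{mz5a}) the rational factor ${\cal F}_{2m}(s)$ has a simple pole at $s=0$ and a simple zero at $s=1$.

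Finally the asymptotics are direct. As $\sigma\rightarrow+\infty$ both sums are dominated by the innermost shell, ${\cal C}(1,4m;s)=4+4(-1)^m2^{-s}+\cdots$ and ${\cal C}(0,1;s)=4+4\cdot2^{-s}+\cdots$, so $\Delta_4=1+[(-1)^m-1]2^{-s}+\cdots\rightarrow 1$ with $\arg\Delta_4=O(2^{-\sigma})$, i.e. exponential decay. As $\sigma\rightarrow-\infty$ I would instead invoke (\ref{eq72}): $\Delta_4(2,2m;1-s)\rightarrow 1$ with exponentially small argument, since $\Real(1-s)\rightarrow+\infty$, whereas from (\ref{mz21}) the factor ${\cal F}_{2m}(s)\simeq 1-4m^2/(s-1/2)$ contributes $\arg{\cal F}_{2m}(s)=O(1/|s|)$; hence $\arg\Delta_4\rightarrow 0$ only algebraically, matching the stated dichotomy.

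I expect the main obstacle to be the edge analysis at $s=0$ and $s=1$: first establishing entireness of ${\cal C}(1,4m;s)$ through the residue cancellation above, and then carrying out the Laurent bookkeeping in (\ref{eq72}) at the point where the zero coming from the pole of ${\cal C}(0,1;s)$ meets the simple pole of ${\cal F}_{2m}(s)$, so as to confirm that the result is exactly a first-order zero rather than a cancellation. Everything else is either formal (the functional equation and (\ref{eq73})) or a one-line shell expansion (the two asymptotic regimes).
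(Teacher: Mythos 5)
Your proposal is correct and, for most clauses, follows the paper's own route: you get (\ref{eq72}) by dividing the two instances of (\ref{mz4}), the phase relation (\ref{eq73}) from (\ref{eq72}) with $1-s=\bar{s}$ and Schwarz reflection, the $\sigma\rightarrow+\infty$ behaviour by direct summation of the first shells, and the $\sigma\rightarrow-\infty$ behaviour from the functional equation together with (\ref{mz21}); the last two steps reproduce the paper's (\ref{eq74})--(\ref{eq78}) almost verbatim. Where you genuinely differ is the pole/zero bookkeeping, and your version is sharper. The paper disposes of it in two clauses --- the poles ``arise from the poles of ${\cal F}_{2m}(s)$'', the zero ``arises from the first order pole of ${\cal C}(0,1;s)$'' --- whereas your account, via the simple trivial zeros of $\zeta(s)L_{-4}(s)$ at every negative integer, the entireness of ${\cal C}(1,4m;s)$, and its trivial zeros at $s=-2m,-2m-1,\ldots$ forced by $\Gamma(s+2m)^{-1}$, explains what the paper leaves implicit: why the pole list stops at $-(2m-1)$ even though ${\cal F}_{2m}(s)$ has $2m$ poles (the one at $s=0$ being cancelled), and why nothing happens at or below $s=-2m$. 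Likewise your observation that a quotient of functions meromorphic in the finite plane is meromorphic there settles the essential-singularity clause in one line, where the paper instead digresses on whether the point at infinity itself is essential.

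Two caveats. First, your structural analysis (and the paper's own proof) place the first-order zero at $s=1$, the pole of the denominator; the theorem statement's ``zero at $s=0$'' is evidently a typo, and you should say so plainly rather than hedging with ``track $s=0$ and $s=1$'': evaluating (\ref{eq72}) at $s=0$ gives $\Delta_4(2,2m;0)=\Delta_4(2,2m;1)\,{\cal F}_{2m}(0)$, a simple zero against a simple pole, hence a finite nonzero value, so nothing is located at $s=0$. Second, claiming the poles at $s=-1,\ldots,-(2m-1)$ are genuinely first order tacitly assumes ${\cal C}(1,4m;s)$ does not vanish at those points (equivalently, by (\ref{mz4}), at $s=2,\ldots,2m$); the paper makes the same tacit assumption, so this is not a gap relative to it, but it deserves an explicit sentence in a complete write-up.
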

\begin{proof}
The functional equation (\ref{eq72}) follows readily from the functional equations (\ref{mz4}) for ${\cal C}(1,4 m;s)$ and
${\cal C}(0,1;s)$. The poles on the negative real axis arise from the poles of ${\cal F}_{2 m}(s)$, while
the zero at $s=1$ arises from the first order pole there of ${\cal C}(0,1;s)$. The function ${\cal C}(0,1;s)$ has an infinite set of zeros on the critical line which it inherits from $\zeta (s)$ (Titchmarsh \& Heath-Brown, 1987). This will make the point at infinity an essential singularity unless all but a finite number of these coincide with zeros of ${\cal C}(1, 4 m;s)$.
The phase of $\Delta_4(2,2 m;s)$ on the critical line follows from (\ref{eq72}) when $1-s=\overline{s}$.

For $\sigma$ positive and not small, we may expand ${\cal C}(1,4 m;s)$ and
${\cal C}(0,1;s)$ by direct summation:
\begin{equation}
{\cal C}(0,1;s)=4 \left( 1+\frac{1}{2^s}+\frac{1}{4^s}+2\frac{1}{5^s}+\ldots \right),
\label{eq74}
\end{equation}
and
\begin{equation}
{\cal C}(1,4 m;s)=4 \left( 1+(-1)^m \frac{1}{2^s}+\frac{1}{4^s}+2\frac{[\cos (4 m \cos^{-1}(1/\sqrt{5}))+ \cos (4 m \cos^{-1}(2/\sqrt{5}))]}{5^s}+\ldots \right).
\label{eq75}
\end{equation}
Combining (\ref{eq74}) and (\ref{eq75}), we find for $m=1,2,\ldots$
\begin{equation}
\Delta_4(2,4 m-2;s)=1-\frac{1}{2^{s-1}}+O(\frac{1}{4^s}),
\label{eq76}
\end{equation}
and
\begin{equation}
\Delta_4(2,4 m;s)=1+\frac{[\cos (8 m \cos^{-1}(1/\sqrt{5}))+ \cos (8 m \cos^{-1}(2/\sqrt{5}))-2]}{5^s}+O(\frac{1}{8^s}).
\label{eq77}
\end{equation}
Thus, in either case, $\Delta_4$ tends to unity in exponential fashion as $\sigma$ increases positively. The difference from unity in the case
of order $4 m-2$ tends to zero as $\exp(-\sigma \log(2))$, and in the case of order $4 m$ as $\exp(-\sigma \log(5))$, so that the phase of
$\Delta_4$ tends to zero exponentially.

For $\sigma$ negative and not small, we may take the phase of $\Delta_4(2,4 m;1-s)$ to be zero, and write
\begin{equation}
\arg [\Delta_4(2,4 m;s)]\simeq \arg [{\cal F}_{2 m}(s)]=2\Re [\phi_{2 m}(s)]=\frac{4 m^2 t}{(\sigma-1/2)^2+t^2}+O(1/t^3).
\label{eq78}
\end{equation}
This quantity tends to zero algebraically as $|s|$ increases. Its maximum value is $4 m^2/(1/2-\sigma)$, which occurs when $t=1/2-\sigma$.
\end{proof}

If necessary, it is not hard to construct uniform bounds on the remainder terms in series like that of (\ref{eq75}) for ${\cal C}(1,4 m;s)$ in the region of absolute convergence $\sigma >1$. Indeed, suppose we break the series there into a part ${\cal F}_N(1, 4 m;s)$ with $\sqrt{p_1^2+p_2^2}\leq N$, and a remainder term  ${\cal R}_N(1, 4 m;s)$ from  $\sqrt{p_1^2+p_2^2}> N$:
\begin{equation}
|{\cal C}(1,4 m;s)|\leq |{\cal F}_N(1, 4 m;s)|+|{\cal R}_N(1, 4 m;s)|,
\label{eest1}
\end{equation}
and $|\cos( 4 m \theta_{p_1,p_2})|\leq 1$ for every $m$, $p_1$, $p_2$, so that
\begin{equation}
|{\cal F}_N(1, 4 m;s)|\leq |{\cal F}_N(1, 0;s)| , ~ |{\cal R}_N(1, 4 m;s)|\leq |{\cal R}_N(1, 0;s)|.
\label{eest2}
\end{equation}
In turn,
\begin{equation}
|{\cal F}_N(1, 0;s)|\leq |{\cal F}_N(1, 0;\sigma)| , ~ |{\cal R}_N(1, 0;s)|\leq |{\cal R}_N(1, 0;\sigma)|,
\label{eest3}
\end{equation}
so bounding uniformly the truncated series, the remainders and the ${\cal C}(1,4 m;s)$.
\begin{theorem}
The only lines of constant phase of the $\Delta_4(2,2 m;s)$ which can attain $\sigma=\infty$ are equally spaced, and have interspersed lines
of constant modulus.
All such lines of constant phase either reach the critical line in the asymptotic region of $t$ at a pole or a zero of $\Delta_4(2,2 m;s)$, or curve up and asymptote to each other at infinity.
\end{theorem}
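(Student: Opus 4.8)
The plan is to extract both assertions from the behaviour of $\Delta_4(2,2m;s)$ at the two ends of the strip, together with its confinement to the first and third quadrants on the critical line established above. First I would record, from the expansions (\ref{eq76}) and (\ref{eq77}), that as $\sigma\to+\infty$ one has $\Delta_4(2,2m;s)=1+a\,r^{-s}+o(r^{-\sigma})$, where $r=2$ when $2m\equiv 2\ (\mathrm{mod}\ 4)$ and $r=5$ when $2m\equiv 0\ (\mathrm{mod}\ 4)$, and $a$ is a known real constant. Writing $r^{-s}=r^{-\sigma}[\cos(t\log r)-i\sin(t\log r)]$ gives, to leading order,
\begin{equation}
\arg\Delta_4(2,2m;s)\simeq -a\,r^{-\sigma}\sin(t\log r),\qquad \log|\Delta_4(2,2m;s)|\simeq a\,r^{-\sigma}\cos(t\log r).
\label{prop-arg}
\end{equation}
A line of constant phase $\arg\Delta_4=C$ can reach $\sigma=+\infty$ only if $C=0$, since for $C\neq 0$ the equation $-a\,r^{-\sigma}\sin(t\log r)=C$ would force $|\sin(t\log r)|\to\infty$ as $\sigma\to\infty$; the identical argument applied to the second relation in (\ref{prop-arg}) shows that only the modulus value $1$ survives. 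Hence near infinity the admissible phase lines lie along $\sin(t\log r)=0$, that is $t=n\pi/\log r$, equally spaced with spacing $\pi/\log r$, while the admissible modulus lines lie along $\cos(t\log r)=0$, that is $t=(n+\frac{1}{2})\pi/\log r$, and are interspersed exactly midway between them. This establishes the first sentence.

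For the second sentence I would pass to the conformal variable $w=\log\Delta_4(2,2m;s)$, under which the phase-zero lines are precisely the preimages of the real axis $\{\,\Im w=0\,\}$: at $\sigma=+\infty$ one has $w=0$, at a zero of $\Delta_4$ one has $w\to-\infty$, and at a pole $w\to+\infty$. Following a phase-zero curve outward from its entry value $w=0$, the now real quantity $w=\log|\Delta_4|$ leaves $0$, and away from the zeros of $\Delta_4'$ it does so monotonically, so three outcomes are possible: $w\to+\infty$, ending the curve at a pole; $w\to-\infty$, ending it at a zero; or $w$ stays bounded. In the first two cases I would invoke the quadrant confinement through (\ref{eq73}) and (\ref{mz21}): on the critical line $\arg\Delta_4$ equals $\phi_{2m,c}(t)$ or $\phi_{2m,c}(t)-\pi$, with $\phi_{2m,c}(t)\simeq 2m^2/t\to 0^{+}$, so away from its zeros and poles $\Delta_4$ lies strictly inside the open first or third quadrant and is never a positive real there. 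A phase-zero curve can therefore meet the critical line only at a zero ($\Delta_4=0$) or a pole ($\Delta_4=\infty$), and in the asymptotic region of $t$ these are the only singularities available; this is alternative (a).

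The bounded-$w$ curves end at neither a zero nor a pole. Such a curve cannot return to $\sigma=+\infty$, because $w$ leaves $0$ monotonically, and by the quadrant confinement just used it cannot cross the critical line, so the only remaining escape within $1/2<\sigma<+\infty$ is toward $t=+\infty$. Since $\phi_{2m,c}(t)$ decreases monotonically to $0$, the sliver of the quadrant carrying phase near $0$ lies ever closer to the critical line as $t$ grows; these curves must therefore curve upward and crowd together, asymptoting to one another as $t\to\infty$, which is alternative (b). I expect the main obstacle to be exactly this last step: proving rigorously that a bounded-$w$ curve must recede to $t=+\infty$ alongside the critical line, rather than becoming trapped or spiralling. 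I would close it by combining the preimage description of the real $w$-axis with an analogue for $\Delta_4$ of the derivative result proved above for $\Delta_3$, namely that the zeros of $\Delta_4'$, at which phase-zero curves may branch, are confined to a neighbourhood of the critical line; this localises all interior branch points of the level curves of $\arg\Delta_4$ and forces every phase-zero curve entering from $\sigma=+\infty$ either to terminate at a prescribed singularity or to accumulate, together with its neighbours, against the critical line at infinity.
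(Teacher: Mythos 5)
Your proposal is correct and follows essentially the same route as the paper: the same expansions (\ref{eq76})--(\ref{eq77}) give the equally spaced phase-zero lines at $t=n\pi/\log 2$ (or $n\pi/\log 5$) with unit-modulus lines midway, and the same dichotomy---intersection with the critical line only at a zero or pole of $\Delta_4(2,2m;s)$, via (\ref{eq73}) and (\ref{mz21}), or curving up towards $t=\infty$---completes the second assertion. The gap you flag concerning zeros of $\Delta_4'$ is handled in the paper by a closed-loop/maximum-principle contradiction showing that the phase varies monotonically along the unit-modulus lines (and the modulus monotonically along the phase-zero lines), which is exactly the counterpart of your monotonicity-of-$w$ step.
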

\begin{proof}
From (\ref{eq76}), we have
\begin{equation}
\Delta_4(2,4 m-2;\sigma+i t)=1-\frac{e^{-i t \log 2}}{2^{\sigma-1}}+O(\frac{1}{4^{\sigma+i t}}),
\label{eq79}
\end{equation}
and so the leading order estimate gives the lines of phase zero for $\Delta_4(2,4 m-2;\sigma+i t)$ as occurring at $t=n\pi/\log 2$, for
$n=0,1,2,\ldots$. Halfway between these lines of phase zero are the lines on which the leading order estimate gives
$\partial\Delta_4(2,4 m-2;\sigma+i t)/\partial t=0$: these are lines of constant modulus, and in fact correspond to $|\Delta_4(2,4 m-2;\sigma+i t)|=1$,
for $\sigma\rightarrow \infty$. The same argument applies to $\Delta_4(2,4 m;\sigma+i t)$, with $\log 5$ replacing $\log 2$ in the estimate for
asymptotic placement of lines of phase zero and amplitude unity.

Next, consider where these lines of constant phase and amplitude can go as $\sigma$ decreases towards $1/2$. Along the line of unit amplitude, as $\sigma$ decreases curves of constant phase emanate from the line at right angles and head above and below it towards smaller values of $\sigma$. We see that the phase of $\Delta_4(2,2 m;s)$ must vary monotonically along the line of unit amplitude; were it to be otherwise, lines of constant phase would form closed loops about the line, constraining the phase to be everywhere constant within the closed loop- a contradiction. On either side of each line of phase zero, we have one line of unit amplitude whose phase increases monotonically as $\sigma$ decreases, and another line of unit amplitude along which the phase decreases monotonically.  A similar argument applies to the lines of phase zero, along which the amplitude of $\Delta_4(2,2 m;s)$ either increases monotonically or decreases monotonically as $\sigma$ decreases. The phase of $\Delta_4(2,2 m;s)$ on the lines of unit amplitude moves away monotonically from zero, so that these lines can never intersect lines of phase zero in the finite part of the plane (although they could asymptote towards such lines at $s=\infty$). They cannot return to $\sigma=\infty$, since their phase cannot tend back towards zero. Thus, the lines of constant amplitude and phase coming from $\sigma=\infty$ must all cut the critical line, or they could from a certain point on all commence to curve up towards $t=\infty$.

Those lines of constant phase zero which reach the critical line in the region where the phase of ${\cal F}_{2 m}(s)$ is accurately constrained by the asymptotic estimate (\ref{mz21}) can only cut the line at a pole or zero of $\Delta_4(2,2 m;s)$.
\end{proof}

We now give one of the two principal results of this section.
\begin{theorem}
Suppose ${\cal C}(0,1;s)$ obeys the Riemann hypothesis. Then ${\cal C}(1, 4 m;s)$ obeys the Riemann hypothesis for any positive integer $m$. Conversely, if ${\cal C}(1, 4 m;s)$ obeys the Riemann hypothesis for some $m$, then ${\cal C}(0,1;)$ obeys the Riemann hypothesis.
\end{theorem}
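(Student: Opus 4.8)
The plan is to route both implications through the single quotient $\Delta_4(2,2m;s)={\cal C}(1,4m;s)/{\cal C}(0,1;s)$, using the factorisation
\begin{equation}
{\cal C}(1,4m;s)=\Delta_4(2,2m;s)\,{\cal C}(0,1;s)
\end{equation}
together with the functional equation (\ref{eq72}), the phase relation on the critical line, and the asymptotics established above. The driving observation is that, off the critical line and inside the strip, a zero of one sum that is \emph{not} matched by a zero of the other must show up as a genuine zero or pole of $\Delta_4$. Each direction therefore reduces to a statement purely about zeros and poles of $\Delta_4$ away from $\sigma=1/2$.

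First I would record the preliminaries. Since the Dirichlet coefficients are real, Schwarz reflection together with the functional equations (\ref{mz4}) for both sums groups the off-line zeros of either ${\cal C}(1,4m;s)$ or ${\cal C}(0,1;s)$ into quadruples $\{s_0,\bar s_0,1-s_0,1-\bar s_0\}$, so it suffices to exclude zeros in the right half-strip $\sigma>1/2$. Combined with $\Delta_4\to1$ as $\sigma\to+\infty$ (eqs. (\ref{eq76})--(\ref{eq77})) and the known zero-free region of ${\cal C}(0,1;s)=4\zeta(s)L_{-4}(s)$, this confines the work to $\sigma>1/2$ and lets me treat the edge $\sigma=1$ (a simple pole of ${\cal C}(0,1;s)$, hence a real-axis zero of $\Delta_4$) as a known feature rather than an obstruction.

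For the forward implication I would assume the Riemann hypothesis for ${\cal C}(0,1;s)$. Then ${\cal C}(0,1;s)$ is zero-free in the open right half-strip, so $\Delta_4$ is holomorphic there and every off-line zero of ${\cal C}(1,4m;s)$ is a bona fide zero of $\Delta_4$ with $\sigma>1/2$, while the only poles of $\Delta_4$ now sit on $\sigma=1/2$ (the surviving zeros of ${\cal C}(0,1;s)$). I would count zeros of $\Delta_4$ in a large rectangle $R$ bounded on the left by a segment of the critical line, via $\#\{\text{zeros in }R\}=\tfrac{1}{2\pi}\Delta_R\arg\Delta_4$. On the far edge $\sigma=\Sigma$ the winding tends to zero because $\Delta_4\to1$; the two horizontal edges contribute boundedly; and along the critical-line edge $\arg\Delta_4(1/2+\ri t)=\phi_{2m,c}(t)-[0/\pi]$ increases monotonically apart from jumps of $\pm\pi$ as the curve $\Delta_4(1/2+\ri t)$ passes through its on-line zeros and poles. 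Matching this critical-line winding against the count of on-line zeros of ${\cal C}(1,4m;s)$ and ${\cal C}(0,1;s)$ should leave no residual winding, forcing the number of interior off-line zeros to vanish. The geometric content is exactly the constant-phase picture of the preceding theorem: the only phase lines reaching $\sigma=+\infty$ are the equally spaced phase-zero lines at $t=n\pi/\log2$ or $t=n\pi/\log5$, so an interior off-line zero would have to emit phase lines of every value that can neither escape to infinity nor cross the monotone critical-line edge, which is impossible.

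For the converse I would apply the identical argument to $1/\Delta_4(2,2m;s)={\cal C}(0,1;s)/{\cal C}(1,4m;s)$. From (\ref{eq72}) and ${\cal F}_{2m}(s){\cal F}_{2m}(1-s)=1$ one gets $1/\Delta_4(s)=\bigl[1/\Delta_4(1-s)\bigr]{\cal F}_{2m}(1-s)$, a functional equation of the same type; moreover $1/\Delta_4\to1$ as $\sigma\to+\infty$ and on the line $\arg(1/\Delta_4)=-\phi_{2m,c}(t)+[0/\pi]$ is again monotone and confined to one pair of quadrants. Assuming the Riemann hypothesis for ${\cal C}(1,4m;s)$ removes the off-line poles of $1/\Delta_4$, and the same winding argument excludes off-line zeros of $1/\Delta_4$, that is, off-line zeros of ${\cal C}(0,1;s)$. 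The hardest step throughout is this Speiser-type exclusion of an interior off-line zero: turning the qualitative constant-phase picture into a rigorous contradiction requires showing that the monotone critical-line contribution $\phi_{2m,c}(t)$ accounts \emph{exactly} for the on-line zeros and poles, with the far and horizontal edges contributing no net winding, so that no surplus remains to support a zero with $\sigma>1/2$.
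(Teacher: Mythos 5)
You have chosen the right function ($\Delta_4$) and the right tool (the Argument Principle), and the forward/converse symmetry via $1/\Delta_4$ is fine; this is the same basic strategy as the paper. The genuine gap is your contour. You put the left edge of your rectangle on the critical line, which is exactly where $\Delta_4$ has its zeros and poles. The Argument Principle cannot be applied to a contour passing through zeros and poles, so each on-line zero and pole must be indented, and after indentation the critical-line edge contributes a term proportional to $\pi\bigl(\sum_n p_n-\sum_n z_n\bigr)$ (on-line pole orders minus zero orders in the window $[T_1,T_2]$) in addition to the smooth variation of $\phi_{2m,c}(t)$. Your assertion that all of this ``should leave no residual winding'' is therefore not a detail but the entire content of the proof: it holds only if the on-line zeros and poles of $\Delta_4$ balance exactly in the window, and that balance is itself a nontrivial statement --- it is proved later in this section of the paper as a separate theorem, by yet another contour argument, and is not available to you here. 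Likewise, ``the horizontal edges contribute boundedly'' cannot close the argument: an exact count of interior zeros needs exact cancellation, not boundedness, and nothing in your setup lets you choose $T_1,T_2$ so that these edges contribute precisely zero.

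The paper's proof sidesteps every one of these difficulties with one device your proposal lacks: it closes the contour not with the critical line and horizontal segments, but with two lines of constant zero phase of $\Delta_4$ coming in from $\sigma=\infty$, joined by two vertical segments lying strictly to the right of the critical line. On that contour the argument is identically zero along the two phase lines and uniformly small on the far vertical segment, so the net winding is zero essentially by inspection, and the Argument Principle gives at once that the number of zeros inside equals the number of poles inside --- no indentation, no bookkeeping of on-line jumps, no horizontal-edge estimates. RH for ${\cal C}(0,1;s)$ then removes all interior poles, hence all interior zeros, hence all off-line zeros of ${\cal C}(1,4 m;s)$ between the two phase lines, and conversely for the other direction; a final covering step (such phase-zero lines cannot cluster in a finite $t$-interval, else $\Delta_4$ would have a finite essential singularity, so every point of $\sigma>1/2$, $t>0$ is caught between two of them) finishes the proof, with the left half-strip handled by the functional equation (\ref{eq72}), much as in your quadruple-symmetry remark. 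To repair your version you would essentially have to prove the on-line zero/pole balance between your chosen ordinates, and the natural way to do that is precisely the phase-line contour you avoided.
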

\begin{proof}
Consider lines $L_1$ and $L_2$  in $t>0$ along which the phase of $\Delta_4(2,2 m;s)$ for a given $m$ is zero. Join these
lines with two lines  to the right of the critical line along which  $\sigma$ is constant. Then the change of argument of  $\Delta_4(2,2 m;s)$ around the closed contour  $C$ so formed is zero, so by the Argument Principle the number of poles inside the contour equals the number of zeros. Each pole is formed by a zero of ${\cal C} (0,1;s)$, so if there are no such zeros within $C$ there can be no zeros of  ${cal C}(1, 4 m;s)$ within $C$. We can repeat this procedure for all $m$.

Conversely,  each zero of $\Delta_4(2,2 m;s)$ is formed by a zero of  ${cal C}(1, 4 m;s)$, so if
there are no such zeros for some $m$, then there can be no poles, and hence no zeros of ${\cal C}(0,1;s)$. These arguments prove the theorem in the region to the right of the critical line lying between lines of zero phase of $\Delta_4(2,2 m;s)$, with the result to the left of the critical line then guaranteed by the functional equation (\ref{eq72}) .

To complete the proof we need to show that any point in the region $\sigma>1/2$, $t>0$ is enclosed between lines of phase zero of  $\Delta_4(2,2 m;s)$ coming from $\sigma=\infty$. We note that $t=0$ is one such line, and that for any $\sigma>0$ the infinite number of such constant phase lines cannot cluster into a finite interval of $t$, since that would indicate an essential singularity of  $\Delta_4(2,2 m;s)$ for that $\sigma$.
\end{proof}

\begin{corollary}
Assuming the Riemann hypothesis applies to ${\cal C}(1,4 m;s)$
or ${\cal C}(0,1;s)$, apart from a small number of exceptions, lines of  phase equal to $0$ or $\pi$ of  $\Delta_4(2,2 m;s)$ which
leave the critical line going into $\sigma<1/2$  cannot reach the asymptotic
region in which the phase estimate (\ref{eq78}) is accurate. Instead, they must loop back to cut the critical line.
\end{corollary}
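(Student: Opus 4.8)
The plan is to realise the lines of phase $0$ or $\pi$ as the locus on which $\Delta_4(2,2m;s)$ is real, that is $\Im \Delta_4(2,2m;s)=0$, and to exclude this locus from the far left-hand asymptotic region by a direct phase estimate. First I would record what the hypothesis buys us: since the two Riemann hypotheses are equivalent by the preceding theorem, every pole of $\Delta_4(2,2m;s)$ (arising from a zero of ${\cal C}(0,1;s)$) and every zero (arising from a zero of ${\cal C}(1,4m;s)$) lies on the critical line, apart from the fixed poles at $s=-(2m-1),\dots,-1$ and the zero at $s=0$. By (\ref{eq73}) the value of $\Delta_4$ on the critical line lies in the open first or third quadrant, so its argument is never exactly $0$ or $\pi$ there except at the zeros and poles themselves, where $\Delta_4$ passes through $0$ or $\infty$ and the quadrant flips. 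Hence the real-locus meets $\sigma=1/2$ only at these singularities, and it is precisely the branches issuing from them into $\sigma<1/2$ that the statement concerns.

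The central step is the barrier. From (\ref{eq78}), wherever that estimate is accurate and $t>0$ the argument of $\Delta_4(2,2m;s)$ is the small, strictly positive quantity $4m^2 t/[(\sigma-1/2)^2+t^2]+O(1/t^3)$, whose supremum over $t$ is of order $4m^2/(1/2-\sigma)$ and tends to $0$ as $\sigma\to-\infty$. Consequently, on any horizontal level $t=t_0>0$ and for $\sigma$ sufficiently negative, $\arg\Delta_4$ lies strictly inside $(0,\pi)$ and equals neither $0$ nor $\pi$; equivalently $\Delta_4$ is nowhere real in this region. Therefore neither the phase-$0$ nor the phase-$\pi$ locus can penetrate the region in which (\ref{eq78}) holds, which establishes the first assertion of the corollary.

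To obtain the loop-back I would use the level-set structure of the harmonic function $\arg\Delta_4(2,2m;s)$. Away from zeros and poles a line of constant phase cannot terminate in the interior, and the level curves of a non-constant harmonic function cannot close into a bounded loop; a branch that has entered $\sigma<1/2$ is thus confined on the left by the barrier and, being unable to stop, must either re-cut $\sigma=1/2$ (necessarily at another zero or pole, the only places the real-locus may touch the line) or escape upward to $t=\infty$ inside the intermediate strip $\sigma_0<\sigma<1/2$. The corollary asserts the former, so excluding the upward escape is the crux, and it is the step I expect to be the main obstacle: in that strip (\ref{eq78}) fails, because $1-s$ is then near the critical line and $\Delta_4(2,2m;1-s)$ is no longer close to unity, so the phase is not controlled by a simple estimate. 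I would attempt to rule out the escape by combining the monotone behaviour of $\phi_{2m,c}(t)$ from (\ref{mz21}) on the critical line with the no-closed-loop property just invoked, transporting phase information across the strip through the functional equation (\ref{eq72}).

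Finally I would identify the exceptions as the branches running along or close to the real axis. As $t\to0^+$ the bound $4m^2 t/[(\sigma-1/2)^2+t^2]$ degrades to $0$ and the barrier disappears; moreover $\Delta_4$ is real on $t=0$, its Dirichlet coefficients being real, so the phase lines lying on the axis---together with the finitely many lines issuing from the poles $s=-(2m-1),\dots,-1$ and the zero $s=0$---do reach $\sigma=-\infty$ along the axis. This is a finite, $t$-independent collection, and quantifying how small $t$ must be before the barrier genuinely weakens is what pins down the precise ``small number of exceptions''.
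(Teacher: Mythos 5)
Your first two steps coincide with the paper's: the estimate (\ref{eq78}) makes the phase of $\Delta_4(2,2m;s)$ small and strictly positive in the far left asymptotic region, so no line of phase $0$ or $\pi$ can enter it, and a branch leaving the critical line into $\sigma<1/2$ must then either return to the critical line, run down to the real axis, or escape upward to $t=\infty$ through the intermediate strip where neither (\ref{eq78}) nor (\ref{eq73}) applies. The genuine gap is the step you yourself flag as the crux: you never exclude the upward escape. Your proposed remedy --- ``combining the monotone behaviour of $\phi_{2m,c}(t)$ \ldots transporting phase information across the strip through the functional equation'' --- is a plan, not an argument; as you concede, in that strip $\Delta_4(2,2m;1-s)$ is not close to unity, no estimate you state controls the phase there, and nothing you have written prevents a constant-phase line from threading upward forever.

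The paper's key idea, which your sketch is missing, is to make the functional equation do the work by reflection. If a line of phase $0$ or $\pi$ in $\sigma<1/2$ ran up to $t=\infty$, its image under $s\mapsto 1-s$ (equivalently $s\mapsto 1-\bar{s}$, using the reality of the Dirichlet coefficients) is a curve running up to $t=\infty$ in $\sigma>1/2$ along which, by (\ref{eq72}), the phase of $\Delta_4$ equals the original constant shifted by $\arg {\cal F}_{2 m}(s)$: a variable phase, not congruent to $0$ modulo $\pi$ for $t$ large. But the phase-zero lines coming in from $\sigma=\infty$ occur at regularly spaced heights and, under the hypothesis, terminate on the critical line, so they span the half-plane $\sigma>1/2$ and the reflected curve must cross infinitely many of them. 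At any such crossing $\arg \Delta_4$ would have to take two incompatible values, which is possible only at a zero or pole of $\Delta_4(2,2m;s)$; the Riemann hypothesis assumption places all of these, apart from the finitely many trivial ones, on the critical line, a contradiction. This is what kills the upward escape. The downward alternative then produces the exceptional set directly: a constant-phase line meeting $t=0$ can only do so at one of the $2m-1$ poles on the negative real axis, so the exceptions are confined to $t<4m^2$ and to lines through those poles --- essentially the exceptional set you describe, but obtained structurally from the dichotomy rather than by trying to quantify where the barrier ``degrades'' as $t\to 0^{+}$.
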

\begin{proof}
If $t$ is sufficiently large to ensure $\frac{4 m^2 t}{(\sigma-1/2)^2+t^2}<\pi$, then no line with phase $\pi$ can enter the asymptotic region.
Lines with phase zero are also excluded.
Hence, lines leaving $\sigma=1/2$ with $t$ large and going left which do not return to the critical line must either curve upwards and go to
$t=\infty$ or curve downwards and cut the line $t=0$. If the first alternative applies, then the image of this line under $s\rightarrow 1-s$ in the functional
equation is a line of variable phase running upwards in $\sigma>1/2$, which must cut lines of phase zero running left to intersect the critical line.
Such intersection points would have to be zeros or poles of $\Delta_4(2,2 m;s)$, contradicting our assumption. The second alternative means
the line of constant phase must intersect the axis $t=0$ at one of the $2 m-1$ poles on the axis. Thus, the exceptional cases are limited to
the region $t<4 m^2$ and to lines passing through the $2 m-1$ poles.

In fact, it may be numerically verified in particular cases whether such exceptions do in fact occur. They do not for the three cases we
have investigated ($m=1,2,3$).
\end{proof}

We are now in a position to prove the second of the main results of this section.

\begin{theorem}
Assuming the Riemann hypothesis applies to ${\cal C}(1,4 m;s)$
or ${\cal C}(0,1;s)$, then given any two lines of phase zero of $\Delta_4(2,2 m;s)$ running from $\sigma=\infty$
and intersecting the critical line, the number of zeros
and poles of $\Delta_4(2,2 m;s)$ counted according to multiplicity and lying properly between the lines must be the same.
\end{theorem}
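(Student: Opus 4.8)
The plan is to apply the Argument Principle to a closed contour built from the two given phase-zero lines, exploiting two facts: that $\arg\Delta_4$ is the constant $0$ along each such line, and that $\Delta_4(\sigma+\ri t)\to 1$ as $\sigma\to\infty$ (equations (\ref{eq76})--(\ref{eq77})), so that its continuous argument tends to $0$ at $\sigma=+\infty$. First I would record, from the previous theorem, that every line of phase zero reaching the critical line in the asymptotic region does so at a zero or a pole of $\Delta_4$; thus the two given lines $L_1,L_2$ meet the critical line at points $s_1=1/2+\ri t_1$ and $s_2=1/2+\ri t_2$ which are themselves zeros or poles. Crucially, since near $\sigma=+\infty$ the phase is uniformly small, all phase-zero lines issuing from $\sigma=\infty$ carry one and the same continuous phase value, namely $0$; this pins the continuous argument of $\Delta_4$ to $0$ at both feet $s_1$ and $s_2$.

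Next I would form the closed contour $\Gamma=L_1\cup\gamma\cup L_2\cup A_\infty$, where $\gamma$ is the segment of the critical line joining $s_1$ to $s_2$ and $A_\infty$ is a far arc at $\sigma=R\to\infty$ joining the outer ends of $L_1$ and $L_2$; this $\Gamma$ bounds the region $R\subset\{\sigma>1/2\}$ lying between the two lines, which are well separated since constant-phase lines from $\sigma=\infty$ do not cross in the finite plane. Assuming the Riemann hypothesis for ${\cal C}(0,1;s)$---equivalently, by the preceding main theorem, for every ${\cal C}(1,4m;s)$---all zeros of $\Delta_4$ (the zeros of ${\cal C}(1,4m;s)$) and all poles of $\Delta_4$ (the zeros of ${\cal C}(0,1;s)$) lie on $\sigma=1/2$. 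Hence the open region $R$ contains no zeros or poles, and those lying properly between the lines are exactly the ones on $\gamma$.

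The evaluation of the total change of $\arg\Delta_4$ around $\Gamma$ then proceeds term by term: the contributions from $L_1$ and $L_2$ vanish because the phase is the constant $0$ there, and the contribution from $A_\infty$ vanishes as $R\to\infty$ because $\Delta_4\to1$. Since the continuous argument is pinned to the single value $0$ at both $s_1$ and $s_2$, the net change along $\gamma$ is $0$, so the total winding of $\Delta_4$ around $\Gamma$ is $0$. By the Argument Principle this winding equals $2\pi(Z-P)$, where $Z$ and $P$ are, with multiplicity, the zeros and poles of $\Delta_4$ carried by the $\gamma$-part of $\Gamma$. Therefore $Z=P$, which is the assertion, and the functional equation (\ref{eq72}) supplies the mirror-image statement in $\sigma<1/2$.

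The delicate point---and the step I expect to require the most care---is that the zeros and poles to be counted sit on the contour $\gamma$ rather than strictly inside $R$. I would handle this either by indenting $\gamma$ with small semicircles, checking from the local forms $\Delta_4(s)\sim A(s-s_0)$ and $\Delta_4(s)\sim B/(s-s_0)$ that each interior zero contributes $+\pi$ and each interior pole $-\pi$ to the phase change along the line, while the smooth drift from $\phi_{2m,c}(t)$ (of size $\sim 2m^2/t$ by (\ref{mz21}), hence negligible in the asymptotic region) cannot account for any full turn; or else by using (\ref{eq72}) to replace $\gamma$ by a thin region symmetric about the critical line, so that the singularities become genuinely interior. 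In either realisation one must also justify carefully that every phase-zero line from $\sigma=\infty$ belongs to the same branch $\arg\Delta_4=0$, rather than $2\pi n$ for varying $n$, since it is precisely this common pinning of the phase at $s_1$ and $s_2$ that forces the winding to be exactly $0$ and not merely a multiple of $2\pi$. This is where the uniform limit $\Delta_4\to1$ at $\sigma=\infty$ and the non-crossing of the constant-phase lines are essential.
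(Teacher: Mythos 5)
Your proposal is correct and takes essentially the same route as the paper's own proof: both close the two phase-zero lines with the segment on the critical line and a far segment at large $\sigma$, use the pinning of the phase near zero at $\sigma=\infty$ (together with the Riemann hypothesis, which removes any singularities strictly inside the contour) to force the total phase change around the contour to vanish, and then read off the equality of zero and pole counts from the $\mp\pi$ jumps contributed by each on-line zero and pole. The one point of difference is minor: the paper disposes of the smooth phase drift along the critical line exactly, by the telescoping of $\phi_{2 m,c}(t)$ coming from (\ref{eq73}), whereas you appeal to its asymptotic smallness $\sim 2m^2/t$ from (\ref{mz21}), which confines your version of the argument to the asymptotic region where that estimate holds.
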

\begin{proof}
We consider a contour composed of the two lines of phase zero, the segment between them on the critical line
and a segment between them in the region $\sigma>>1$. The total phase change around this contour is strictly zero, since the region
$\sigma>>1$ has the phase of $\Delta_4(2,2 m;s)$ constrained: $-\pi <<arg[\Delta_4(2,2 m;s)]<<\pi$. More particularly, if $P_u=(1/2,t_u)$
lies at the upper end of the segment on the critical line, and $P_l=(1/2,t_l)$ at the lower end, the total phase change between a point
approaching $P_u$ on the contour from the right and a point leaving $P_l$ going right is zero. This phase change is made up of contributions
from the changes of phase at the zero or pole $P_u$, from the zero or pole $P_l$, from  the $N_z$ zeros  and $N_p$ poles
on the critical line between $P_u$ and $P_l$, and from the phase change between the zeros and poles. In this list, the first change is
$\phi_{2m,c}(t_u)$, the phase on the critical line just below $P_u$. (We could also have a phase $\phi_{2m,c}(t_u)-\pi$, but it will be easily seen
that in this alternative case the argument which follows will arrive at exactly the same conclusion.) The second change is $-\phi_{2m,c}(t_l)$,
where $\phi_{2m,c}(t_l)$ is
the phase just above $t_l$. Giving  zero $n$ a multiplicity $z_n$, and  pole $n$ a multiplicity $p_n$, the phase change at the former
is $-\pi z_n$ and the latter $\pi p_n$. The phase change between zeros and poles is $\phi_{2m,c}(t_l)-\phi_{2m,c}(t_u)$. Hence,
the phase constraint is
\begin{equation}
\phi_{2m,c}(t_u)-\phi_{2m,c}(t_l)-\pi[\sum_{n=1}^{N_z} z_n -\sum_{n=1}^{N_p} p_n]+\phi_{2m,c}(t_l)-\phi_{2m,c}(t_u)=0,
\label{eq710}
\end{equation}
leading to
\begin{equation}
\sum_{n=1}^{N_z} z_n =\sum_{n=1}^{N_p} p_n,
\label{eq711}
\end{equation}
as asserted.
\end{proof}


\begin{corollary}
If all zeros and poles on the critical line have multiplicity one, the numbers of zeros and poles on
the critical line between any pair of lines of phase zero of $\Delta_4(2,2 m;s)$ coming from $\sigma=\infty$ are the same.
 The distribution functions for zeros $N_{{\cal C}1,4}(\frac{1}{2},t)$ and $N_{{\cal C}0,1}(\frac{1}{2},t)$ of (\ref{dz6})
 then must agree in all terms which go to infinity with $t$.
\end{corollary}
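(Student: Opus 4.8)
The plan is to obtain the first assertion directly from the preceding theorem and then to bootstrap it into the statement about distribution functions by a monotonicity (sandwiching) argument. For the first part, recall that the theorem established the balance (\ref{eq711}), $\sum_{n=1}^{N_z} z_n = \sum_{n=1}^{N_p} p_n$, for the multiplicities of the zeros and poles of $\Delta_4(2,2m;s)$ lying properly between any two phase-zero lines. If every zero and pole on the critical line is simple, then each $z_n=p_n=1$, so these weighted sums collapse to the plain counts $N_z$ and $N_p$, and the equality $N_z=N_p$ is immediate.

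For the second part I would first translate the count for $\Delta_4$ into a count for the two lattice sums. Writing $\Delta_4(2,2m;s)={\cal C}(1,4m;s)/{\cal C}(0,1;s)$, a pole of $\Delta_4$ on the critical line is a zero of ${\cal C}(0,1;s)$ not cancelled by a zero of ${\cal C}(1,4m;s)$, a zero of $\Delta_4$ is a zero of ${\cal C}(1,4m;s)$ not cancelled by a zero of ${\cal C}(0,1;s)$, and a height at which both sums vanish is neither. The equality $N_z=N_p$ of the first part therefore equates the numbers of ``uncancelled'' zeros of the two sums in each gap; adding back the common zeros, which contribute equally to both distribution functions, upgrades this to equality of the full zero counts in each gap. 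Hence, if the phase-zero lines coming from $\sigma=\infty$ meet the critical line at the increasing heights $0=\tau_0<\tau_1<\tau_2<\cdots$, then $N_{{\cal C}1,4}(\frac{1}{2},\tau_K)=N_{{\cal C}0,1}(\frac{1}{2},\tau_K)$ for every $K$.

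Next I would run the sandwiching step. Both $N_{{\cal C}1,4}$ and $N_{{\cal C}0,1}$ are non-decreasing step functions that coincide at every cut height $\tau_K$; hence for $\tau_K\le t<\tau_{K+1}$ they both lie in the interval $[N_{{\cal C}0,1}(\frac{1}{2},\tau_K),\,N_{{\cal C}0,1}(\frac{1}{2},\tau_{K+1})]$, so their difference is bounded in absolute value by the number of zeros of ${\cal C}(0,1;s)$ in the single gap $(\tau_K,\tau_{K+1})$. To estimate this gap count I would invoke the asymptotic placement of the phase-zero lines established in the previous theorem: at $\sigma=\infty$ they are equally spaced with separation $\pi/\log 2$ (order $4m-2$) or $\pi/\log 5$ (order $4m$), while the non-clustering property forbids an essential singularity at finite $\sigma$. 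Granting that the cut spacings $\tau_{K+1}-\tau_K$ stay bounded, and using the local density $\sim(\log t)/\pi$ implied by (\ref{dz5}), each gap contains $O(\log t)$ zeros, so that $|N_{{\cal C}1,4}(\frac{1}{2},t)-N_{{\cal C}0,1}(\frac{1}{2},t)|=O(\log t)$. Since this is of the same order as the error term already present in (\ref{dz5}), the terms $\frac{t}{\pi}\log t$ and $-\frac{t}{\pi}(1+\log\pi)$, namely the parts growing to infinity with $t$, must agree, which is exactly (\ref{dz6}).

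The main obstacle is the control of the gap sizes at the critical line: the placement of the phase-zero lines is pinned down precisely only at $\sigma=\infty$, and the clean conclusion needs them to continue meeting $\sigma=1/2$ with bounded spacing. I would not attempt to track each individual contour geometrically, where a naive argument would founder; instead I would tie the number of zeros plus poles in a gap to the net increase of $\phi_{2m,c}(t)$ across it, an increase I can bound from the asymptotic expansion (\ref{mz21}). This converts the geometric spacing question into an estimate for a single real-valued phase function and so yields the desired $O(\log t)$ control independently of the detailed trajectories of the constant-phase lines.
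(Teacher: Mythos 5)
Your handling of the first assertion, and your bookkeeping for the second, follow the paper's own route: the paper likewise reads off $N_z=N_p$ per cell from the preceding theorem (your specialization of (\ref{eq711}) to unit multiplicities), identifies poles of $\Delta_4(2,2m;s)$ with zeros of ${\cal C}(0,1;s)$ and zeros of $\Delta_4(2,2m;s)$ with zeros of ${\cal C}(1,4m;s)$, and then passes to agreement of the divergent terms; your cancellation of common zeros is in fact cleaner than the paper's appeal to ``a finite number of exceptional poles and zeros'', and your monotone-sandwich step makes explicit what the paper leaves implicit. One secondary omission: you silently assume there are infinitely many cut heights $\tau_K\to\infty$, whereas Theorem 7.2 leaves open the alternative that from some point on all phase-zero lines curve up towards $t=\infty$ without cutting the critical line; the paper disposes of that case explicitly by applying Theorem 7.5 to the leftmost such line together with the last line that does cut the critical line.

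The genuine gap is in your proposed repair of the spacing problem. You want to bound the number of zeros plus poles in a gap by the net increase of $\phi_{2m,c}(t)$ across it. That cannot work. On the critical line, by (\ref{eq73}), $\arg\Delta_4(2,2m;1/2+\ri t)=\phi_{2m,c}(t)$ modulo $\pi$, and (as in the proof of Corollary 7.7) each zero contributes a phase jump of $+\pi$ while each pole contributes $-\pi$; moreover zeros and poles alternate and are equinumerous within a cell, precisely by the result you have just proved. The jumps therefore cancel in pairs, and the net change of phase across any gap is just the smooth variation of $\phi_{2m,c}$, which by (\ref{mz21}) is $O(1/t)$ in the asymptotic region --- essentially zero no matter how many zeros and poles the gap contains. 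In other words, the net change measures $\#\mathrm{zeros}-\#\mathrm{poles}$, not their sum; the sum corresponds to the total variation of the phase, which is not controlled by the net change. So the conversion of the geometric question into an estimate on the single real function $\phi_{2m,c}$ collapses, and the bounded spacing of the $\tau_K$ (equivalently the $O(\log t)$ per-gap count, which would then follow unconditionally from (\ref{dz5})) remains unproved. You have correctly isolated this as the weak point --- the paper itself never proves bounded cell lengths at the critical line, pinning the spacing down only at $\sigma=\infty$ and ruling out clustering but not lengthening --- yet your substitute argument does not close it.
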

\begin{proof}
The first assertion is a simple consequence of Theorem 7.5. The second follows from Theorem 7.2 and Theorem 7.5: the number of zeros and
poles between successive zero lines coming from $\sigma=\infty$ match for all such pairs of lines, and there are only a finite number of
exceptional poles and zeros which may disturb the equality between numbers of zeros and poles.
If lines of phase zero curve up and asymptote to the point at infinity, we apply Theorem 7.5 to the
leftmost of these and the last line of phase zero cutting the critical line.
\end{proof}

We now show in Fig. \ref{del4fig1} and \ref{del4fig2} some examples of morphologies of lines of constant phase
of $\Delta_4(2,2;s)$ near the critical line. Fig. \ref{del4fig1} gives at top left a diagram in which there are three lines of phase zero coming
from $\sigma=\infty$. The first two give a loop going right which intersects the critical line at a pole, below, and a zero above. The third
again intersects the critical line at a pole, and ends what we will describe as a "cell". By this we mean what is a repeat unit in the loose sense;
above the third line we start another loop going right.
Fig. \ref{del4fig1} shows a view at a structure for higher $t$  in the upper right graph, in which the resolution of contour lines
is insufficient to give a correct impression of the relationship between the lines of constant phase. The magnified views below show at left that the
apparent intersection of lines of phase 0 and $\pi$ is in fact an avoided crossing, where the lines behave in hyperbolic fashion. In the
graph below right, we see a small phase loop going left away from $\sigma=1/2$, with an even smaller line element of phase 0 occurring in $\sigma>1/2$.
Note that, denoting poles and zeros by prefixes $P$ and $Z$, the values of $t$ at which zeros and poles occur in the upper graph are:\\
$P45.6, Z45.9, Z46.9, Z47.71, P47.74, P48.0, Z49.2, P49.72, P49.77$.\\
This data then shows a succession of three zeros of ${\cal C}(0,1;s)$ uninterrupted by zeros of ${\cal C}(1,4;s)$. We have encountered cases where the
opposite situation occurs, but no sequences  of four successive zeros or poles.

\begin{figure}
\begin{center}
\includegraphics[width=6cm]{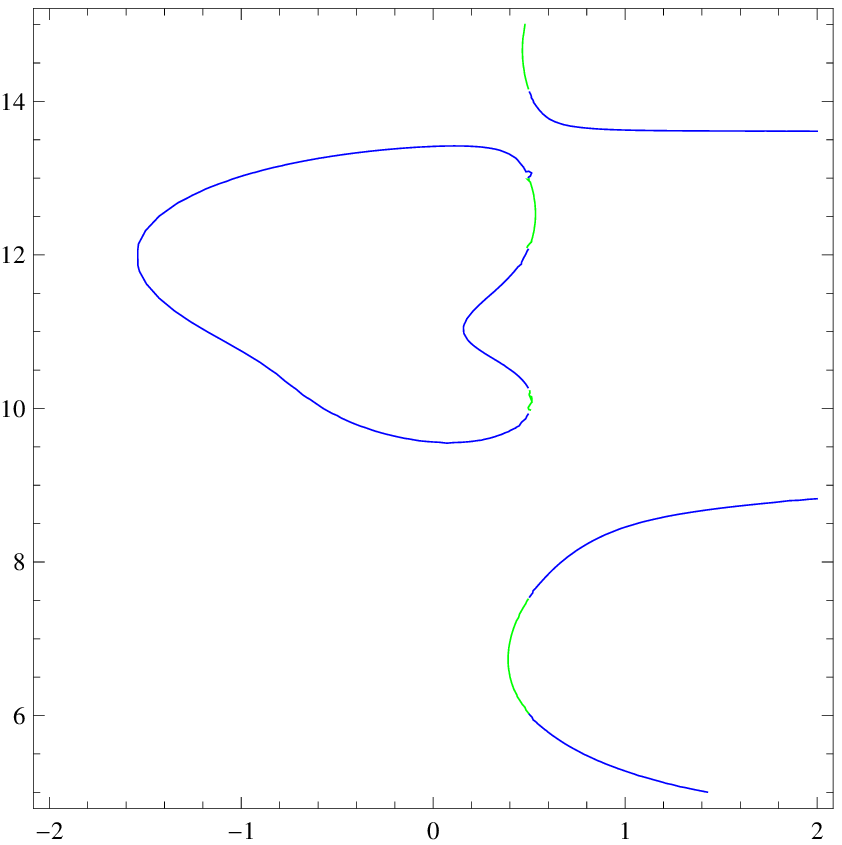}~~\includegraphics[width=6cm]{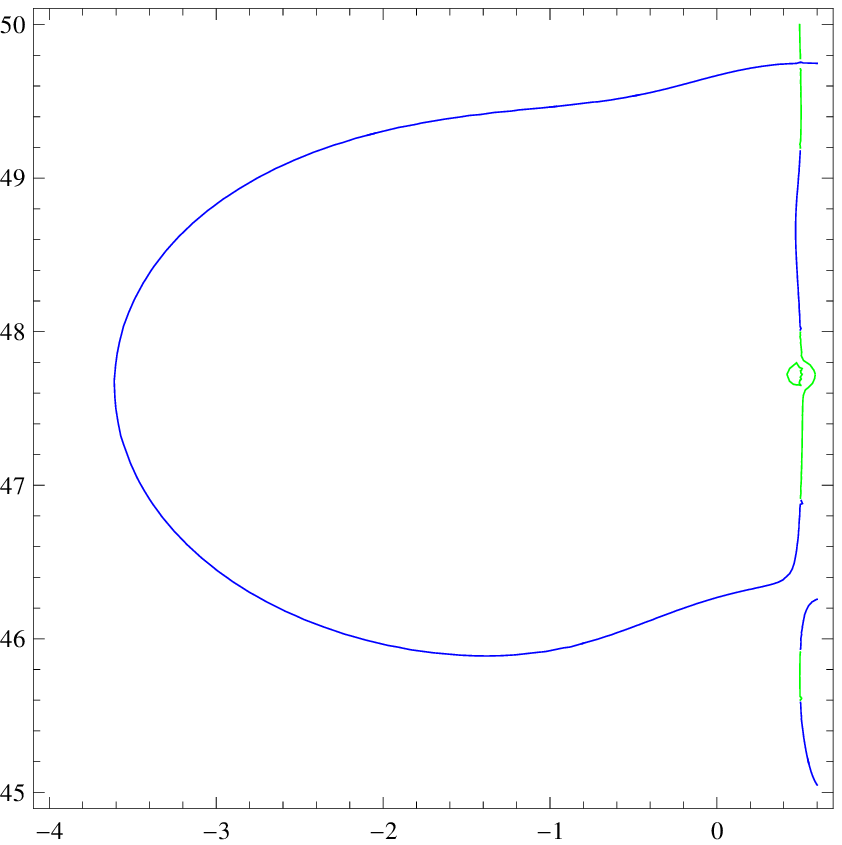}\\
\includegraphics[width=6cm]{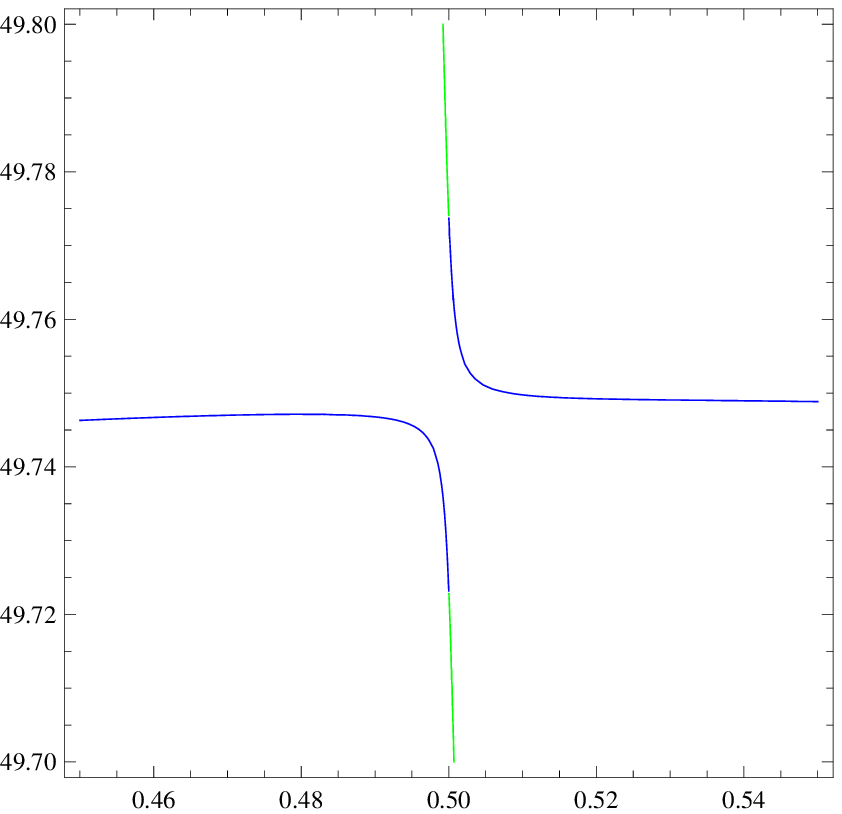}~~\includegraphics[width=6cm]{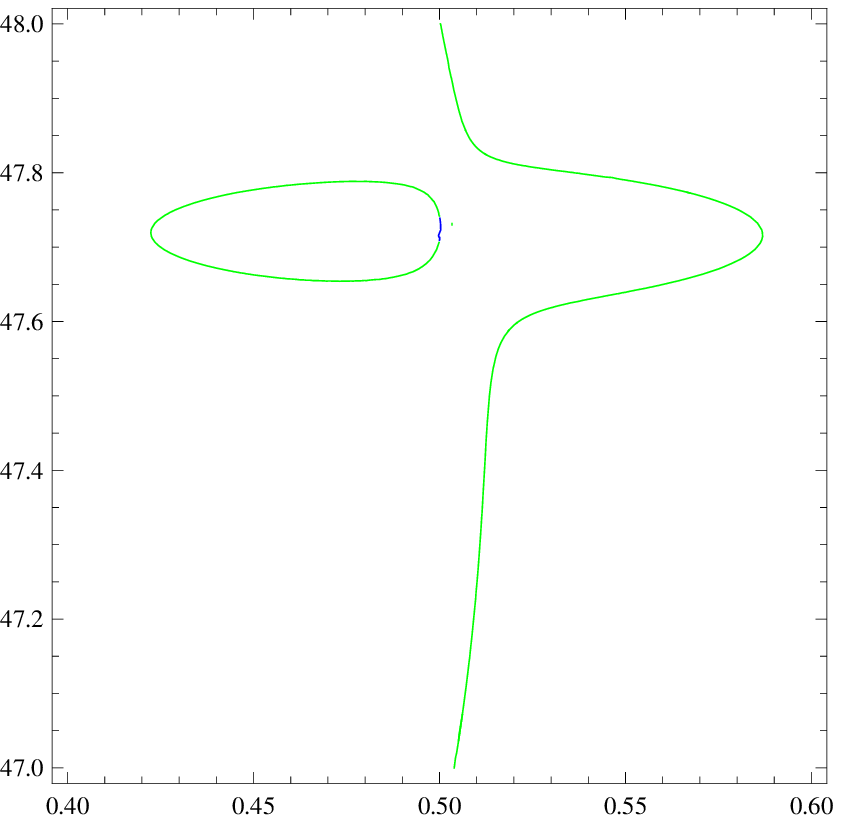}
\end{center}
\caption{ Contours of constant phase for $\Delta_4(2,2;s)$, with phase zero in blue, and phase $\pm \pi$ in green.} \label{del4fig1}
\end{figure}

The leftmost graph in Fig. \ref{del4fig2} shows the hyperbolic distribution of lines of constant phase occurring around a zero
of $\Delta '(2,2;s)$. The discussion following Theorem 5.6 may be extended from $\Delta_3(2,2 m;s)$ to $\Delta_4(2,2 m;s)$, with the result
that between successive zeros of $\Delta_4(2,2 m;s)$ uninterrupted by a pole there will be a hyperbolic point to the right of the critical line.
Between successive poles of $\Delta_4(2,2 m;s)$ uninterrupted by a zero there will be a hyperbolic point to the left of the critical line.
The leftmost graph shows the case of two poles, with a hyperbolic point in $\sigma<0.5$. The rightmost graph in Fig. \ref{del4fig2}
shows the morphology of lines of constant phase for $\Delta_4(2,4;s)$, for which the density of lines of phase zero coming from
$\sigma=\infty$ is, as expected, greater than for  $\Delta_4(2,2 ;s)$.

\begin{figure}
\begin{center}
\includegraphics[width=6cm]{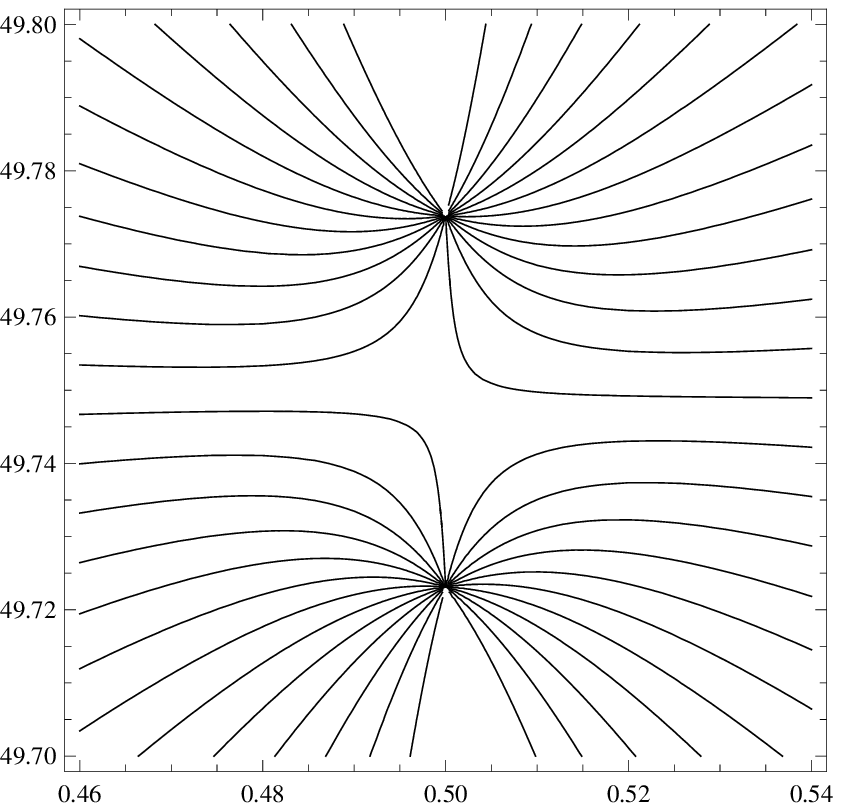}~~\includegraphics[width=6cm]{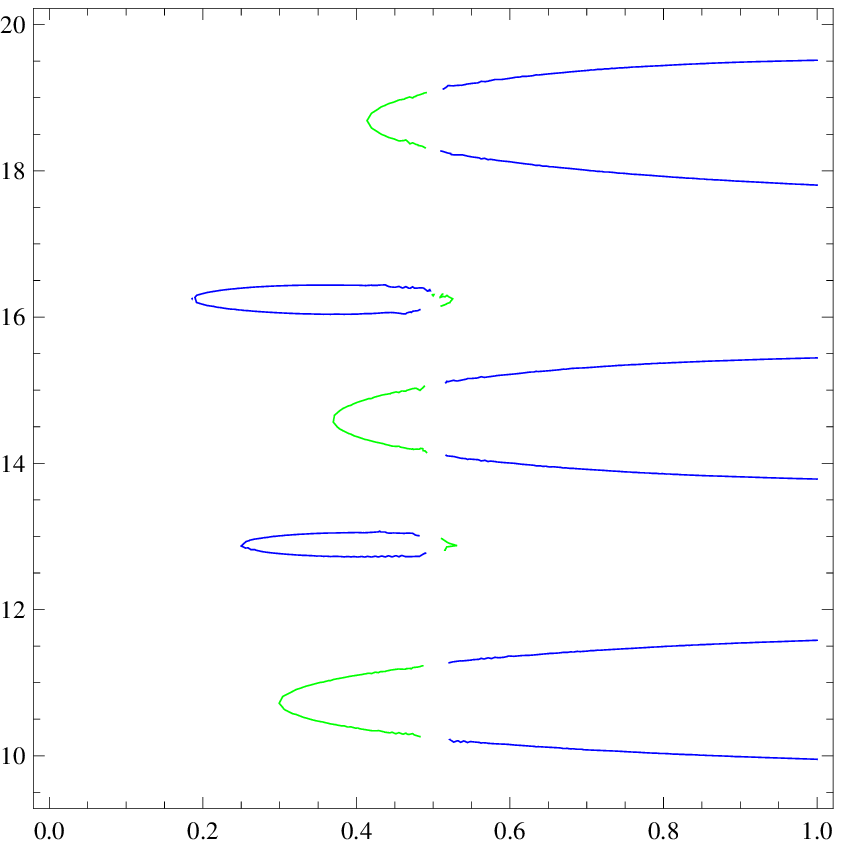}~~
\end{center}
\caption{ Contours of constant phase for $\Delta_4(2,2;s)$,(left) near a hyperbolic point,
(right) with phase zero in blue, and phase $\pm \pi$ in green.} \label{del4fig2}
\end{figure}

Table 2 gives the numbers of zeros and poles in successive cells, for the first twenty-two cells of $\Delta_4(2,2;s)$. These evidently
satisfy the requirements of Theorem 7.4. Note the tendency of the
numbers of zeros and poles to increase with cell number; this is a consequence of the cell length tending to be roughly constant, while the
density of poles and zeros increases logarithmically, in keeping with equation (\ref{dz6}). For the fourth column, the mean length is 9.01
and the standard deviation is 0.90. Note that the corresponding separation at $\sigma=\infty$ is $2\pi/\log (2)$, or around 9.06, so that there
is no significant change in average spacing with variation of $\sigma$. For $\Delta_4(2,4;s)$, Table 3 gives data for the first twenty-four cells,
with the mean cell length being 3.92 and its standard deviation 0.53. The separation at infinity here is $2\pi/\log (5)$, or around 3.90.

\begin{table}
\caption{Numbers of zeros and poles,  interval in $t$, and starting point on the critical line, for  the first twenty-two cells lying between lines of phase
zero coming from $\sigma=\infty$, for $\Delta_4(2,2;s)$.}
\centering
\begin{tabular}{| c| c| c| c| c|}
\hline Cell\# & \#Zeros & \#Poles & Length & Lowest pole \\
[0.5ex] \hline
1 & 3 & 3 & 8.11 & 6.02 \\
2 & 5 & 5 & 9.15 & 14.13 \\
3 & 7 & 7 & 9.66 & 23.28 \\
4 & 6 & 6 & 7.98 & 32.94 \\
5 & 8 & 8 & 8.85 & 40.92 \\
6 & 9 & 9 & 10.65 & 49.77 \\
7 & 8 & 8 & 7.95 & 60.42 \\
8 & 9 & 9 & 8.77 & 68.37 \\
9 & 10 & 10 & 10.29 & 77.14 \\
10 & 10 & 10 & 8.71 & 87.43 \\
11 & 9 & 9 & 8.19 & 96.14 \\
12 & 11 & 11 & 9.99 & 104.33 \\
13 & 10 & 10 & 8.63 & 114.32 \\
14 & 10 & 10 & 8.14 & 122.95 \\
15 & 12 & 12 & 10.16 & 131.09 \\
16 & 11 & 11 & 9.05 & 141.25 \\
17 & 10 & 10 & 8.41 & 150.30 \\
18 & 11 & 11 & 8.47 & 158.71 \\
19 & 13 & 13 & 10.52 & 167.18 \\
20 & 11 & 11 & 7.90 & 177.70 \\
21 & 11 & 11 & 8.63 & 185.60 \\
22 & 13 & 13 & 9.99 & 194.23 \\ [1ex] \hline
\end{tabular}
\label{table:Del414}
\end{table}
\begin{table}
\caption{Numbers of zeros and poles,  interval in $t$, and starting point on the critical line, for  the first twenty-four cells lying between lines of phase
zero coming from $\sigma=\infty$, for $\Delta_4(2,4;s)$.}
\centering
\begin{tabular}{| c| c| c| c| c|}
\hline Cell\# & \#Zeros & \#Poles & Length & Lowest pole \\
[0.5ex] \hline
1 & 1 & 1 & 4.22 & 6.02 \\
2 & 2 & 2 & 3.89 & 10.24 \\
3 & 2 & 2 & 4.16 & 14.13 \\
4 & 2 & 2 & 3.16 & 18.29 \\
5 & 3 & 3 & 4.28 & 21.45 \\
6 & 2 & 2 & 3.93 & 25.73\\
7 & 3 & 3 & 3.28 & 29.66 \\
8 & 3 & 3 & 4.65 & 32.94 \\
9 & 3 & 3 & 3.33 & 37.59 \\
10 & 4 & 4 & 4.68 & 40.92 \\
11 & 4 & 4 & 4.17 & 45.60 \\
12 & 3 & 3 & 3.20 & 49.77 \\
13 & 3 & 3 & 3.96 & 52.97 \\
14 & 4 & 4 & 3.90 & 56.93 \\
15 & 4 & 4 & 4.28 & 60.83 \\
16 & 3 & 3 & 3.26 & 65.11 \\
17 & 4 & 4 & 3.79 & 68.37 \\
18 & 4 & 4 & 4.54 & 72.16 \\
19 & 4 & 4 & 3.51 & 76.70 \\
20 & 5 & 5 & 4.53 & 80.21 \\
21 & 3 & 3 & 2.89 & 84.74 \\
22 & 4 & 4 & 4.61 & 87.63 \\
23 & 5 & 5 & 3.90 & 92.24 \\
24 & 4 & 4 & 4.00 & 96.14 \\ [1ex] \hline
\end{tabular}
\label{table:Del418}
\end{table}

\begin{corollary}
If all zeros and poles on the critical line have multiplicity
one, on any closed contour (including those closed at $\sigma
\rightarrow \infty)$ there must be an equal number of zeros and poles
and they must alternate. i.e. following the contour in a clockwise
direction, any pole will be followed by a zero and zero will be
followed by a pole. Every cell begins with a pole on the critical line.
\end{corollary}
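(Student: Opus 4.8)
The plan is to deduce the statement from the phase portrait of the meromorphic function $\Delta_4(2,2m;s)={\cal C}(1,4m;s)/{\cal C}(0,1;s)$, reading ``closed contour'' as a loop built from the network of curves on which $\Delta_4$ is real (phase $0$ or $\pm\pi$); on a generic Jordan curve the equality of zero and pole counts would of course fail. First I would pass to the analytic function $\log\Delta_4=\log|\Delta_4|+i\arg\Delta_4$. Its zeros (the zeros of ${\cal C}(1,4m;s)$) are the points where $\log|\Delta_4|\to-\infty$ and its poles (the zeros of ${\cal C}(0,1;s)$) those where $\log|\Delta_4|\to+\infty$, all simple and on the critical line under the hypothesis. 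By the Cauchy--Riemann equations the lines of constant phase are precisely the gradient flow lines of $\log|\Delta_4|$, along which $\log|\Delta_4|$ is strictly monotonic except at the isolated saddle points where $\Delta_4'=0$ (the hyperbolic points illustrated in Fig.~\ref{del4fig2}).

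This monotonicity is the engine of the alternation claim. Note first that it is genuinely a statement about the real-$\Delta_4$ network and not about the critical line itself, along which zeros and poles need not alternate (indeed the data before Fig.~\ref{del4fig1} exhibit three consecutive zeros of ${\cal C}(0,1;s)$). Away from saddle points each maximal arc of the real locus carries $\log|\Delta_4|$ monotonically from $-\infty$ to $+\infty$, i.e. from a zero to a pole, and the phase switches between $0$ and $\pi$ at each simple node; hence the singular points threaded by a closed contour of this network must alternate zero, pole, zero, pole. The equality of the two counts I would take from the argument-principle computation already carried out for (\ref{eq711}): the total change of $\arg\Delta_4$ around such a loop vanishes because the phase is constant on the radial arcs and at $\sigma=\infty$, leaving only the critical-line contribution, which the functional equation (\ref{eq72}) and the phase relation (\ref{eq73}) pin to zero. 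The clockwise orientation then fixes the cyclic order as pole$\to$zero$\to$pole.

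For ``every cell begins with a pole'' I would use the asymptotic form (\ref{eq79}), $\Delta_4(2,4m-2;\sigma+it)\simeq 1-e^{-it\log2}/2^{\sigma-1}$ (and its $\log5$ analogue (\ref{eq77}) in the order-$4m$ case). The lines of phase zero emanating from $\sigma=\infty$ asymptote to $t=n\pi/\log2$, while to leading order $\log|\Delta_4|\simeq-\cos(t\log2)/2^{\sigma-1}=(-1)^{n+1}/2^{\sigma-1}$ on the $n$-th such line. Thus as $\sigma$ decreases the odd-$n$ lines ascend and must terminate on the critical line at a pole, whereas the even-$n$ lines descend to a zero. A cell is the pole-to-pole repeat unit (line $n$ odd to line $n+2$, enclosing the zero-line $n+1$), in agreement with the loop--pole--zero--pole morphology of Fig.~\ref{del4fig1} and with the $2\pi/\log2$ spacing and ``Lowest pole'' column of Table~\ref{table:Del414}; so its lowest critical-line vertex is a pole.

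The hard part will be the global control of this phase portrait, since monotonicity alone only forbids a constant-phase line from closing up or returning to $\sigma=\infty$ in the region free of saddle points. I would need to rule out spiralling and spurious phase loops, and to treat the branching at the saddles $\Delta_4'=0$ (the avoided crossings of the phase-$0$ and phase-$\pi$ lines seen in Fig.~\ref{del4fig1}), where two flow lines arrive and two depart. This rests on the saddles being isolated and on the asymptotic estimates (\ref{eq78})--(\ref{eq79}) controlling the phase uniformly once $t\gtrsim 4m^2$, so that only the finitely many exceptional lines already flagged in the preceding corollary—those passing through the $2m-1$ real-axis poles and those with $t<4m^2$—can disturb the clean alternating, pole-bounded cell structure.
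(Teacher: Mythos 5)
Your proposal is correct in substance, but it proves the corollary by a genuinely different mechanism from the paper's. The paper's own proof is a phase-bookkeeping induction along the chain of constant-phase arcs: it uses (\ref{eq73}) to pin the phase on the critical line to $\phi_{2m,c}(t)$ or $\phi_{2m,c}(t)-\pi$, notes that crossing each simple zero or pole jumps this value by $\pi$, and reads off the type of each intersection point from the sense in which the phase increases around it (increasing clockwise means pole, anticlockwise means zero); the alternation \emph{and} the fact that the opening line of a cell lands on a pole both fall out of this single induction, seeded by the first-quadrant value of the phase below the opening intersection. You instead obtain alternation from the Cauchy--Riemann/gradient-flow fact that $\log|\Delta_4|$ is strictly monotone along each saddle-free constant-phase arc, so that every finite arc of the real locus joins a zero (where $\log|\Delta_4|\rightarrow-\infty$) to a pole (where it $\rightarrow+\infty$); and you obtain ``every cell begins with a pole'' from the sign $(-1)^{n+1}2^{1-\sigma}$ of $\log|\Delta_4|$ on the $n$-th line from $\sigma=\infty$ via (\ref{eq79}), so odd lines must flow up to poles and even lines down to zeros. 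What your route buys: the alternation argument is purely local and never needs the critical-line phase formula, and the parity computation gives an independent, quantitative explanation of \emph{why} the lines from infinity alternate in type --- something the paper leaves implicit (it is visible only in the $2\pi/\log 2$ cell length of Table \ref{table:Del414}). What the paper's route buys: tracking actual phase values tells you which quadrant the phase occupies on each critical-line segment (information reused in the Remarks following the corollary), and it identifies each point's type as the contour is traversed without appeal to asymptotics at infinity. Both arguments lean identically on the earlier results (lines of phase zero from $\sigma=\infty$ must cut the critical line; branch cuts leaving it must loop back) for global control, and neither yours nor the paper's rigorously disposes of saddle points of $\Delta_4$ lying on the contour; you at least flag this gap explicitly, so your sketch is at worst at the same level of rigour as the published proof.
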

\begin{proof}
We follow the line of zero phase coming in from $\sigma=\infty$ which forms the start of the cell. It must intersect the
critical line, in accordance with Corollary 7.2 and meet there a line of
constant $\pi$ phase going towards increasing $t$, which is also a branch cut discontinuity. The
phase $\phi_{2 m,c}(t)$ is decreasing as $t$ increases along the critical line
and above the intersection point (call it $t_1$) there is a change of $\pi$ to the
phase on the critical line, forcing the phase to drop to
$\phi_{2m,c}-\pi$, which  lies between $(-\pi,0)$. As phase is increasing in a
clockwise direction around the point of intersection, that point must be a
pole (i.e., every cell begins with a pole).

We continue to follow the branch cut as it loops back to the critical line
in accordance with Corollary 7.4. The phase below the next intersection
point $t_2$ is still negative as $\phi_{2 m,c}(t)$ decreases along the critical line.
The change of $\pi$ to the phase at $t_2$ must be a positive change, such that
the phase above the point of
intersection must  be $\phi_{2 m,c}(t_2)>0$ . The phase is increasing in an
anti clockwise direction, which implies that the intersection point $t_2$
is a zero. This argument is continued for successive poles and zeros until we reach the end of the cell in question.
\end{proof}

{\bf Remarks:}\\
The following consequences are clear from the arguments in Corollary
7.7:

\begin{itemize}
\item Any closed contour intersecting the critical line at a finite set of points will have  maximum and minimum intersections
of opposite type i.e a zero and a pole.
\item If there are two consecutive zeros or poles on
the critical line they cannot lie on the same closed contour of zero
phase.
\item Any closed loop of zero phase going from $\sigma=1/2$ into $\sigma<\frac{1}{2}$ must have a zero at its lowest intersection
with $\sigma=1/2$.
\end{itemize}

\begin{acknowledgements}
The research of R.McP.  on this project was supported by the
Australian Research Council Discovery Grants Scheme.
\end{acknowledgements}

\newpage
\appendix{Supplementary Notes on Systems of Angular Sums}

We give here results linking trigonometric lattice sums of order up
to ten, which show that they may be generated from three independent
sums, ${\cal C}(0,1;s)$, ${\cal C}(1,4;s)$, and ${\cal C}(1,8;s)$.
We also give expressions in terms of sums of Macdonald functions of
the Kober-type from which these independent sums may be calculated.

A basic result we use relies on the symmetry of the square lattice:
\begin{equation}
{\cal C}(1, 4 m-2;s)=\sum_{p_1,p_2}{'}\frac{\cos (4 m-2)
\theta_{p_1,p_2}}{(p_1^2+p_2^2)^s}=0={\cal C}(2,2 m-1;s)-{\cal
S}(2,2 m-1;s), \label{sn1}
\end{equation}
from which we obtain
\begin{equation}
{\cal C}(2,2 m-1;s)={\cal S}(2,2 m-1;s)=\frac{1}{2}{\cal C}(0,1;s),
\label{sn2}
\end{equation}
for $m$ any positive integer. We can also expand $\cos (4 m-2)
\theta_{p_1,p_2}=T_{4 m-2}(\cos \theta_{p_1,p_2})$ in terms of powers of $
\cos \theta_{p_1,p_2}$, using the expressions for the Chebyshev
polynomials in Table 22.3 of Abramowitz \& Stegun (1972). This
enables us to express ${\cal C}(4 m-2,1;s)$ in terms of ${\cal C}(4
n,1;s)$ with $4 n<4 m-2$, and $n$ being a positive integer. In this
way, we can inductively arrive at the results below.

\subsection{Order 0}
The only sum of this type is ${\cal C}(0,1;s)$, given by equation
(\ref{mz2}). It is given by a modified form of (\ref{gr1}), since
there are contributions from both axes $p_1=0$, $p_2=0$ rather than
just $p_2=0$:

\begin{eqnarray}
{\cal C}(0,1;s) =
2\zeta(2 s)+\frac{2\sqrt{\pi}\Gamma(s-1/2)}{\Gamma(s)}\zeta(2 s-1)
\hspace{25ex} \nonumber \\
+\frac{8\pi^s }{\Gamma(s)}\sum_{p_1=1}^\infty \sum_{p_2=1}^\infty
\left( \frac{p_2}{p_1}\right)^{s-1/2} K_{s-1/2}(2\pi p_1 p_2).
\label{sn3}
\end{eqnarray}

\subsection{Order 2}
The single sum of order 2 is
\begin{eqnarray}
\sum_{(p_1,p_2)} {'} \frac{p_1^2}{(p_1^2+p_2^2)^{s+1}} =
{\cal C}(2,1;s)=\frac{1}{2}{\cal C}(0,1;s)\nonumber \\
=\frac{2 \sqrt{\pi} \Gamma(s+1/2)\zeta(2
s-1)}{\Gamma(s+1)}+\frac{8\pi^s}{\Gamma(s+1)} \sum_{p_1=1}^\infty
\sum_{p_2=1}^\infty & & \hspace{-4mm}\left(
\frac{p_2}{p_1}\right)^{s-1/2}\hspace{-4mm}
p_1 p_2 \pi K_{s+1/2}(2\pi p_1 p_2). \nonumber \\
 & &
\label{sn4}
\end{eqnarray}

\subsection{Order 4}
We generate this system from ${\cal C}(4,1;s)$, obtained from
(\ref{gr1}). In terms of this,
\begin{equation}
{\cal C}(1,4;s)=8 {\cal C}(4,1;s)-3{\cal C}(0,1;s), \label{sn6}
\end{equation}
and
\begin{equation}
{\cal C}(2,2;s)=4 {\cal C}(4,1;s)-{\cal C}(0,1;s),~~{\cal
S}(2,2;s)=-4 {\cal C}(4,1;s)+2{\cal C}(0,1;s). \label{sn7}
\end{equation}

\subsection{Order 6}
The result of expanding (\ref{sn1}) for $m=2$ is
\begin{equation}
{\cal C}(6,1;s)=\frac{3}{2}{\cal C}(4,1;s)-\frac{1}{4}{\cal
C}(0,1;s), \label{sn8}
\end{equation}
while from (\ref{sn2})
\begin{equation}
{\cal C}(2,3;s)={\cal S}(2,3;s)=\frac{1}{2}{\cal C}(0,1;s).
\label{sn9}
\end{equation}
Also,
\begin{equation}
\sum_{(p_1,p_2)} {'}\frac{p_1^4
p_2^2}{(p_1^2+p_2^2)^{s+3}}=\sum_{(p_1,p_2)}{'}\frac{p_1^2
p_2^4}{(p_1^2+p_2^2)^{s+3}}= \frac{1}{2}\sum_{(p_1,p_2)}
{'}\frac{p_1^2 p_2^2}{(p_1^2+p_2^2)^{s+2}}, \label{sn10}
\end{equation}
so
\begin{equation}
\sum_{(p_1,p_2)}{'}\frac{\cos^4(\theta_{p_1,p_2})
\sin^2(\theta_{p_1,p_2})}{(p_1^2+p_2^2)^{s}}=\frac{1}{8} {\cal
S}(2,2;s). \label{sn11}
\end{equation}
Note that all sums mentioned so far can be generated from just two,
say ${\cal C}(0,1;s)$ and ${\cal C}(1,4;s)$. ${\cal C}(6,1;s)$ is
given by (\ref{gr1}).

\subsection{Order 8}
The new sum we use here is ${\cal C}(8,1;s)$, obtained from
(\ref{gr1}). In terms of this,
\begin{equation}
{\cal C}(1,8;s)=128 {\cal C}(8,1;s)-224 {\cal C}(4,1;s)+49 {\cal
C}(0,1;s), \label{sn14}
\end{equation}
\begin{equation}
{\cal C}(2,4;s)=64 {\cal C}(8,1;s)-112 {\cal C}(4,1;s)+25 {\cal
C}(0,1;s), \label{sn15}
\end{equation}
and
\begin{equation}
{\cal S}(2,4;s)=-64 {\cal C}(8,1;s)+112{\cal C}(4,1;s)-24 {\cal
C}(0,1;s). \label{sn16}
\end{equation}

Two other sums in this system are
\begin{equation}
\sum_{(p_1,p_2)} {'}\frac{p_1^6 p_2^2}{(p_1^2+p_2^2)^{s+4}}= -{\cal
C}(8,1;s)+\frac{3}{2}{\cal C}(4,1;s)-\frac{1}{4} {\cal C}(0,1;s),
\label{sn17}
\end{equation}
and
\begin{equation}
\sum_{(p_1,p_2)} {'}\frac{p_1^4 p_2^4}{(p_1^2+p_2^2)^{s+4}}= {\cal
C}(8,1;s)-2{\cal C}(4,1;s)+\frac{1}{2} {\cal C}(0,1;s). \label{sn18}
\end{equation}

Using  the result (\ref{sn14}), we have calculated curves showing
the modulus of ${\cal C}(1,8;s)$ as a function of $s=1/2+\ri t$ on
the critical line. These are given in Figs. \ref{supp1}-\ref{supp2},
and the distributions of zeros they show are given in Table~\ref{table1}.

\begin{figure}
\begin{center}
\includegraphics[width=6cm]{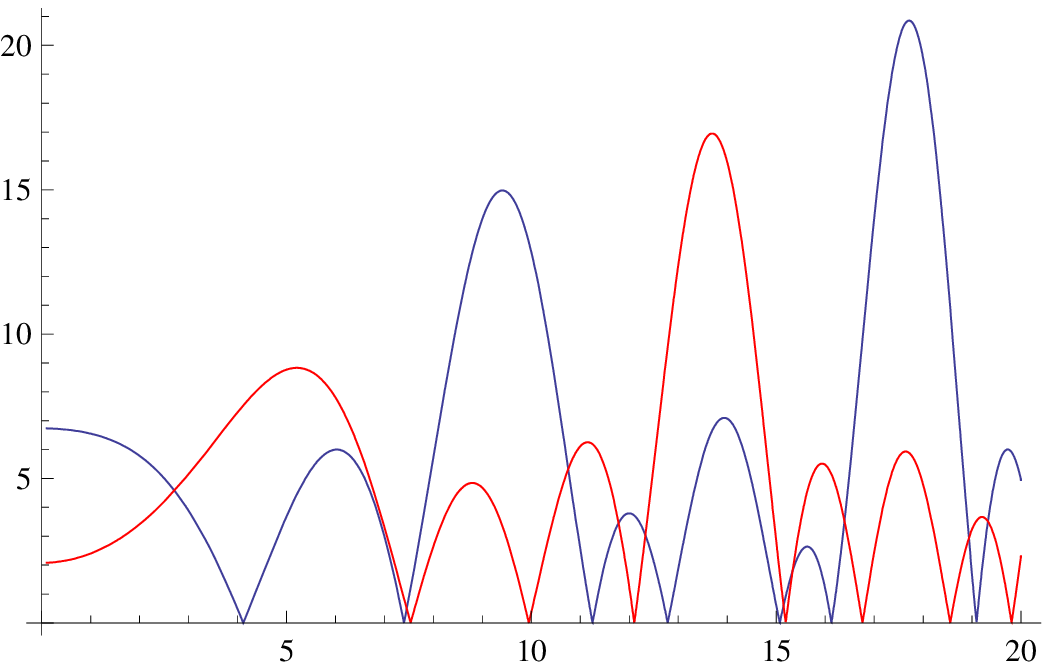}~~
\includegraphics[width=6cm]{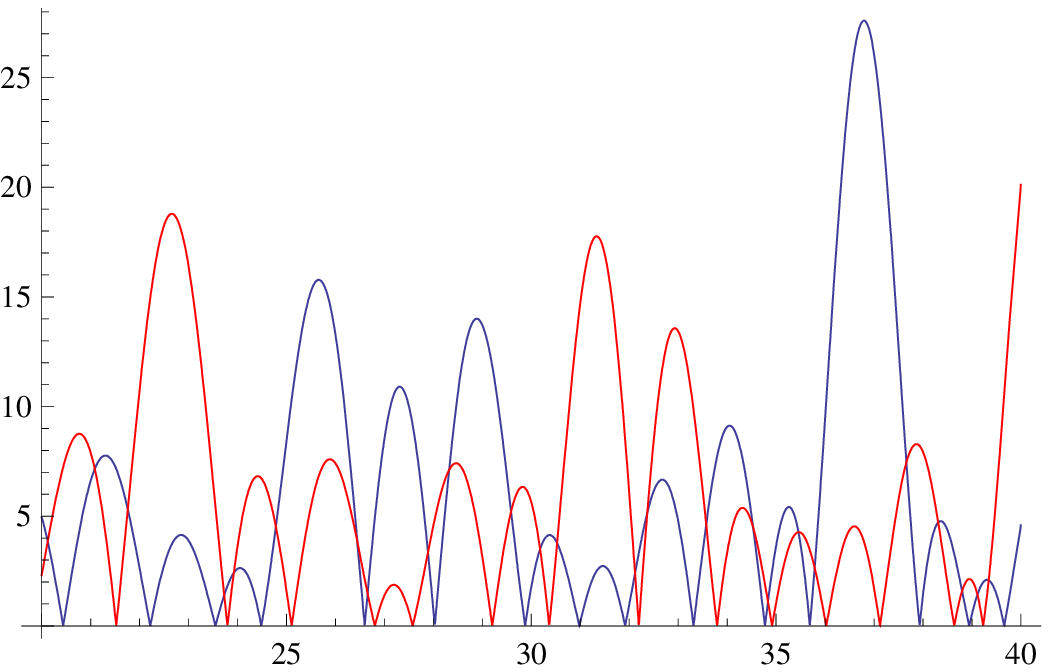}
\caption{The modulus of  ${\cal C}(1,4;s)$ (red) and ${\cal
C}(1,8;s)$ (blue)  as a function of $s=1/2+\ri t$, for $t \in [0, 20]$ (left) and $t \in [20, 40]$ (right).}
 \label{supp1}
\end{center}
\end{figure}

\begin{figure}
\begin{center}
\includegraphics[width=6cm]{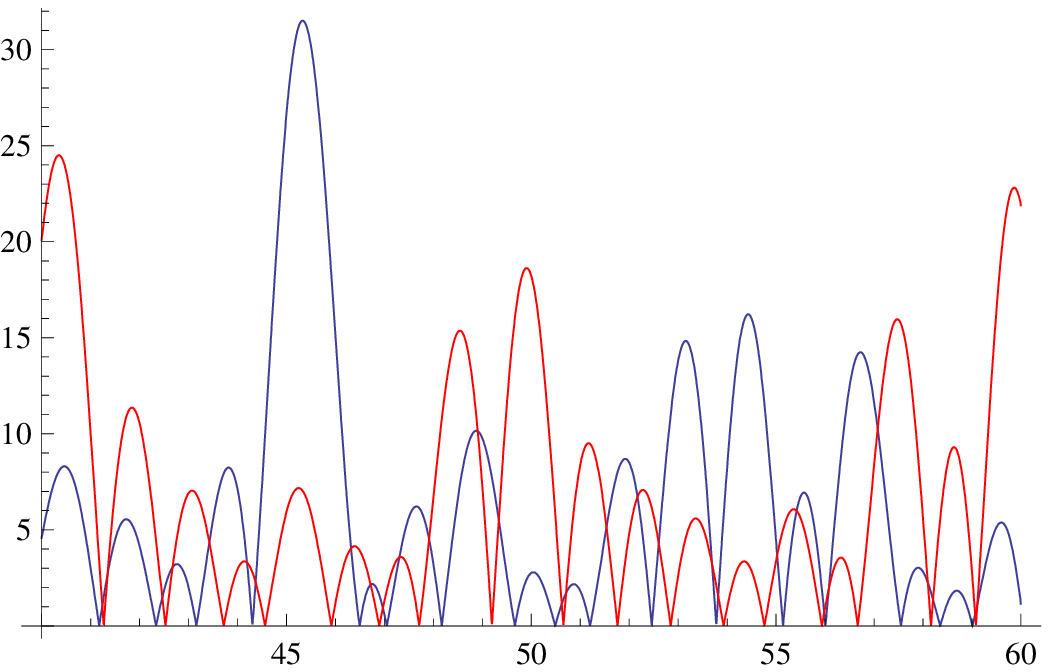}~~
\includegraphics[width=6cm]{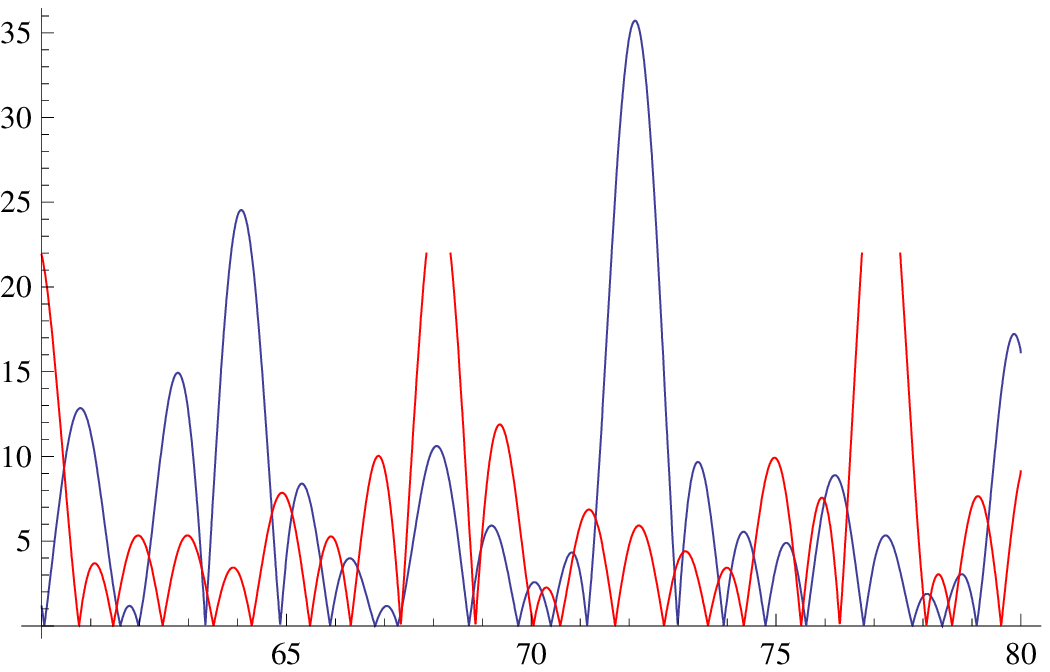}
\caption{The modulus of  ${\cal C}(1,4;s)$ (red) and ${\cal
C}(1,8;s)$ (blue)  as a function of $s=1/2+\ri t$, for $t \in [40, 60]$ (left) and $t \in [60, 80]$ (right).}
 \label{supp2}
\end{center}
\end{figure}

\subsection{Recurrence Relations}
We can generalize the above procedure by establishing recurrence
relations for the trigonometric sums. We consider
\begin{equation}
\sum_{(p_1,p_2)} {'} \frac{p_1^{2 n}}{(p_1^2+p_2^2)^{s+n}}= {\cal
C}(2 n,1;s)=\sum_{(p_1,p_2) \neq (0,0)}
\frac{1}{(p_1^2+p_2^2)^s}\left[1-\frac{p_2^2}{p_1^2+p_2^2}
\right]^n. \label{sn19}
\end{equation}
Expanding using the Binomial Theorem, we obtain
\begin{equation}
{\cal C}(2 n,1;s)=\sum_{l=0}^n {}^nC_l\, (-1)^l \, {\cal C}(2 l,1;s).
\label{sn20}
\end{equation}
If $n$ is even, (\ref{sn20}) gives an identity:
\begin{equation}
\sum_{l=0}^{2 n-1} {}^{2 n}C_l\, (-1)^l \, {\cal C}(2 l,1;s)=0,
\label{sn21}
\end{equation}
while for $n$ odd we obtain an expression for ${\cal C}(4 n-2,1;s)$
in terms of lower order sums:
\begin{equation}
{\cal C}(4 n-2,1;s)=\frac{1}{2}\sum_{l=0}^{2 n-2} {}^{2 n-1}C_l\,
(-1)^l \, {\cal C}(2 l,1;s). \label{sn22}
\end{equation}
It may be checked that the two relations (\ref{sn21}) and
(\ref{sn22}) give equivalent results in the cases given above.

\begin{figure}
\begin{center}
\includegraphics[width=12cm]{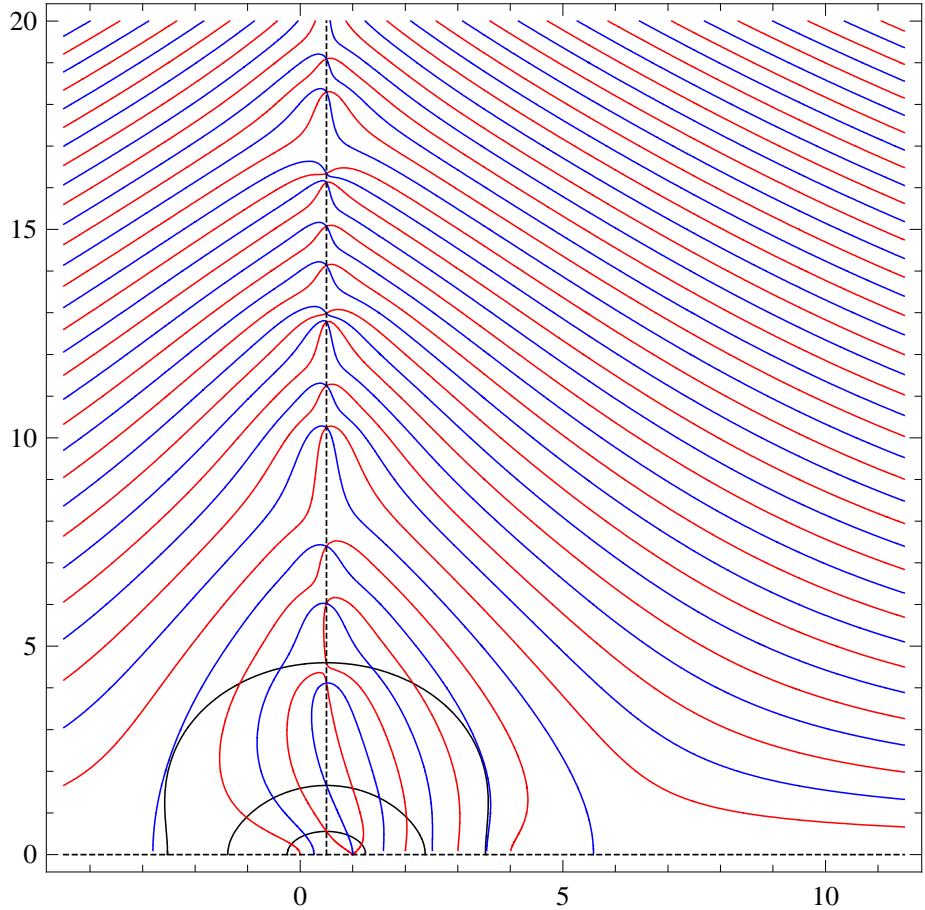}
\end{center}
\caption{Null contours of the real part (red) and imaginary part
(blue) of $\Delta_3(2,4;\sigma+\ri t)$, with $\sigma \in [-4.5, 11.5]$, and
$t \in [0.1, 20]$.} \label{figord8}
\end{figure}

In Fig. \ref{figord8} we give null contours of the real part and
imaginary part of $\Delta_3(2,4;s)$, for comparison with those of
$\Delta_3(2,2;s)$ in Fig. \ref{fig2}. It will be noted that there
are now three lines giving the contours on which $\Imag
\Delta_3(2,4;s)=0$, and these are not circles (as was the single
exemplar in Fig. \ref{fig2}). There is a corresponding increase in
the number of null contours starting and finishing on the real axis,
and in the value of $\sigma$ at which the last one reaches the real
axis. There are two examples in Fig.~\ref{figord8} of zeros on the critical
line of the real part, but not the imaginary part ($\tan{(\phi_{2m, c}(1/2+ \ri t)}= \infty$),
and one of a zero of the imaginary part but not the real part
($\cot{(\phi_{2m, c}(1/2+ \ri t)}= \infty$).
The null contours starting at $\sigma=\infty$ and ending at
$\sigma=-\infty$ settle down to an asymptotic behaviour for somewhat
larger values of $t$ than in Fig. \ref{fig2}, but are otherwise
similar to the lower order null contours.

\subsection{Order 10}
The new sum we use from (\ref{gr1}) is ${\cal C}(10,1;s)$, which,
from (\ref{sn22}), is
\begin{equation}
{\cal C}(10,1;s)=\frac{1}{2}{\cal C}(0,1;s)-\frac{5}{2}{\cal
C}(4,1;s)+\frac{5}{2}{\cal C}(8,1;s). \label{sn24}
\end{equation}
Other sums in the system are:
\begin{equation}
\sum_{(p_1,p_2)} {'} \frac{p_1^{8}p_2^2}{(p_1^2+p_2^2)^{s+5}}={\cal
C}(8,1;s)-{\cal C}(10,1;s), \label{sn25}
\end{equation}
and
\begin{equation}
\sum_{(p_1,p_2)} {'}
\frac{p_1^{6}p_2^4}{(p_1^2+p_2^2)^{s+5}}=\frac{1}{2}\sum_{(p_1,p_2)
} {'} \frac{p_1^4 p_2^4}{(p_1^2+p_2^2)^{s+4}}. \label{sn25x}
\end{equation}

\appendix{Derivation of the Functional Equation for ${\cal C}(1,4 m;s)$}

We start with the equation (30) from McPhedran {\em et al} (2004).
This equation uses the Poisson summation formula to derive a
connection between a sum over a direct lattice of points
$\mathbf{R}_p \equiv \mathbf{R}_{p_1,p_2} = d (p_1, p­_2)$ with polar
coordinates $(R_p,\phi_p)$, and a corresponding sum over the
reciprocal lattice, where the lattice points are labelled
$\mathbf{K}_h \equiv \mathbf{K}_{h_1,h_2}$. Considering the case of a square lattice with period
$d$, the reciprocal lattice points are $\mathbf{K}_h =(2
\pi/d)(h_1,h_2)$. The sum in the direct lattice incorporates a phase
term of the Bloch type, with wave vector $\mathbf{k}_0$, and the result
of the Poisson formula is
\begin{eqnarray}
2^{2 s-1} \Gamma(\frac{l}{2}+s)\sum_{(p_1,p_2)} {'}\frac{\re^{\ri
\mathbf{k}_0 \cdot \mathbf{R}_p} \re^{\ri l\phi_p}}{R_p^{2 s}}
\hspace{35ex}
\nonumber \\
= \frac{2 \pi \ri^l}{d^2} \Gamma(\frac{l}{2}+1-s) \left(
\frac{\re^{\ri l\theta_0}}{k_0^{2-2s}}\sum_{(h_1,h_2)}
{'}\frac{\re^{\ri l\theta_h}}{Q_h^{2-2s}}\right) . \label{poisson1}
\end{eqnarray}
Here the sum over the reciprocal lattice in fact runs over a set of
displaced vectors:
\begin{equation}
{\bf Q}_h={\bf k}_0+{\bf K}_h=(\frac{2 \pi h_1}{d}+k_{0x},\frac{2
\pi h_2}{d}+k_{0y})=(Q_h,\theta_h), \label{poisson2}
\end{equation}
where in (\ref{poisson2}) the second and third expressions are in
rectangular and polar coordinates. Note that in (\ref{poisson1}) and
(\ref{poisson2}), $p_1, p_2$ and $h_1, h_2$ run over all integer
values, and $\theta_0$ gives the direction of ${\bf k}_0$.

We express the relation (\ref{poisson1}) in non-dimensionalized
form, taking out a factor $d^{2 s}$ on the left-hand side, and a
factor $(2 \pi/d)^{2-2s}$ on the right-hand side. We put ${\bf
k}_0=(2 \pi/d) \mbox{\boldmath $\kappa$}_0$, and obtain
\begin{eqnarray}
\Gamma(\frac{l}{2}+s) \sum_{(p_1,p_2)} {'}\frac{\re^{2 \pi \ri
(\kappa_{0x}p_1+\kappa_{0y}p_2)} \re^{\ri l\phi_p}}{(p_1^2+p_2^2)^{
s}}=\ri^l \pi^{2 s-1} \Gamma(\frac{l}{2}+1-s) \hspace{8ex} \nonumber \\
\times \left( \frac{\re^{\ri
l\theta_0}}{\kappa_0^{2-2s}}+\sum_{(h_1,h_2)} {'}\frac{\re^{\ri
l\theta_h}}{((\kappa_{0x}+h_1)^2+(\kappa_{0y}+h_2)^2)^{1-s}}\right)
.\label{poisson3}
\end{eqnarray}
If we now let $\kappa_0\rightarrow 0$, set $l=4 m$ and take $\Real
(s)>1$, we obtain the desired result:
\begin{equation}
\frac{\Gamma (2 m+s)}{\pi^s} {\cal C}(1, 4 m;s)=\frac{\Gamma (2
m+1-s)}{\pi^{1-s}} {\cal C}(1, 4 m;1-s) .\label{poisson4}
\end{equation}

\label{lastpage}
\end{document}